\newtheorem{lemma}{Lemma}
\newtheorem{theorem}{Theorem}
\DeclareMathOperator{\diag}{diag}
\DeclareMathOperator{\IB}{IB}
\DeclareMathOperator{\Tr}{Tr}
\DeclareMathOperator{\Cov}{Cov}
\DeclareMathOperator{\AIC}{AIC}
\DeclareMathOperator*{\argmin}{argmin}
\DeclareMathOperator{\MSE}{MSE}
\DeclareMathOperator{\MISE}{MISE}
\DeclareMathOperator{\AMISE}{AMISE}
\definecolor{c1}{rgb}{0,  0, 1}
\definecolor{c2}{rgb}{1,  0, 0}
\begin{document}

\title{Robust and Adaptive Functional Logistic Regression}
\author{Ioannis Kalogridis}
\date{%
    Department of Mathematics, KU Leuven, Belgium\\[2ex]%
    \today
}

\maketitle
\begin{abstract}

We introduce and study a family of robust estimators for the functional logistic regression model whose robustness automatically adapts to the data thereby leading to estimators with high efficiency in clean data and a high degree of resistance towards atypical observations. The estimators are based on the concept of density power divergence between densities and may be formed with any combination of lower rank approximations and penalties, as the need arises. For these estimators we prove uniform convergence and high rates of convergence with respect to the commonly used prediction error under fairly general assumptions. The highly competitive practical performance of our proposal is illustrated on a simulation study and a real data example which includes atypical observations.

\end{abstract}

{Keywords:}  Functional logistic regression, robustness, regularization, asymptotics

{MSC 2020}:  62R10, 62G35, 62G20

\section{Introduction}
In recent years, advances in technology and improved storage capabilities have enabled applied scientists to observe and record increasingly complex high-dimensional objects, which may be assumed to be realizations of a random function with values on some nicely behaved function space. Such data are commonly referred to as functional data and their study has enjoyed notable popularity following breakthrough works, such as \citet{Ramsay:1991}, \citet{Ramsay:2005} and \citet{Ferraty:2006}. In many cases, it is of interest to relate a random function $X$ with sample paths in $\mathcal{L}^2(\mathcal{I})$ for some compact $\mathcal{I} \subset \mathbb{R}$ to a binary response $Y$ through the model
\begin{equation}
\label{eq:1}
\mathbb{E}\{Y|X\} = H\left(\alpha_0 + \int_{\mathcal{I}} X(t) \beta_0(t) dt\right),
\end{equation}
with $H: \mathbbm{R} \to (0,1)$ a known bijective "link" function and $(\alpha_0, \beta_0) \in \mathbb{R} \times \mathcal{L}^2(\mathcal{I})$ unknown quantities that need to be estimated from the data. This scalar-on-function logistic regression model originally proposed by \citet{James:2002} remains popular to this day owing to a broad range of possible applications, such as cancer gene detection and foetal risk assessment, see \citet{Rat:2002} and \citet{Wei:2014} respectively.

Despite continuous interest by practitioners, the theoretical study of the scalar-on-function logistic regression has significantly lagged  behind and the scalar-on-function linear regression model has instead garnered most of the attention. For such a model, the interested practitioner would readily find a wealth of theoretically grounded estimation techniques at her disposal. We mention, for example, regression on the leading functional principal components \citep{cardot1999functional, Hall:2007}, penalized spline regression \citep{Cardot:2003}, full-rank smoothing spline regression, as proposed and studied by \citet{Crambes:2009} and \citet{Yuan:2010}, as well as hybrid approaches involving both functional principal components and penalized splines \citep{reiss2007functional}. Unfortunately, none of the theoretical results developped in the aforementioned works carry over to the functional logistic regression model and, to the best of our knowledge, the only  theoretical results for that model were developed by \citet{Cardot:2005} and \citet{M:2005}. The former authors established the consistency of penalized likelihood based spline estimators whereas the latter authors proved the consistency and asymptotic normality of estimates derived from a truncated functional principal component expansion of the coefficient function $\beta_0$. 

An important practical disadvantage of both the proposal of \citet{Cardot:2005} and that of \citet{M:2005} is their lack of resistance towards atypical observations. The reason for this lack of resistance is their reliance on the log-likelihood contributions of each observation $(X_i, Y_i)$ so that any badly fitting pair of observations, e.g., an $X_i$ with a large leverage and a misclassified $Y_i$, would significantly distort the log-likelihood function as well as estimates derived from it. As an attempt to address this lack of resistance, \citet{Den:2016} proposed performing logistic regression on the leading robust principal components obtained with the method of \citet{Hubert:2005}. This method is only partially effective in practice, as it only addresses outlying predictor functions and not misclassified responses, which can still lead to severe distortions. More recently, \citet{Mutis:2022} expanded on the proposal of \citet{Den:2016} by incorporating weights based on the Pearson residuals to the log-likelihood thereby obtaining estimates with greater resistance. However, the theoretical properties of none of these methods are understood so that it is impossible to decide on best practices.

In order to fill the consequential vacuum in the literature with respect to estimators lacking either theoretical justification or resistance to outlying observations, this paper introduces and studies a flexible family of penalized lower-rank smoothers that can fulfil both roles. Resistance is achieved through the use of the density power divergence objective function introduced by \citet{Basu:1998}, which we flexibly combine with a lower rank representation of the coefficient function and a penalty term in order to both obtain easily computable estimates that avoid overfitting. We select the tuning parameter of our objective function in a data-driven way so as to achieve high efficiency in clean data and significant resistance to outlying observations. Under general assumptions, which even permit random tuning parameters, we establish the uniform convergence of our estimator and obtain  a useful error decomposition of the prediction error into the variance and the twin biases resulting from modelling and penalization. These theoretical results are the first among robust estimators for the scalar-on-functional logistic model, but our modern methodology may also be used to strengthen older results, such as those of \citet{Cardot:2005} dealing with penalized maximum likelihood estimation.

The rest of the paper is structured as follows. Section~\ref{sec:2} motivates and describes our proposal in detail. Section~\ref{sec:3} is devoted to the study of the asymptotic properties of the proposed class of estimators; we discuss both consistency and rates of convergence with respect to the uniform and prediction errors respectively. Practical selection of the size of the approximating subspace, the penalty parameter and the tuning constant of the objective function are discussed in Section~\ref{sec:4}. Sections~\ref{sec:5} and \ref{sec:6} illustrate the competitive finite-sample performance of our proposal in a Monte Carlo study and in a real data example involving gait analysis, respectively. Finally, section ~\ref{sec:7} contains a short concluding discussion while all proofs are collected in the appendix.

\section{The proposed family of estimators}
\label{sec:2}

In order to describe the proposed family of estimators in detail, it is first necessary to review the concept of density power divergence, as introduced by \citet{Basu:1998}. Consider two densities with respect to the counting measure, $h$ and $f$. For $\kappa_0 \geq 0$ the density power divergence $d_{\kappa_0}(h,g)$ is defined as
\begin{align}
\label{eq:2}
d_{\kappa_0}(h, f) = \begin{dcases} \sum_{y \in \mathbbm{Z}} \left\{ f^{1+\kappa_0}(y) - \left(1+\frac{1}{\kappa_0} \right)h(y) f^{\kappa_0}(y) + \frac{1}{\kappa_0}h^{1+\kappa_0}(y) \right\} & \kappa_0>0 \\ \sum_{y \in \mathbbm{Z}} h(y) \log\{h(y)/f(y) \} & \kappa_0 = 0.  \end{dcases}
\end{align}
It can be shown that $d_{\kappa_0}(h, f) \geq 0$ and that $d_{\kappa_0}(h, f)=0$ if, and only if, $h = f$ on all of $\mathbbm{Z}$, see Lemma~\ref{lem:1} in the Appendix. If we assume that $h$ is the true density and $\{f_{\theta},\theta \in \Theta\}$ is a family of model densities, density power divergence estimators asymptotically minimize $d_{\kappa_0}(h,f_{\theta})$ over $\theta \in \Theta$. It is worth noting that $d_{0}(h,  f_{\theta})$ is  merely the Kullback-Leibler divergence which is asymptotically minimized by maximum likelihood estimators. It is well-known, however, that although maximum likelihood estimators are efficient, they are not robust against outliers and model misspecification. Density power divergence estimators based on $\kappa_0>0$ can be shown to be more resistant, in the sense that their influence function \citep{Hampel:2011} is bounded and redescending for large outliers, and are therefore better suited for the analysis of complex datasets containing outliers. In fact, as demonstrated by the analysis of \citet{Kal:2022}, the robustness of density power divergence estimators increases with higher values of $\kappa_0$, but that comes at a cost for their efficiency so that in practice a deal between efficiency and robustness needs to be struck.

To see how density divergence may be used in the context of functional logistic regression, let us assume that  we are in possession of a sample $(X_1, Y_1), \ldots, (X_n, Y_n)$ consisting of independent and identically distributed tuples in $\mathcal{L}^2(\mathcal{I}) \times \{0,1\}$ following model \eqref{eq:1}. For simplicity and without loss of generality we shall identify $\mathcal{I}$ with the $[0,1]$-interval. In view of \eqref{eq:1}, we may assume that conditionally on $X_i$ each binary response $Y_i$ possesses a Bernoulli density $f_{p_i(\alpha_0,\beta_0)}(y) = p_{i}(\alpha_0, \beta_0)^{y} (1-p_{i}(\alpha_0, \beta_0))^{1-y} $ where $p_i(\alpha_0, \beta_0)$, the probability of "success", is given by
\begin{align*}
p_i(\alpha_0, \beta_0) = H\left(\alpha_0 + \int_{[0,1]} X_i(t) \beta_0(t) dt\right), \quad (i = 1, \ldots, n).
\end{align*}
We show in the appendix that $M((\alpha,\beta), \kappa_0) := \mathbb{E}\{d_{\kappa_0}(f_{ p_1(\alpha_0, \beta_0)}, f_{p_1(\alpha, \beta)}) \}$ is bounded below by $0$ and bounded above by $1+\kappa_0^{-1}$. The upper bound is attained if, and only if, $|p_1(\alpha_0,\beta_0)) - p_1(\alpha,\beta)| = 1$ almost everywhere. Moreover, $M((\alpha,\beta), \kappa_0)$ is a continuous function of $(\alpha, \beta)$ whenever $H$, the link function, is also continuous. These properties make $M((\alpha, \beta), \kappa_0)$ an attractive objective function for robust estimation, but unfortunately minimizing $M((\alpha, \beta), \kappa_0)$ directly is not a feasible task owing to our lack of knowledge of either the distribution of $X$ or the population parameters $(\alpha_0, \beta_0)$. Our approach thus consists of minimizing a suitable approximation of this criterion.

A finite sample approximation to $M((\alpha, \beta), \kappa_0)$ may be obtained by replacing $\mathbb{E}\{\cdot\}$ with its empirical counterpart weighing each $X_i$ with $n^{-1}$, as is typically done in the construction of M-estimators \citep[see, e.g.,][Chapter 5]{VDV:1998}. Next, from \eqref{eq:2} it may be seen that we may drop terms not depending on $(\alpha, \beta)$, as these terms do not influence the value of the minimizer. Lastly, for every $\kappa_0>0$, a conditionally unbiased estimator for $\sum_{y} f^{\kappa_0}_{p_i(\alpha, \beta)}(y) f_{p_i(\alpha_0, \beta_0)}(y)$ is given by $f_{p_i(\alpha, \beta)}^{\kappa_0}(Y_i)$. Putting everything together, we arrive at the finite-sample approximation of $M((\alpha, \beta), \kappa_0)$ given by
\begin{align}
\label{eq:3}
\frac{1}{n}\sum_{i=1}^n l_{\widehat{\kappa}_n}(Y_i, p_i(\alpha, \beta)) =  \frac{1}{n}\sum_{i=1}^n \left[ \sum_{y\in\{0,1\}} f^{1+\widehat{\kappa}_n}_{p_i(\alpha, \beta)}(y)  - \left(1+\frac{1}{\widehat{\kappa}_n} \right) f^{\widehat{\kappa}_n}_{p_i(\alpha, \beta)}(Y_i) \right],
\end{align}
where we now also allow for changing, even random, tuning parameters $\widehat{\kappa}_n$ that may depend on the sample itself. These parameters determine the trade-off between robustness and efficiency, as for $\widehat{\kappa}_n \to 0$ we obtain what is essentially the log-likelihood of the Bernoulli distribution whereas for larger $\widehat{\kappa}_n$ we obtain an objective function that leads to more robust but less efficient estimators. The above formulation allows us in a sense to get the best of both worlds, as $\widehat{\kappa}_n$ may be chosen small enough in clean samples and larger in datasets containing outlying observations.  A possible selection scheme for $\widehat{\kappa}_n$ that we have found to be effective in practice is outlined in Section~\ref{sec:4} below.

We follow the recipe of \citet[Chapter 15]{Ramsay:2005} in order to obtain estimators from the objective function \eqref{eq:3}. That is, rather than minimizing \eqref{eq:3} directly over the whole of $ \mathbbm{R} \times \mathcal{L}^2([0,1])$, which entails the risk of overfitting, we instead restrict attention to the considerably smaller space $\mathbbm{R} \times \Theta_{K_n}$ where $\Theta_{K_n}$ is a $K_n$-dimensional linear subspace that is spanned by known real-valued functions $\theta_1, \ldots, \theta_{K_n}$. These functions can, for example, be B-splines or Fourier bases, both of which are broadly used in functional data analysis. Here, the dimension of the approximating subspace, $K_n$, depends on $n$, as for the approximation bias to be vanish in the limit we need to have $K_n \to \infty$ as $n \to \infty$ but not too fast, see Theorem~\ref{thm:1} below. Other than a finite-dimensional subspace, we also make use of a penalty functional $\mathcal{J}: \Theta_{K_n} \to \mathbbm{R}_{+}$ with the aim of further controlling the roughness of the slope estimates. With these two refinements, our proposal amounts to estimating the unknown quantities with
\begin{align}
\label{eq:4}
(\widehat{\alpha}_n, \widehat{\beta}_n) = \argmin_{(\alpha, \beta) \in \mathbbm{R} \times \Theta_{K_n} } \left[  \frac{1}{n} \sum_{i=1}^n  l_{\widehat{\kappa}_n}(Y_i, p_i(\alpha, \beta))  + \widehat{\lambda}_n \mathcal{J}(\beta) \right],
\end{align}
with $\widehat{\lambda}_n$ a possibly random penalty parameter determining the premium that is placed on the roughness of $\widehat{\beta}_n$, as measured by $\mathcal{J}(\widehat{\beta}_n)$.
For $\widehat{\lambda}_n = 0$ the estimator is an unpenalized linear combination of the $\theta_j$ whereas for $\widehat{\lambda}_n \to \infty$ the penalty dominates the objective function forcing the slope estimator to lie in its null-space. The null-space of $\mathcal{J}(\cdot)$ is often low dimensional, e.g., it may consist of polynomials up to a certain order, so that functions that lie on it tend to be of simple form.

It is worth noting that the estimator in \eqref{eq:4} is regularized both with respect to its dimension and its complexity, as, in our experience, both types of regularization are needed for a reliable all-around practical performance. In particular, the present dual regularization scheme ensures that the estimator can be made flexible enough by taking a large value of $K_n$, while also remaining interpretable on account of the penalization by means of $\mathcal{J}(\cdot)$. Moreover, the present set-up is very flexible and allows for a wide variety of approximating subspaces $\Theta_{K_n}$ and penalties $\mathcal{J}(\cdot)$, as the applied scientist sees fit. These may include the frequently used B-splines with derivative or difference penalties, but also wavelets with total variation or $l_1$ penalties for cases in which $\beta_0$ is suspected to be less smooth. The interested reader is referred to \citet{Kal:2023} for a discussion of a number of practically useful alternatives.

At a first glance it may seem that the proposed estimator is computationally intensive, as it depends on three parameters, namely the tuning parameter $\widehat{\kappa}_n$, the dimension of the approximating subspace, $K_n$, and the penalty parameter $\widehat{\lambda}_n$. However, practical experience with lower-rank penalized estimators has shown that $K_n$ is not as crucial as the other parameters of the problem and therefore it can be selected in a semi-automatic manner, that is, almost independently of the sample \citep[see, e.g.,][]{Ruppert:2003, Wood:2017}. An efficient method of selecting $\widehat{\kappa}_n$ and $\widehat{\lambda}_n$ is discussed in Section~\ref{sec:4} below.

\section{Theoretical properties}
\label{sec:3}

\subsection{Consistency}

We now study the asymptotic properties of penalized density power divergence estimators. Throughout this section we assume that $\alpha_0 = 0$ and the object of interest is the coefficient function $\beta_0$, as is typically the case. The assumptions that we require for our theoretical development are as follows.

\begin{itemize}
\item[(A1)] There exists a $\kappa_0 \in (0, 1]$ such that $\widehat{\kappa}_n \xrightarrow{\mathbb{P}} \kappa	_0$, as $n \to \infty$.
\item[(A2)] The link function $H:  \mathbbm{R} \to (0,1)$ is bijective, strictly increasing and differentiable with its derivative $H^{\prime}$ satisfying $\sup_{x \in \mathbbm{R}} H^{\prime}(x) \leq c_0$ for some $c_0<\infty$.
\item[(A3)] The coefficient function $\beta_0$ belongs to a Banach space of functions, $\mathcal{B}([0,1])$, with norm $\| \cdot \|_{\mathcal{B}}$ that is embeddable in $\mathcal{C}([0,1])$ with its norm $\| \cdot \|_{\infty}$, i.e., $\mathcal{B}([0,1]) \subset \mathcal{C}([0,1])$ and there exists a finite $c_1>0$ such that
\begin{align*}
\|f\|_{\infty} \leq c_1 \|f \|_{\mathcal{B}}, \quad \forall f \in \mathcal{B}([0,1]).
\end{align*} 
\item[(A4)] $\Theta_{K_n} \subset \mathcal{B}([0,1])$ and $K_n \asymp n^{\gamma}$ for some $\gamma \in (0,1)$. Letting $\widetilde{\beta}_{K_n} = \argmin_{\beta \in \Theta_{K_n}} \|\beta-\beta_0\|$ we require $\|\widetilde{\beta}_{K_n}-\beta_0\| \to 0$ and $\widehat{\lambda}_n \mathcal{J}(\widetilde{\beta}_{K_n}) \xrightarrow{\mathbb{P}} 0$, as $n \to \infty$.
\item[(A5)] There exists $0<C<\infty$ such that $\mathbb{P}(\|X\|\leq C)= 1$ and there exists  $0 \leq c_2<1$ such that for every $\beta \in \mathcal{B}([0,1]$ with $\beta \neq 0$ it holds that $\mathbb{P}( \langle X, \beta \rangle = 0) \leq c_2$.
\end{itemize}

Assumption (A1) requires that the random tuning parameter $\widehat{\kappa}_n$ converges in probability to some positive constant $\kappa_0$. Assumptions of this type are commonly used in the theoretical treatment of estimators involving nuisance (estimated) parameters, see, e.g., \citep[Sections 5.4 and 5.8.1]{VDV:1998}. It is also possible to select the tuning parameter deterministically, i.e., $\widehat{\kappa}_n = \kappa_0$ for some $\kappa_0>0$ for all $n \in \mathbbm{N}$ in which case (A1) is clearly satisfied and all subsequent arguments remain valid.  Assumption (A2) concerning the link function, $H$, is similarly very general and permits all commonly used link functions, namely the logit, probit and Gumbel (complementary log-log) links. 

Assumption (A3) is a mild assumption on the coefficient function, $\beta_0$. It is required that that $\beta_0$ lies in a complete normed space of functions on $[0,1]$, $\mathcal{B}([0,1])$, which is a subspace of the space of continuous functions, $\mathcal{C}([0,1])$. This assumption is considerably more general than the corresponding assumption of \citet{Kal:2023}, as we do not require the embedding of $\mathcal{B}([0,1])$ in $\mathcal{C}([0,1])$ to be compact. As a result, (A3) is satisfied by a wider variety of function spaces. For example, while in our framework we may take $\mathcal{B}([0,1]) = \mathcal{C}([0,1])$, this is not permitted by the assumptions of \citet{Kal:2023}, as the unit ball $\{f \in \mathcal{C}([0,1]): \|f\|_{\infty} \leq 1 \}$ is not compact in $\mathcal{C}([0,1])$.

Assumption (A4) requires that the dimension of the approximating subspace, $K_n$, does not grow too fast relative to the sample size $n$, so that $K_n/n \to 0$. This is a common assumption for estimators based on approximating subspaces (sieves), see, e.g., \citet[Chapter 15]{Eg:2009} and \citet{Kal:2023}. Moreover, it is required that $\beta_0$ can be well approximated in the $\mathcal{L}^2([0,1])$-norm by an element of $\Theta_{K_n}$. This sequence of deterministic approximations of $\beta_0$ in $\Theta_{K_n}$, $\widetilde{\beta}_{K_n}$, should have finite roughness, as measured by $\mathcal{J}(\cdot)$ so that $\widehat{\lambda}_n \mathcal{J}(\widetilde{\beta}_{K_n}) \xrightarrow{\mathbb{P}} 0$. It is thus required that $\mathcal{J}(\cdot)$ is selected appropriately relative to $\Theta_{K_n}$. Both requirements are met by many popular approximating subspaces and penalties, e.g., by B-splines with derivative-based penalties and by Fourier bases with either derivative or harmonic acceleration penalties \citep{Ramsay:2005}. It is worth noting that contrary to other theoretical treatments of functional linear regression models, such as \citep{Cardot:2003,Yuan:2010, Shin:2016}, $\widehat{\lambda}_n$ is herein more realistically treated as a sequence of random variables rather than a sequence of constants. This is an important difference, as most often $\widehat{\lambda}_n$ is selected in a data-driven manner and is thus random rather than fixed.

Finally, assumption (A5) requires that the covariate, $X$, is almost surely bounded when viewed as a random element of $\mathcal{L}^2([0,1])$. A similar assumption exists in many treatments of functional linear regression models, e.g., in \citep{Cardot:2003, Kal:2023}. It is additionally required that $X$ is not concentrated on any finite-dimensional subspace of $\mathcal{B}([0,1])$. Since $\mathcal{B}([0,1]) \subset \mathcal{L}^2([0,1])$ this requirement is met whenever the null space of the covariance operator of $X$ consists only of the zero element $\{0\}$ which holds if, and only if, all of its countably infinite eigenvalues are strictly positive. This assumption is likewise more general than assumption (A4) of \citet{Kal:2023}, as it involves only the space $\mathcal{B}([0,1])$ instead of the larger $\mathcal{L}^2([0,1])$. This is possible because, by (A4), our estimator lies in $\mathcal{B}([0,1])$, hence we can restrict attention to this smaller space of functions.

As shown in the appendix (see Lemma ~\ref{lem:1} and Lemma~\ref{lem:2} there), our estimator is Fisher consistent, that is, it estimates the correct quantity at the population level,  under assumptions (A1), (A2) and (A5) alone. The practical significance of assumptions (A3) and (A4) is that they allow us to additionally obtain  a strong form of consistency of our finite sample estimator $\widehat{\beta}_n$ towards the population function $\beta_0$. This result is presented in Theorem~\ref{thm:1} below.

\begin{theorem}
\label{thm:1}
Suppose that assumptions (A1)--(A5) are satisfied. Then, $\|\widehat{\beta}_n - \beta_0\|_{\mathcal{B}} \xrightarrow{\mathbb{P}} 0$, as $n \to \infty$. 
\end{theorem}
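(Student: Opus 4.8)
The plan is to combine the classical ``basic inequality'' for the penalised $M$-estimator with a uniform law of large numbers for the density power divergence loss, and then to convert the resulting smallness of the population criterion into $\mathcal{B}$-norm consistency by means of the identifiability condition (A5). I would begin from the defining property of $(\widehat\alpha_n,\widehat\beta_n)$: using $(0,\widetilde\beta_{K_n})$ as a competitor in \eqref{eq:4} gives
\[
\frac{1}{n}\sum_{i=1}^n l_{\widehat\kappa_n}(Y_i,p_i(\widehat\alpha_n,\widehat\beta_n)) + \widehat\lambda_n\mathcal{J}(\widehat\beta_n)
\;\le\; \frac{1}{n}\sum_{i=1}^n l_{\widehat\kappa_n}(Y_i,p_i(0,\widetilde\beta_{K_n})) + \widehat\lambda_n\mathcal{J}(\widetilde\beta_{K_n}).
\]
By (A1) the tuning constant $\widehat\kappa_n$ lies, with probability tending to one, in a compact subinterval of $(0,1]$, on which $\kappa\mapsto l_\kappa(y,p)$ is Lipschitz in $\kappa$ uniformly over $y\in\{0,1\}$ and $p\in[0,1]$ (the only terms needing care, $p^\kappa$ and $(1-p)^\kappa$, being controlled through the boundedness of $|\log p|\,p^{c}$ and $|\log(1-p)|\,(1-p)^{c}$ on $[0,1]$ for $c>0$); hence $l_{\widehat\kappa_n}$ may be replaced by $l_{\kappa_0}$ throughout at the cost of a term that is $o_{\mathbb{P}}(1)$ uniformly in $(\alpha,\beta)$.

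The structural fact I would exploit next is that, for fixed $\kappa_0>0$, the function $\eta\mapsto l_{\kappa_0}(y,H(\eta))$ is bounded and of bounded total variation on all of $\mathbb{R}$, being a bounded, absolutely continuous function of $p=H(\eta)\in(0,1)$ with finite total variation, composed with the monotone bijection $H$. Consequently the loss class $\{x\mapsto l_{\kappa_0}(y,H(\alpha+\langle x,\beta\rangle)):\alpha\in\mathbb{R},\,\beta\in\Theta_{K_n}\}$ is uniformly bounded and, up to an additive constant and a fixed scaling, is contained in the convex hull of the Vapnik--Chervonenkis class of half-spaces $\{x:\alpha+\langle x,\beta\rangle>s\}$, whose index is of order $K_n$; its uniform-entropy integral is therefore of order $\sqrt{K_n}$, so with $K_n/n\to0$ from (A4) a standard maximal inequality gives
\[
\sup_{\alpha\in\mathbb{R},\,\beta\in\Theta_{K_n}}\Bigl|\tfrac1n\sum_{i=1}^n l_{\kappa_0}(Y_i,p_i(\alpha,\beta)) - \mathbb{E}\{l_{\kappa_0}(Y,p_1(\alpha,\beta))\}\Bigr| = O_{\mathbb{P}}\!\bigl(\sqrt{K_n/n}\bigr) = o_{\mathbb{P}}(1),
\]
with no prior bound on $\widehat\beta_n$ being needed --- precisely because the divergence loss is bounded. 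Conditioning on $X$ and comparing with \eqref{eq:2} shows $\mathbb{E}\{l_{\kappa_0}(Y,p_1(\alpha,\beta))\}=M((\alpha,\beta),\kappa_0)+c$ for a constant $c$ not depending on $(\alpha,\beta)$, so the preceding two displays combine to
\[
M((\widehat\alpha_n,\widehat\beta_n),\kappa_0) + \widehat\lambda_n\mathcal{J}(\widehat\beta_n)\;\le\; M((0,\widetilde\beta_{K_n}),\kappa_0) + \widehat\lambda_n\mathcal{J}(\widetilde\beta_{K_n}) + o_{\mathbb{P}}(1).
\]
Since $\widetilde\beta_{K_n}\to\beta_0$ in $\mathcal{L}^2$, the continuity of $M(\cdot,\kappa_0)$ and Fisher consistency ($M((0,\beta_0),\kappa_0)=0$, Lemma~\ref{lem:2}) render the first right-hand term $o(1)$, while the second is $o_{\mathbb{P}}(1)$ by (A4); dropping the nonnegative penalty term on the left yields $M((\widehat\alpha_n,\widehat\beta_n),\kappa_0)\xrightarrow{\mathbb{P}}0$ (and, along the way, $\widehat\lambda_n\mathcal{J}(\widehat\beta_n)\xrightarrow{\mathbb{P}}0$).

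The remaining task --- and where I expect the real work to be --- is to deduce $\|\widehat\beta_n-\beta_0\|_{\mathcal{B}}\to0$ from $M((\widehat\alpha_n,\widehat\beta_n),\kappa_0)\xrightarrow{\mathbb{P}}0$. Since $\|X\|\le C$ almost surely (A5) and $\alpha_0=0$, the true probability $p_1(\alpha_0,\beta_0)=H(\langle X,\beta_0\rangle)$ takes values in a fixed compact subinterval $[p_{\min},p_{\max}]\subset(0,1)$; on that range the Bernoulli divergence has a quadratic minorant $d_{\kappa_0}(f_p,f_q)\ge c_{\ast}(p-q)^2$ valid for all $q\in[0,1]$, so the above forces $\mathbb{E}_X\{(H(\widehat\alpha_n+\langle X,\widehat\beta_n\rangle)-H(\langle X,\beta_0\rangle))^2\}\xrightarrow{\mathbb{P}}0$. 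Because $H$ is a homeomorphism onto $(0,1)$ and the target remains in $[p_{\min},p_{\max}]$, this in turn forces $\widehat\alpha_n+\langle X,\widehat\beta_n\rangle\to\langle X,\beta_0\rangle$ in $\mathbb{P}_X$-probability, with the left side tight. The final, most delicate step is to upgrade this ``prediction-level'' convergence to norm convergence of $\widehat\beta_n$: here I would invoke the non-degeneracy part of (A5) --- no nonzero $\beta$ has $\langle X,\beta\rangle\equiv0$, equivalently the covariance operator of $X$ is injective --- together with the sieve structure of (A4), namely that $\widehat\beta_n$ lies in the finite-dimensional $\Theta_{K_n}$ (whose unit sphere is compact for each fixed $n$) and that $\widehat\lambda_n\mathcal{J}(\widehat\beta_n)=o_{\mathbb{P}}(1)$, so as to rule out any escape of mass into unbounded or increasingly oscillatory directions as $K_n\to\infty$. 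Ensuring that the identifiability survives the passage $K_n\to\infty$ --- rather than merely applying it at each fixed finite dimension --- is the main obstacle I anticipate; the rest is routine empirical-process and convexity bookkeeping.
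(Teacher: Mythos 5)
Your first two stages do track the paper's argument: the basic inequality with competitor $\widetilde{\beta}_{K_n}$, the replacement of $\widehat{\kappa}_n$ by $\kappa_0$ via a Lipschitz-in-$\kappa$ bound, and the uniform law of large numbers over $\Theta_{K_n}$ obtained from a VC/entropy bound of order $K_n$ are exactly the content of the paper's Lemma~\ref{lem:3} and Lemma~\ref{lem:4}, and they correctly deliver $M(\widehat{\beta}_n,\kappa_0)\xrightarrow{\mathbb{P}}M(\beta_0,\kappa_0)=0$. The genuine gap is the step you yourself flag as the ``main obstacle'' and then do not carry out: converting this criterion convergence into $\|\widehat{\beta}_n-\beta_0\|_{\mathcal{B}}\xrightarrow{\mathbb{P}}0$. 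Worse, the route you sketch for it would not succeed. The quadratic minorant gives you $\mathbb{E}\{|\langle X,\widehat{\beta}_n-\beta_0\rangle|^2\}\to 0$ (after inverting $H$), i.e.\ convergence in the semi-norm $\langle\Gamma\beta,\beta\rangle^{1/2}$ induced by the covariance operator $\Gamma$. Since $\Gamma$ is compact, its eigenvalues accumulate at zero, so no inequality of the form $\|\beta\|\le c\,\langle\Gamma\beta,\beta\rangle^{1/2}$ can hold; injectivity of $\Gamma$ (the non-degeneracy in (A5)) only gives strict positivity for each fixed $\beta\neq 0$, and the restricted quantity $\inf\{\langle\Gamma\beta,\beta\rangle/\|\beta\|^2:\beta\in\Theta_{K_n}\}$ typically degenerates as $K_n\to\infty$, with no quantitative control available from (A1)--(A5). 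Finite-dimensionality of $\Theta_{K_n}$ for each fixed $n$, and the fact that $\widehat{\lambda}_n\mathcal{J}(\widehat{\beta}_n)=o_{\mathbb{P}}(1)$ (the penalty has a nontrivial null space and $\widehat{\lambda}_n$ may vanish arbitrarily fast), therefore cannot rule out escape of mass along directions that $\Gamma$ barely sees — and the target norm $\|\cdot\|_{\mathcal{B}}$ is even stronger than $\|\cdot\|$ or $\|\cdot\|_\infty$, possibly controlling derivatives. This is precisely the ill-posedness that makes prediction-error consistency strictly weaker than norm consistency in functional regression, and it is why the prediction semi-norm appears in the paper only in Theorem~\ref{thm:2}, not in the proof of Theorem~\ref{thm:1}.

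The paper closes this step by a different device, which is the idea missing from your proposal: it never passes through the prediction error at all. Having shown $M(\widehat{\beta}_n,\kappa_0)\xrightarrow{\mathbb{P}}M(\beta_0,\kappa_0)$ (Lemma~\ref{lem:4}), it uses the continuity of $\beta\mapsto M(\beta,\kappa_0)$ on $\mathcal{B}([0,1])$ together with the uniqueness of the minimizer (Lemma~\ref{lem:2}) and invokes Lemma 14.3 and Exercise 14.6.2 of Kosorok (2008) to produce an increasing cadlag function $f$ with $f(0)=0$, continuous at $0$, such that $\|\beta-\beta_0\|_{\mathcal{B}}\le f\left(\left|M(\beta,\kappa_0)-M(\beta_0,\kappa_0)\right|\right)$ for all $\beta\in\mathcal{B}([0,1])$; applying this with $\beta=\widehat{\beta}_n$ immediately yields the theorem. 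In other words, the conversion from criterion convergence to norm convergence is done by a global well-separation/modulus argument for $M$ itself on the Banach space, not by coercivity of the covariance operator over the sieve. Without an argument of this type (or some substitute uniform identifiability statement), your proposal proves consistency only in the $\Gamma$-semi-norm, which is a strictly weaker conclusion than the statement of Theorem~\ref{thm:1}.
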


Theorem~\ref{thm:1} establishes the convergence of our estimator $\widehat{\beta}_n$ as an element of the Banach space $\mathcal{B}([0,1])$. To the best of our knowledge, this is a strong mode of convergence that has not been established previously for any robust functional regression estimators. In particular, the results of \citet{Boente:2020} and \citet{Kal:2023} only imply that $\|\widehat{\beta}_n - \beta_0\|_{\mathcal{B}} = O_\mathbb{P}(1)$,  as $n \to \infty$, which is clearly a weaker result. Since, by our assumptions, $\mathcal{B}([0,1]$ is embeddable in $\mathcal{C}([0,1])$, the result of Theorem~\ref{thm:1} minimally implies the uniform convergence $\|\widehat{\beta}_n-\beta_0\|_{\infty} \xrightarrow{\mathbb{P}} 0$, as $n \to \infty$, but it may also imply convergence of certain derivatives if $\mathcal{B}([0,1])$ is more regular, e.g., if $\mathcal{B}([0,1])$ is a Hölder or a Sobolev space, as is often assumed with B-spline or Fourier estimators. For example, if $\mathcal{B}([0,1])$ is the Hölder space with derivatives up to order $m$ satisfying a Hölder condition with exponent $0< r \leq 1$, then convergence is with respect to the norm
\begin{align*}
\left\|f \right\|_{\mathcal{B}} = \left\|f \right\|_{\infty} + \max_{0 \leq  j \leq m} \sup_{\substack{x,y \in [0,1] \\ x \neq y}} \left| \frac{f^{(j)}(x) - f^{(j)}(y)}{(x-y)^{r}} \right|, \quad f \in \mathcal{B}([0,1]).
\end{align*}
Besides its intrinsic theoretical interest, such a mode of convergence offers important practical guarantees regarding the performance of the estimator, in that the population function $\beta_0$ and its derivatives will be estimated equally well throughout their domain.

\subsection{Rates of convergence} 

To obtain a rate of convergence of the proposed family of estimators, we use the distance criterion given by
\begin{align*}
\left|\pi(\widehat{\beta}_n,\beta_0)\right|^2 =  \mathbb{E} \left\{\left| \langle X, \widehat{\beta}_n-\beta_0 \rangle \right|^2 \right\},
\end{align*}
which may be rewritten as $\pi(\widehat{\beta}_n, \beta_0) = \{\langle \Gamma (\widehat{\beta}_n-\beta_0), \widehat{\beta}_n-\beta_0 \rangle\}^{1/2}$ with $\Gamma$ denoting the self-adjoint Hilbert-Schmidt covariance operator of $X$. This is a natural criterion to consider and it has been extensively in theoretical investigations of functional linear models, e.g., by \citet{Cardot:2005, Crambes:2009, Boente:2020} and \citet{Kal:2023}. Intuitively, $\pi(\widehat{\beta}_n, \beta_0)$ corresponds to the average prediction error that arises when using $\langle X_{n+1}, \widehat{\beta}_n \rangle$ to predict $\langle X_{n+1}, \beta_0 \rangle$, where $X_{n+1}$ is a new random function possessing the same distribution as $X$. 

It is worth noting that, under (A5), $\pi(\widehat{\beta}_n, \beta_0)  \leq C \| \widehat{\beta}_n-\beta_0 \|_{\infty}$, so that, by Theorem~\ref{thm:1}, $\pi(\widehat{\beta}_n, \beta_0)  \xrightarrow{\mathbb{P}} 0$.  Theorem~\ref{thm:2} below serves to delineate the roles of the subspace $\Theta_{K_n}$, the penalty functional $\mathcal{J}(\cdot)$ and the penalty parameter $\widehat{\lambda}_n$ in the convergence of the prediction error to $0$.

\begin{theorem}
\label{thm:2}
Suppose that assumptions (A1)--(A5) are satisfied. Then,
\begin{align*}
\left| \pi(\widehat{\beta}_n, \beta_0)\right|^2 = O_{\mathbb{P}}\left(\frac{K_n \log n}{n} \right) +  O_{\mathbb{P}}\left(\left\|\widetilde{\beta}_{K_n}-\beta_0 \right\|^2 \right) + O_{\mathbb{P}}\left(\widehat{\lambda}_n \mathcal{J}\left(\widetilde{\beta}_{K_n}\right) \right), \quad \text{as} \ n \to \infty.
\end{align*}
\end{theorem}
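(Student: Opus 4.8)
The plan is to run a basic-inequality/localisation analysis in the style of penalised $M$-estimation over sieves, with Theorem~\ref{thm:1} supplying the localisation. Write $M^{\kappa_0}(\alpha,\beta)=\mathbb{E}\{l_{\kappa_0}(Y,p_1(\alpha,\beta))\}$; comparing \eqref{eq:2} with \eqref{eq:3} shows $M^{\kappa_0}(\alpha,\beta)$ equals $M((\alpha,\beta),\kappa_0)$ up to an additive constant, so by Lemma~\ref{lem:1} and Lemma~\ref{lem:2} it is minimised uniquely at $(\alpha_0,\beta_0)=(0,\beta_0)$. Since $(\widehat\alpha_n,\widehat\beta_n)$ minimises \eqref{eq:4}, comparing the value of \eqref{eq:4} at $(\widehat\alpha_n,\widehat\beta_n)$ with its value at the deterministic pair $(0,\widetilde\beta_{K_n})$ and discarding the nonnegative term $\widehat\lambda_n\mathcal{J}(\widehat\beta_n)$ gives the basic inequality
\begin{equation*}
\frac1n\sum_{i=1}^n\Big\{l_{\widehat\kappa_n}\big(Y_i,p_i(\widehat\alpha_n,\widehat\beta_n)\big)-l_{\widehat\kappa_n}\big(Y_i,p_i(0,\widetilde\beta_{K_n})\big)\Big\}\le\widehat\lambda_n\mathcal{J}(\widetilde\beta_{K_n}).
\end{equation*}
Because $\kappa\mapsto l_\kappa(y,p)$ is continuously differentiable on compact subsets of $(0,\infty)$ with derivative bounded whenever $p$ stays in a compact subinterval of $(0,1)$, and because (A2), (A5), Theorem~\ref{thm:1} and the consistency $\widehat\alpha_n\xrightarrow{\mathbb{P}}0$ that accompanies it force $p_i(\widehat\alpha_n,\widehat\beta_n)$ and $p_i(0,\widetilde\beta_{K_n})$ into such a subinterval with probability tending to one, (A1) lets us replace $\widehat\kappa_n$ by $\kappa_0$ at the cost of a remainder $R_n$ that will turn out to be negligible. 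Adding and subtracting $M^{\kappa_0}$ then yields
\begin{equation*}
\Delta_n:=M^{\kappa_0}(\widehat\alpha_n,\widehat\beta_n)-M^{\kappa_0}(0,\beta_0)\le\big[M^{\kappa_0}(0,\widetilde\beta_{K_n})-M^{\kappa_0}(0,\beta_0)\big]+\widehat\lambda_n\mathcal{J}(\widetilde\beta_{K_n})+|\mathcal{E}_n|+|R_n|,
\end{equation*}
where $\mathcal{E}_n=(n^{-1}\sum_i-\mathbb{E})\{l_{\kappa_0}(Y,p_1(\widehat\alpha_n,\widehat\beta_n))-l_{\kappa_0}(Y,p_1(0,\widetilde\beta_{K_n}))\}$ is the empirical process with the estimators plugged in before taking expectations.

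Next I would extract a quadratic lower bound from the curvature of $M^{\kappa_0}$. Writing $p_0=p_1(0,\beta_0)=H(\langle X,\beta_0\rangle)$, a direct differentiation shows that $p\mapsto\sum_{y}f_p^{1+\kappa_0}(y)-(1+\kappa_0^{-1})\{p_0p^{\kappa_0}+(1-p_0)(1-p)^{\kappa_0}\}$ has a strict minimum at $p=p_0$, with second derivative $(1+\kappa_0)\{p_0^{\kappa_0-1}+(1-p_0)^{\kappa_0-1}\}$, which is bounded below by a positive constant since (A1) gives $\kappa_0\le1$ and, by (A3) and (A5), $p_0$ ranges over a compact subinterval of $(0,1)$. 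Expanding $M^{\kappa_0}$ to second order about $(0,\beta_0)$, where the gradient vanishes, and using (A2) to pass from the probability scale back to the linear-predictor scale, one obtains on the high-probability event on which $(\widehat\alpha_n,\widehat\beta_n)$ is $\|\cdot\|_{\mathcal{B}}$-close to $(0,\beta_0)$ (Theorem~\ref{thm:1}) that $\Delta_n\ge c\,\mathbb{E}\{(\widehat\alpha_n+\langle X,\widehat\beta_n-\beta_0\rangle)^2\}\ge c\,|\pi(\widehat\beta_n,\beta_0)|^2$ for a fixed $c>0$. The same expansion, now with the Hessian bounded above by virtue of (A2), gives the companion upper bound $M^{\kappa_0}(0,\widetilde\beta_{K_n})-M^{\kappa_0}(0,\beta_0)\le C'\,\mathbb{E}\langle X,\widetilde\beta_{K_n}-\beta_0\rangle^2\le C'C^2\|\widetilde\beta_{K_n}-\beta_0\|^2$, which is the second term in the statement.

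The crux is bounding $\mathcal{E}_n$, and this is where I expect the main obstacle to lie. Theorem~\ref{thm:1} localises the problem to the class of functions $(x,y)\mapsto l_{\kappa_0}(y,H(\alpha+\langle x,\beta\rangle))-l_{\kappa_0}(y,H(\langle x,\widetilde\beta_{K_n}\rangle))$ with $\alpha$ in a bounded interval and $\beta\in\Theta_{K_n}$, $\|\beta-\beta_0\|_{\mathcal{B}}$ small; since $\widehat\beta_n-\widetilde\beta_{K_n}\in\Theta_{K_n}$ and $\widetilde\beta_{K_n}-\beta_0$ is a fixed offset, this class is indexed by an at most $(K_n+1)$-dimensional set, so the covering numbers of its localisation to an $\mathcal{L}^2(\mathbb{P})$-ball of radius $\delta$ grow like $(A\delta/\epsilon)^{K_n+1}$. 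Boundedness of $H'$ (A2) and of $X$ (A5) make $l_{\kappa_0}(y,H(\cdot))$ Lipschitz in the linear predictor on the relevant range, so each member has $\mathcal{L}^2(\mathbb{P})$-norm at most a constant multiple of $\{\mathbb{E}(\alpha+\langle X,\beta-\beta_0\rangle)^2\}^{1/2}$. A chaining/peeling maximal inequality (along the lines of van de Geer's treatment of least squares over sieves, or a Bernstein bound together with a dyadic decomposition of the radius) then gives, with probability tending to one and for any fixed $\eta>0$,
\begin{equation*}
|\mathcal{E}_n|\le\eta\,\Delta_n+C''\Big(\frac{K_n\log n}{n}+\|\widetilde\beta_{K_n}-\beta_0\|^2\Big),
\end{equation*}
the $\log n$ arising from the union bound over the dyadic scales together with the growth of the sup-norms of elements of $\Theta_{K_n}$; the $\widehat\kappa_n$-replacement remainder $R_n$ and the discrepancy between the empirical and population localisation radii are of the same or smaller order and are controlled in the same way. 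Feeding the curvature bounds and this estimate back into the inequality for $\Delta_n$, taking $\eta$ small and absorbing $\eta\Delta_n$ on the left, yields $\Delta_n=O_{\mathbb{P}}(K_n\log n/n)+O_{\mathbb{P}}(\|\widetilde\beta_{K_n}-\beta_0\|^2)+O_{\mathbb{P}}(\widehat\lambda_n\mathcal{J}(\widetilde\beta_{K_n}))$, and since $|\pi(\widehat\beta_n,\beta_0)|^2\le c^{-1}\Delta_n$, the theorem follows. The delicate points are getting the sharp $\sqrt{K_n\log n/n}$ empirical-process modulus over the growing-dimensional sieve, measured in the prediction semi-norm, so that the cross term reabsorbs, and carrying the random tuning parameters $\widehat\kappa_n$ and $\widehat\lambda_n$ through this without degrading the rate.
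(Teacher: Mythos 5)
Your overall architecture is essentially the paper's: a basic inequality obtained by comparing the penalized criterion at $\widehat{\beta}_n$ with its value at $\widetilde{\beta}_{K_n}$ and discarding $\widehat{\lambda}_n\mathcal{J}(\widehat{\beta}_n)$, a quadratic curvature lower bound on the population excess risk, an empirical-process modulus over the localized sieve with entropy of order $K_n\log(1/\epsilon)$ combined with peeling (this is exactly where the paper's Lemma~5 and Theorem 5.11 of van de Geer enter, and where the $\log n$ comes from), and finally absorption plus solving the resulting quadratic inequality in $\pi$. Your one structural deviation is benign and arguably cleaner: you expand the curvature around $\beta_0$, where the gradient of $M(\cdot,\kappa_0)$ vanishes, and pay for the approximation through $M(\widetilde{\beta}_{K_n},\kappa_0)-M(\beta_0,\kappa_0)\lesssim\|\widetilde{\beta}_{K_n}-\beta_0\|^2$, whereas the paper expands around $\widetilde{\beta}_{K_n}$ and controls the resulting linear cross term by $L\|\widetilde{\beta}_{K_n}-\beta_0\|\,\pi(\widehat{\beta}_n,\widetilde{\beta}_{K_n})$; both yield the same three-term decomposition. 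Your second-derivative computation $(1+\kappa_0)\{p_0^{\kappa_0-1}+(1-p_0)^{\kappa_0-1}\}$ is correct.

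The genuine gap is your treatment of the random tuning parameter. You propose to ``replace $\widehat{\kappa}_n$ by $\kappa_0$ at the cost of a remainder $R_n$ that will turn out to be negligible,'' but (A1) only gives $\widehat{\kappa}_n\xrightarrow{\mathbb{P}}\kappa_0$ with \emph{no rate}, so any remainder that is merely bounded by a constant times $|\widehat{\kappa}_n-\kappa_0|$, or by $|\widehat{\kappa}_n-\kappa_0|$ times a first-order quantity such as $n^{-1}\sum_i|p_i(\widehat{\alpha}_n,\widehat{\beta}_n)-p_i(0,\widetilde{\beta}_{K_n})|$ or $\pi(\widehat{\beta}_n,\widetilde{\beta}_{K_n})$, is only $o_{\mathbb{P}}(1)$ or $o_{\mathbb{P}}(1)\pi$, and after absorbing into the quadratic it contributes a term of order $|\widehat{\kappa}_n-\kappa_0|^2$, which is not $O_{\mathbb{P}}(K_n\log n/n+\|\widetilde{\beta}_{K_n}-\beta_0\|^2+\widehat{\lambda}_n\mathcal{J}(\widetilde{\beta}_{K_n}))$. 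Note that the per-observation difference $l_{\widehat{\kappa}_n}(y,p)-l_{\kappa_0}(y,p)$ does \emph{not} vanish at $p=p_1(\beta_0)$ pointwise in $y$ (only in conditional expectation), so the fluctuation part of $R_n$ is genuinely first order and cannot be Taylor-killed. The way out — which is what the paper does — is not to replace $\widehat{\kappa}_n$ at all: include $\kappa$ ranging over a fixed neighbourhood of $\kappa_0$ as an extra index of the localized function class (the bracketing entropy then picks up only one additional one-dimensional term, cf.\ Lemma~5), and prove the curvature lower bound uniformly over $(\beta,\kappa)$ in that neighbourhood (the paper's inequality \eqref{eq:A13}--\eqref{eq:A15}). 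Alternatively, the population part of your replacement can be salvaged by a second-order argument showing the $\kappa$-derivative of the excess risk is $O(\pi^2+\pi\|\widetilde{\beta}_{K_n}-\beta_0\|)$, but the empirical-process part still requires uniformity in $\kappa$, so the replacement buys nothing. Relatedly, your closing remark that ``the discrepancy between the empirical and population localisation radii'' is handled ``in the same way'' glosses over a second point the paper sidesteps entirely by bracketing in the $\mathcal{L}^2(\mathbb{P}_{X,Y})$-norm and applying Theorem 5.11 of van de Geer with the population radius; if you localize with empirical radii you would need an additional uniform norm-comparison argument over the growing sieve that your sketch does not supply.
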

\noindent
Theorem~\ref{thm:2} presents the prediction error as a function of three quantities, which respectively represent the variance, the squared approximation bias and the regularization bias. This decomposition is the same as the decomposition obtained by \citet{Kal:2023} for functional linear regression estimators. However, as discussed earlier, our assumptions with respect to $\beta_0$ and the finite-dimensional distributions of $X$ are significantly weaker than the assumptions of those authors.

Examining the prediction error decomposition in detail, we see that the variance term $K_n \log n/n$ depends only on the size of the approximating subspace, $\Theta_{K_n}$,  but not its type. This situation has a well-known parallel in non-parametric regression \citep[Chapter 15]{Eg:2009}. The $\log n$ term is not usually encountered in the non-parametric setting and appears herein due to the more difficult nature of the problem. In particular, it appears because of the infinite-dimensional predictor $X$. Contrary to the variance, the approximation bias $\|\widetilde{\beta}_{K_n}-\beta_0 \|$ does depend on the type of approximating subspace as well as $\beta_0$ itself. To achieve a good rate of decay of the approximation error, we need to select a subspace that approximates well the class of functions to which $\beta_0$ belongs. For instance, a subspace of twice continuously differentiable functions, if we suspect that $\beta_0$ is also twice continuously differentiable. Lastly, the decay of $\widehat{\lambda}_n \mathcal{J}(\widetilde{\beta}_{K_n})$ to $0$ essentially depends on the appropriateness of $\mathcal{J}(\cdot)$ as a measure of roughness on $\Theta_{K_n}$. 

As a concrete example, let us consider the case wherein $\mathcal{B}([0,1])$ is the space of functions having uniformly bounded derivatives up to order $r \geq 1$ with $r$th derivative satisfying a Hölder condition of order $v \in (0,1)$. For $\Theta_{K_n}$ we select the $(K_n+p)$-dimensional B-spline subspace of order $p>r$ generated by $K_n$ equidistant interior knots in $[0,1]$ and $\mathcal{J}(\beta) = \|\beta^{(q)}\|^2$ for $q<p$. Then, $\mathcal{J}(\widetilde{\beta}_{K_n}) = O(1)$ and, as shown in \citet[p. 149]{DB:2001}, $\|\widetilde{\beta}_{K_n}-\beta_0 \| = O(K_n^{-r-v})$. Hence,
\begin{align*}
\left| \pi(\widehat{\beta}_n, \beta_0)\right|^2 = O_{\mathbb{P}}\left(\frac{K_n \log n}{n} \right) +  O_{\mathbb{P}}\left(K_n^{-2(r+v)} \right) + O_{\mathbb{P}}\left(\widehat{\lambda}_n \right).
\end{align*}
For $K_n \asymp n^{1/(2(r+v)+1)}$ and $\widehat{\lambda}_n = O_{\mathbb{P}}(n^{-c})$ with $c \geq 2(r+v)/(2(r+v)+1)$ we obtain the optimal rate $| \pi(\widehat{\beta}_n, \beta_0)|^2 = O_{\mathbb{P}}( n^{-2(r+v)/(2(r+v)+1)} \log n)$. Interestingly, this rate of convergence is substantially faster than the rate $O_{\mathbb{P}}( n^{-2(r+v)/(4(r+v)+1)})$ obtained by \citet{Cardot:2005} for their penalized spline likelihood estimator. The difference is attributable to our use of modern empirical process theory in the proofs of our results.

\section{Practical implementation}
\label{sec:4}

\subsection{Computational algorithm}

The penalized density power divergence estimator defined in \eqref{eq:4} depends on the choice of the approximating subspace, $\Theta$, its dimension, $K_n$, the tuning parameter $\widehat{\kappa}_n$ and the penalty parameter $\widehat{\lambda}_n$. In this section we outline an effective strategy for their selection, but first we discuss the computation of penalized density power divergence estimates. To develop our algorithm, we  assume with little loss of generality that the penalty $\mathcal{J}(\beta)$ can be written in quadratic form, i.e., we can write $\mathcal{J}(\beta) = \boldsymbol{\beta}^{\top} \mathbf{P} \boldsymbol{\beta}$ for some positive semi-definite $\mathbf{P}$, with $\boldsymbol{\beta} \in \mathbbm{R}^{K_n}$ the coefficient vector of $\beta \in \Theta_{K_n}$. While many penalties, such as penalties on integrated squared derivatives, fulfil this assumption, $l_1$-type penalties do not. However, for such penalties we may employ the computationally convenient local quadratic approximation proposed by \citet{Fan:2001} so that the core idea of our algorithm remains valid.

With the quadratic simplification of the penalty, we solve \eqref{eq:4} with a modified Fisher-scoring procedure the updating steps of which may be reduced to penalized iteratively reweighted least squares updates, in the manner outlined by \citet{Green:1994}. In particular, the weights in our case are given by
\begin{align*}
w_i = (1+\widehat{\kappa}_n)\frac{\mathbb{E}_{p_i}\{f_{p_i}^{\widehat{\kappa}_n}(Y_i)|Y_i - p_i|^2 \vert X_i\}}{p_i^2(1-p_i)^2} \left| \frac{\partial p_i}{ \partial \eta_i} \right|^2, \quad (i=1, \ldots, n),
\end{align*}
where $\mathbb{E}_{p_i}\{\cdot\}$ denotes expectation with respect to a Bernoulli density with probability of "success" $p_i = H(\eta_i)$ and the linear components $\eta_i = \alpha +  \langle X_i, \beta \rangle, i = 1,\ldots, n$. At the same time, the vector of "working" data $\mathbf{z} \in \mathbbm{R}^n$ is given by
\begin{align*}
z_i = \eta_i - \left(1+ \widehat{\kappa}_n\right)\frac{\partial p_i}{ \partial \eta_i} \frac{\mathbb{E}_{p_i}\{f_{p_i}^{\widehat{\kappa}_n}(Y_i)(Y_i - p_i)/(p_i(1-p_i)) \vert X_i  \} - f_{p_i}^{\widehat{\kappa}_n}(Y_i)(Y_i - p_i)/(p_i(1-p_i))}{w_i}.
\end{align*}
The expressions simplify somewhat when the "canonical" link $H(x) = e^{x}/(1+e^{x})$ is used, as in that case $\partial p_i/\partial \eta_i = p_i(1-p_i)$. We initiate our Fisher scoring algorithm at the density power divergence estimates obtained with $\widehat{\kappa}_n = 2$. For $\widehat{\kappa}_n = 0$, it is easy to see that the above formulae reduce to those for penalized likelihood estimators, as given in \citet[Chapter 5]{Green:1994} highlighting yet again that density power divergence estimation is an extension of maximum likelihood estimation.

\subsection{Tuning and smoothing parameter selection}

As discussed previously, after the selection of $\Theta$, the estimator requires the choice of $\widehat{\kappa}_n$, $K_n$ and $\widehat{\lambda}_n$ which correspond to the tuning parameter, the dimensional of the approximating subspace and the penalty parameter, respectively. Among these, $K_n$ seems to be the least critical for the success of the estimator, as extensive experience with penalized lower rank estimators has shown that provided that the approximating subspace is rich enough, but still smaller than the sample size $n$, the dimension makes little difference. In our experience, a choice such as $K = [\min\{30, n/4\}]$, which ensures at least $4$ observations per basis function and puts a cap at $30$ basis functions, is appropriate for many situations. The number of basis functions can be increased beyond 30 in highly complex situations, but these tend to be rather rare in practice. Hence, the rest of this subsection is dedicated to the selection of $\widehat{\kappa}_n$ and $\widehat{\lambda}_n$.

The parameter $\widehat{\kappa}_n$ is particularly important as it determines the trade-off between robustness and efficiency of our penalized power divergence estimators and as such it is important to select it properly. Similar to \citet{Kal:2022}, our strategy of selecting $\widehat{\kappa}_n$ amounts to minimizing an approximation of the mean-integrated squared error (MISE) $\mathbb{E}\{\|\widehat{\beta}_n - \beta_0\|^2\}$. To clearly show the dependence of $\widehat{\beta}_n$ on $\kappa$ we herein denote it with $\widehat{\beta}_{n,\kappa}$. Standard calculations show that for each $\kappa \geq 0$ the MISE can be decomposed as
\begin{align}
\label{eq:5}
\MISE(\kappa) = \left\|\mathbb{E}\{\widehat{\beta}_{n,\kappa}\} - \beta_0 \right\|^2 + \mathbb{E}\left\{\left\|\widehat{\beta}_{n,\kappa} - \mathbb{E}\{\widehat{\beta}_{n,\kappa}\}  \right\|^2 \right\},
\end{align}
where the first term on the RHS represents the integrated squared bias and the second term represents the integrated variance of $\widehat{\beta}_{n,\kappa}$. None of these terms can be computed explicitly, as we lack knowledge of both $\beta_0$ and $\mathbb{E}\{\widehat{\beta}_{n,\kappa}\}$ (recall that $\widehat{\beta}_{n,\kappa}$ is defined implicitly as the solution of \eqref{eq:4}). To approximate the bias term of $\MISE(\kappa)$, we follow the strategy of \citet{War:2005} and replace $\mathbb{E}\{\widehat{\beta}_{n,\kappa}\}$ with its unbiased estimator $\widehat{\beta}_{n,\kappa}$. Moreover, we use a "pilot" estimator in the place of the unknown $\beta_0$. We have found in our numerical experiments that that any density power divergence estimator with a large enough tuning serves is an effective "pilot" estimator in practice, as it is less susceptible to outlying observations. Hence we use $\widehat{\beta}_{n,2}$ as our initial estimator for $\beta_0$.

To approximate the variance term, observe first that $\mathbb{E}\{\|\widehat{\beta}_{n,\kappa} - \mathbb{E}\{\widehat{\beta}_{n,\kappa}\}  \|^2 \} = \Tr\{ \mathbf{P}_0 \Cov\{\widehat{\boldsymbol{\beta}}_{n, \kappa}\} \}$ with $\mathbf{P}_0$ consisting of $\langle \theta_{i}, \theta_j \rangle, i,j = \ldots, K_n$. Let us next write
\begin{align*}
l_{\kappa}^{\prime}(Y_i, p_i(\alpha,\beta)) = \frac{\partial l_{\kappa}(Y_i, x) }{\partial x}  \bigg \vert_{x = p_i(\alpha, \beta)} \frac{\partial H(x)}{\partial x} \bigg \vert_{x = \eta_i} , \quad (i=1, \ldots, n),
\end{align*}
for the first derivatives and similarly $l_{\kappa}^{\prime \prime}(Y_i, p_i(\alpha,\beta))$ for the second derivatives. Define $n \times n$ matrices $\mathbf{C}_{\kappa} = \diag\{|l^{\prime}_{\kappa}(Y_i,p_i(\widehat{\alpha}_{n,\kappa}, \widehat{\beta}_{n, \kappa}))|^2\}$ and $\mathbf{D}_{\kappa}  = \diag\{l^{\prime \prime}_{\kappa}(Y_i,p_i(\widehat{\alpha}_{n,\kappa}, \widehat{\beta}_{n, \kappa}) \}$. A first order Taylor expansion of the score function of \eqref{eq:4} yields the approximation
\begin{align*}
\Cov\{(\widehat{\alpha}_{n,\kappa},\widehat{\boldsymbol{\beta}}_{n, \kappa})\} \approx \left[ \mathbf{B}^{\star \top} \mathbf{D}_{\kappa} \mathbf{B}^{\star} + 2 \widehat{\lambda}_{n, \kappa} \mathbf{P} \right]^{-1} \mathbf{B}^{\star \top} \mathbf{C}_{\kappa} \mathbf{B}^{\star} \left[ \mathbf{B}^{\star \top} \mathbf{D}_{\kappa} \mathbf{B}^{\star} + 2 \widehat{\lambda}_{n, \kappa} \mathbf{P}^{\star} \right]^{-1},
\end{align*}
where $\mathbf{B}^{\star}$ has a column of ones and $(i,j)$th element $\langle X_i, \theta_j \rangle, i = 1,\ldots, n, j = 2\ldots, (K_n+1)$ and $\mathbf{P}^{\star}$ is the penalty matrix $\mathbf{P}$ augmented with a row and a column of zeroes. An approximation to $\Cov\{\widehat{\boldsymbol{\beta}}_{n, \kappa}\} \}$ may be subsequently obtained by selecting the corresponding elements from the larger covariance matrix $\Cov\{(\widehat{\alpha}_{n,\kappa},\widehat{\boldsymbol{\beta}}_{n, \kappa})\}$.

With the foregoing approximations of the bias and variance we obtain the approximate mean-squared error (AMISE) given by
\begin{align*}
\AMISE_{2}(\kappa) & =  \left(\widehat{\boldsymbol{\beta}}_{n,\kappa} - \widehat{\boldsymbol{\beta}}_{n,2}\right)^{\top} \mathbf{P}_0 \left(\widehat{\boldsymbol{\beta}}_{n,\kappa} - \widehat{\boldsymbol{\beta}}_{n,2}\right) + \Tr\left\{\mathbf{P}_0\Cov\{\widehat{\boldsymbol{\beta}}_{n, \kappa}\} \right\},
\end{align*}
where the subscript $2$ indicates that $\widehat{\beta}_{n,2}$ is used in place of the unknown $\beta_0$. As a first step, we select $\widehat{\kappa}_n$ by minimizing $\AMISE_1(\kappa)$ over a grid of $20$ equidistant values in $[0,2]$. This grid includes both penalized maximum likelihood estimators ($\kappa = 0$) as well as more robust estimators ($\kappa  > 0$). Both bias approximation and the subsequent selection of $\kappa$ depends on the pilot estimator, $\widehat{\beta}_{n,2}$. To reduce this dependence we further follow the proposal of \citet{Basak:2021} and iterate the selection procedure. That is, after identifying the value of $\kappa$ minimizing $\AMISE_{2}(\kappa)$, $\widehat{\kappa}_n$, we use $\widehat{\beta}_{n,\widehat{\kappa}_n}$ as the new pilot estimator and minimize $\AMISE_{\widehat{\kappa}_n}(\kappa)$.  This process is repeated until $\widehat{\kappa}_n$ converges.  This procedure requires merely the computation of a number of estimators $(\widehat{\alpha}_{n,\kappa}, \widehat{\beta}_{n,\kappa})$ for different values of $\kappa$ and, thanks to the rapid convergence of our modified Fisher-scoring algorithm, the associated computational burden is minimal.

The computation of the estimator for a given value of $\kappa$ requires an appropriate value of the penalty parameter $\lambda$. To determine $\lambda$, we propose an appropriate modification of the AIC criterion adopted in \citet{Kal:2022}, viz,
\begin{align*}
\AIC(\lambda) = 2 \sum_{i=1}^n l_{\kappa}(Y_i, p_i(\widehat{\alpha}_{n, \kappa}, \widehat{\beta}_{n,k})) + 2 \Tr\left\{ \left[ \mathbf{B}^{\star \top} \mathbf{D}_{\kappa} \mathbf{B}^{\star} + 2 \lambda \mathbf{P}^{\star} \right]^{-1} \mathbf{B}^{\star \top} \mathbf{D}_{\kappa} \mathbf{B}^{\star} \right\}.
\end{align*}
Implementations and illustrative examples of the functional density power divergence in the \textsf{R} programming language \citep{R}  are available at \url{https://github.com/ioanniskalogridis/Robust-and-Adaptive-Functional-Logistic-Regression}.

\section{Numerical examples}
\label{sec:5}

We now examine the practical performance of the proposed penalized density power divergence via a numerical study. In our numerical experiments we are primarily interested in assessing how well the coefficient function $\beta_0$ in a variety of settings that include local behaviour of $\beta_0$, noisy and outlying predictors $X_i$ and misclassified responses $Y_i$. For this comparison, we consider the following estimators:
\begin{itemize}
\item The adaptive density power divergence estimator discussed in the previous section, denoted by DPD($\widehat{\kappa}_n$).
\item The adaptive density power divergence estimator corresponding to $\kappa = 0$, abbreviated as ML.
\item The density power divergence estimator with fixed $\kappa = 1$, abbreviated as DPD($1$)
\item The density power divergence estimator with fixed $\kappa = 2$, abbreviated as DPD($2$).
\end{itemize}
Recall that the adaptive estimator, DPD($\widehat{\kappa}_n)$, uses DPD($2$) as an initial guess for $\beta_0$ and the grid of candidate values for $\widehat{\kappa}_n$ is contained in $[0,2]$.
It is therefore of interest to compare how DPD($\widehat{\kappa}_n)$ performs against the estimators ML, DPD($1$) and DPD($2$) in practice. For all estimators we have used the popular cubic B-spline basis with $\mathcal{J}(f) = \|f^{\prime  \prime}\|^2$. Moreover, we have used the canonical logit link function in all subsequent computations. 

We have generated the predictor curves according to the truncated Karhunen-Loève decomposition given by
\begin{align}
X(t) = \sum_{j=1}^{50} j^{-1} Z_j \sqrt{2} \sin((j-1/2)\pi t), \quad t \in [0,1],
\label{curve_design}
\end{align}
where the $Z_j$ are independent standard Gaussian random variables. To deal with these curves practically we have discretized them in 200 equidistant points $t_j$ within the $[0,1]$-interval and used Riemann sums to compute inner products and norms whenever required. The predictor variables were combined  with each of the following three coefficient functions:
\begin{enumerate}
\item $\beta_1(t) = 3(t-0.3)^2+1$
\item $\beta_2(t) = 3 \sin(3.4 t^2) $
\item $\beta_3(t) = -\sin(5t/1.2)/0.5-1$.
\end{enumerate}
The binary responses $Y_i$ were then generated from a Bernoulli distribution with probability of "success" equal to $H( \langle X, \beta \rangle)$ where $H(x) = e^{x}/(1+e^{x})$. The coefficient functions are increasingly complex, as $\beta_2$ possesses more local characteristics than $\beta_1$ and, likewise, $\beta_3(t)$ exhibits more local behaviour than $\beta_2(t)$. 

\begin{figure}[H]
\centering
\subfloat{\includegraphics[width = 0.495\textwidth]{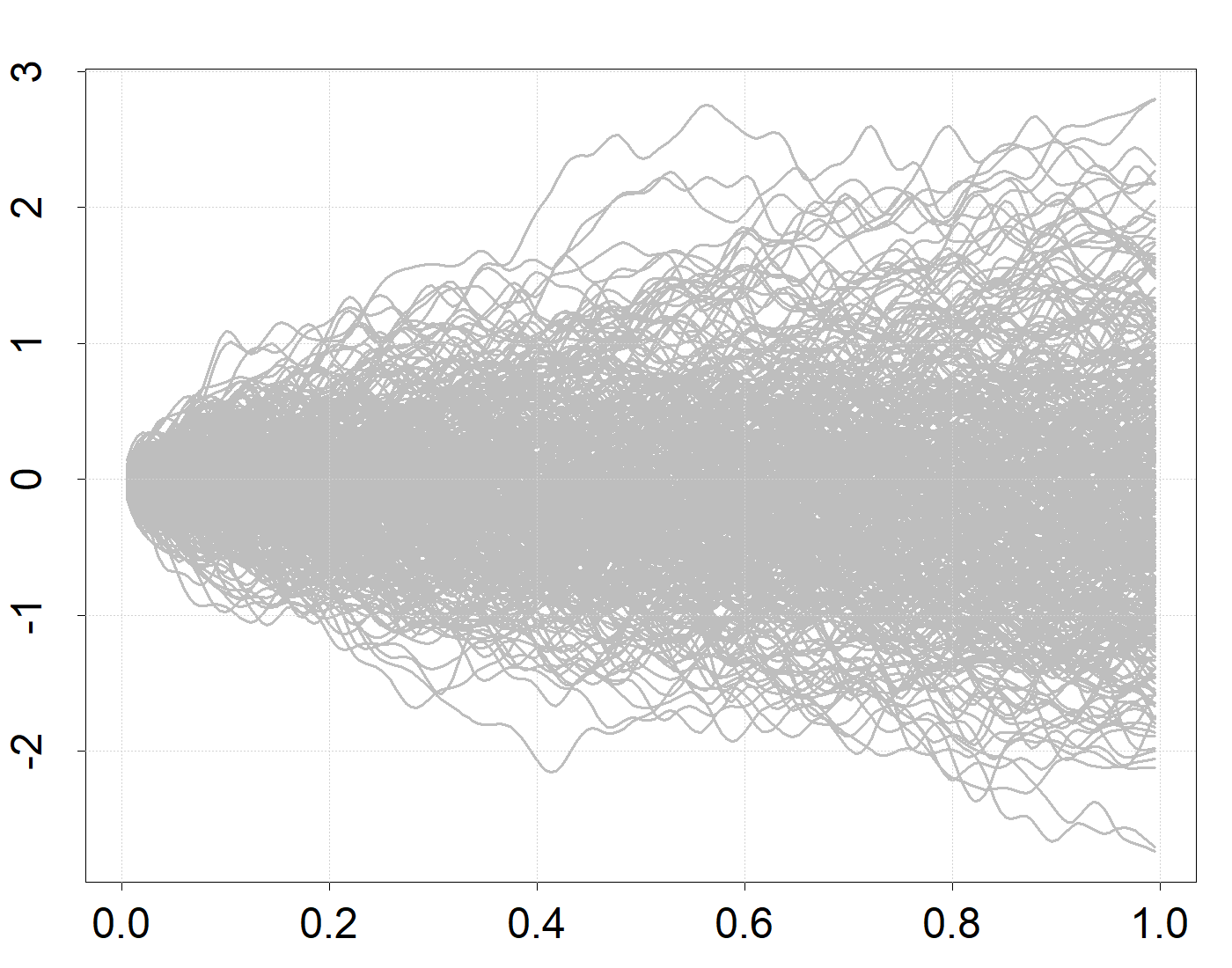}} \ 
\subfloat{\includegraphics[width = 0.495\textwidth]{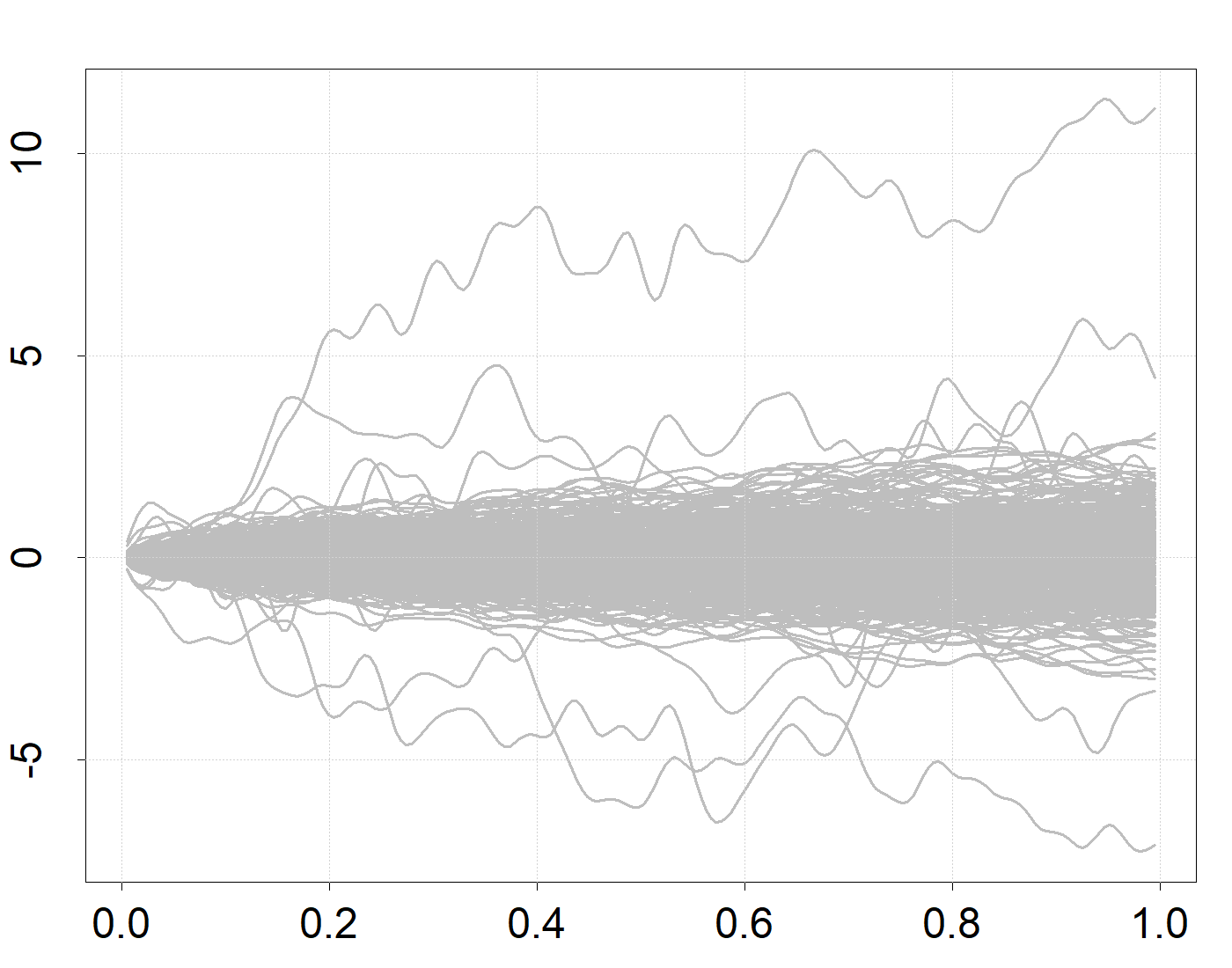}}
\caption{Two representative samples of simulated curves from clean and contaminated $(\epsilon = 0.02)$ data on the left and right respectively.}
\label{fig:1}
\end{figure}

Next to the clean data generation described above, we have also perturbed a number of observations $(X_i, Y_i)$ in the following manner. For $\epsilon \in \{0.01, 0.02, 0.05 \}$ we have replaced an $\epsilon$-fraction of the $X_i$ with $5X_i$ and we have also replaced the corresponding $Y_i$ with $1-Y_i$ thereby changing an $\epsilon$-fraction of the responses from $0$ to $1$ and vice versa. This contamination scenario corresponds to the introduction of bad leverage points into the data and it can be expected a priori that the performance of all estimators will be affected by these outlying observations. For a better understanding of the effect of the multiplication by $5$ on the shape and amplitude of the curves, Figure~\ref{fig:1} presents two representative samples of curves with $\epsilon = 0$ (clean data) and $\epsilon = 0.02$ (mild contamination).

For each coefficient function $\beta$ and level of contamination $\epsilon$, we evaluate the performance of the estimators according to the mean squared error given by
\begin{align*}
\MSE = \frac{1}{200} \sum_{j=1}^{200} \left|\widehat{\beta}_n(t_j) - \beta(t_j)\right|^2,
\end{align*}
which is an approximation to the $\mathcal{L}^2([0,1])$-distance. Since the MSE is bounded below by zero but unbounded above, its distribution tends to be right skewed and heavy-tailed. For this reason we have found that the median is an overall better measure of centrality than the mean. Therefore, Table~\ref{tab:1} below reports median MSEs (Median) for the four competing estimators along with their bootstrapped standard errors (BSE) based on 1000 simulated datasets with $n = 400$. 

\begin{table}[h!]
\centering
\resizebox{\columnwidth}{!}{%
\begin{tabular}{c c c c c c c c c c c } 
\multicolumn{1}{c}{}  & &  \multicolumn{2}{c}{DPD($\widehat{\kappa}_n)$} & \multicolumn{2}{c}{ML} & \multicolumn{2}{c}{DPD(1)} & \multicolumn{2}{c}{DPD(2)}   \\ \\[-2ex]
 & $\epsilon$ &  Median & BSE & Median & BSE & Median & BSE & Median & BSE \\ \\ [-1.5ex]
\multirow{4}{*}{$\beta_1$} & 0 & 0.389 &  0.024 & \textbf{0.385}  & 0.031 &   0.482 & 0.032  & 0.401 & 0.021 \\
& 0.01 & \textbf{0.504}  & 0.027 & 2.398 & 0.337 & 0.726 & 0.056 & 0.506  &  0.025    \\
& 0.02  & \textbf{0.714} &  0.041 & 8.654  & 1.179 & 1.263 & 0.095 & 0.723   &  0.047  \\
& 0.05  & \textbf{2.313} & 0.139 & 6.670 & 0.888 & 5.327 & 0.284 & 2.322 & 0.142
  \\  \\[-1.5ex]
\multirow{4}{*}{$\beta_2$} & 0  & 1.180 & 0.021  & \textbf{1.176} & 0.025  &  1.307 & 0.021 & 1.195 & 0.019 \\
& 0.01 &  \textbf{1.347}  & 0.029 & 5.292  &  0.810  &  1.642 &   0.052 & 1.358  & 0.034  \\
& 0.02  & \textbf{1.487} & 0.040 &   20.19 & 3.736  &  1.932 & 0.098 & 1.505 & 0.040   \\
& 0.05  & 2.929 & 0.129 & 30.69 & 3.925 & 14.38 & 1.908 & \textbf{2.927} & 0.117
 \\  \\[-1.5ex]
\multirow{4}{*}{$\beta_3$} & 0 & \textbf{1.038} &  0.027 & 1.040  & 0.023 &   1.160 & 0.038 & 1.062 & 0.025 \\
& 0.01 &  \textbf{1.064}  & 0.029 & 2.520  & 0.266 & 1.196 & 0.045 & 1.096  & 0.024    \\
& 0.02  & \textbf{1.278} &  0.039 & 10.43  & 1.373 & 1.797 & 0.095 & 1.302  & 0.040  \\
& 0.05  & \textbf{2.915} & 0.180 & 11.15 & 1.617 &  8.473 & 2.057 & 2.946 & 0.195
\end{tabular}}
\caption{Median and boostrapped standard errors (obtained from 10000 bootstrap replications) of the mean-squared errors for the competing estimators over 1000 datasets of size $n=400$. Best median performances are in bold.}
\label{tab:1}
\end{table}

There are several interesting observations emerging from Table~\ref{tab:1}. It may be immediately seen that while penalized maximum likelihood estimators can be effective in estimating the true coefficient function in the absence of contamination, their performance greatly deteriorates even with even a small fraction of outliers. In particular, even for the smallest level of contamination the performance of the penalized ML estimator decreases 2 or even 3-fold. The lack of resistance of ML based estimators needs to be contrasted with the stability of density power divergence estimators with positive tuning parameters. The performance of these estimators almost matches the performance of ML based estimators in clean data and barely deteriorates for $\epsilon \in \{0.01, 0.02\}$. Higher levels of contamination do take their toll on the performance of density power divergence estimators, but clearly not to the same extent as with ML based estimators. As a means of visualizing the difference in practical performance of ML-based and density power divergence estimators, Figure~\ref{fig:2} presents the 1000 estimates for $\beta_1(t)$ of the ML and DPD($\widehat{\kappa}_n)$ procedures in clean data as well as in mildly contaminated data on the first and second rows respectively.

\begin{figure}[H]
\centering
\subfloat{\includegraphics[width = 0.495\textwidth]{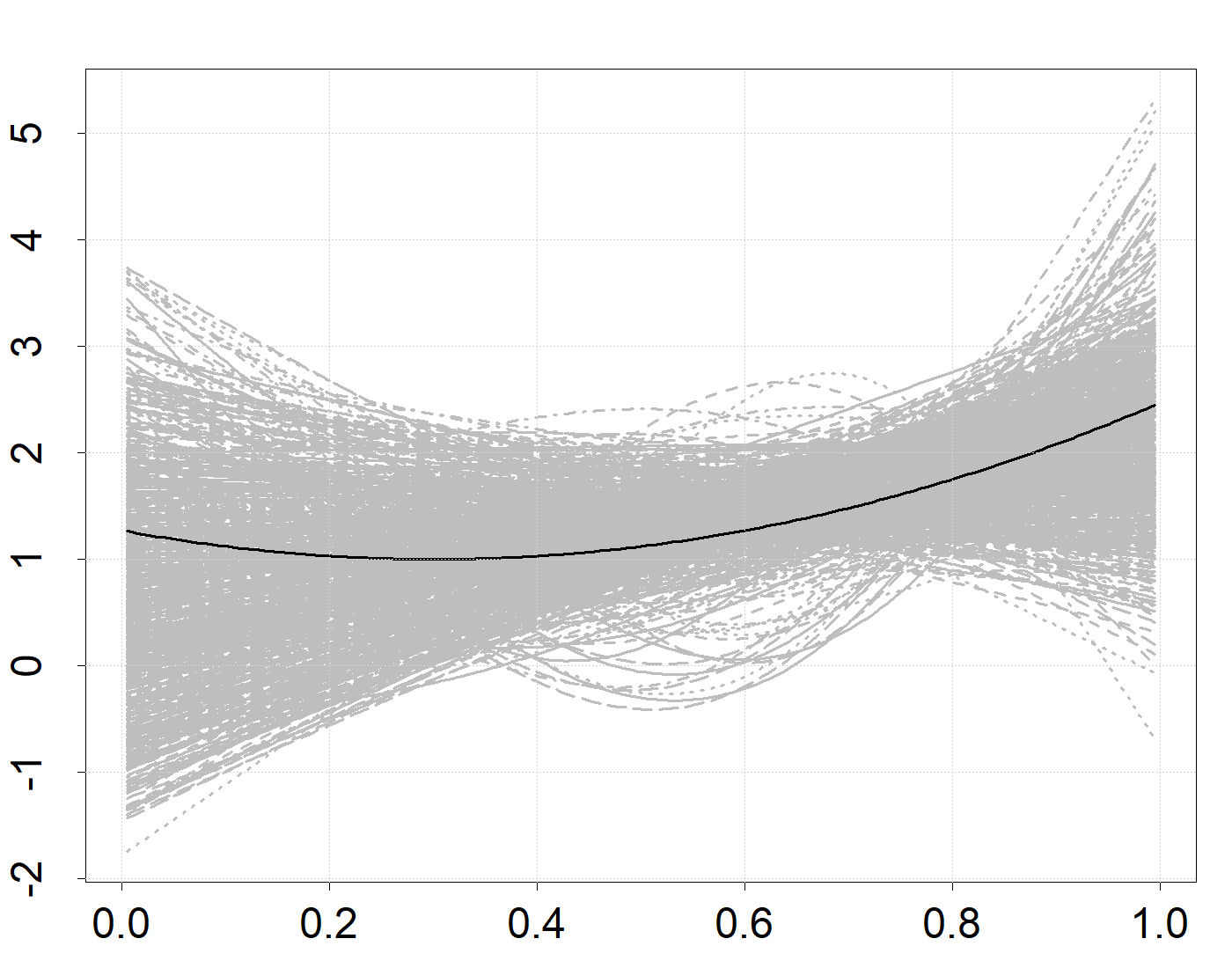}} \ 
\subfloat{\includegraphics[width = 0.495\textwidth]{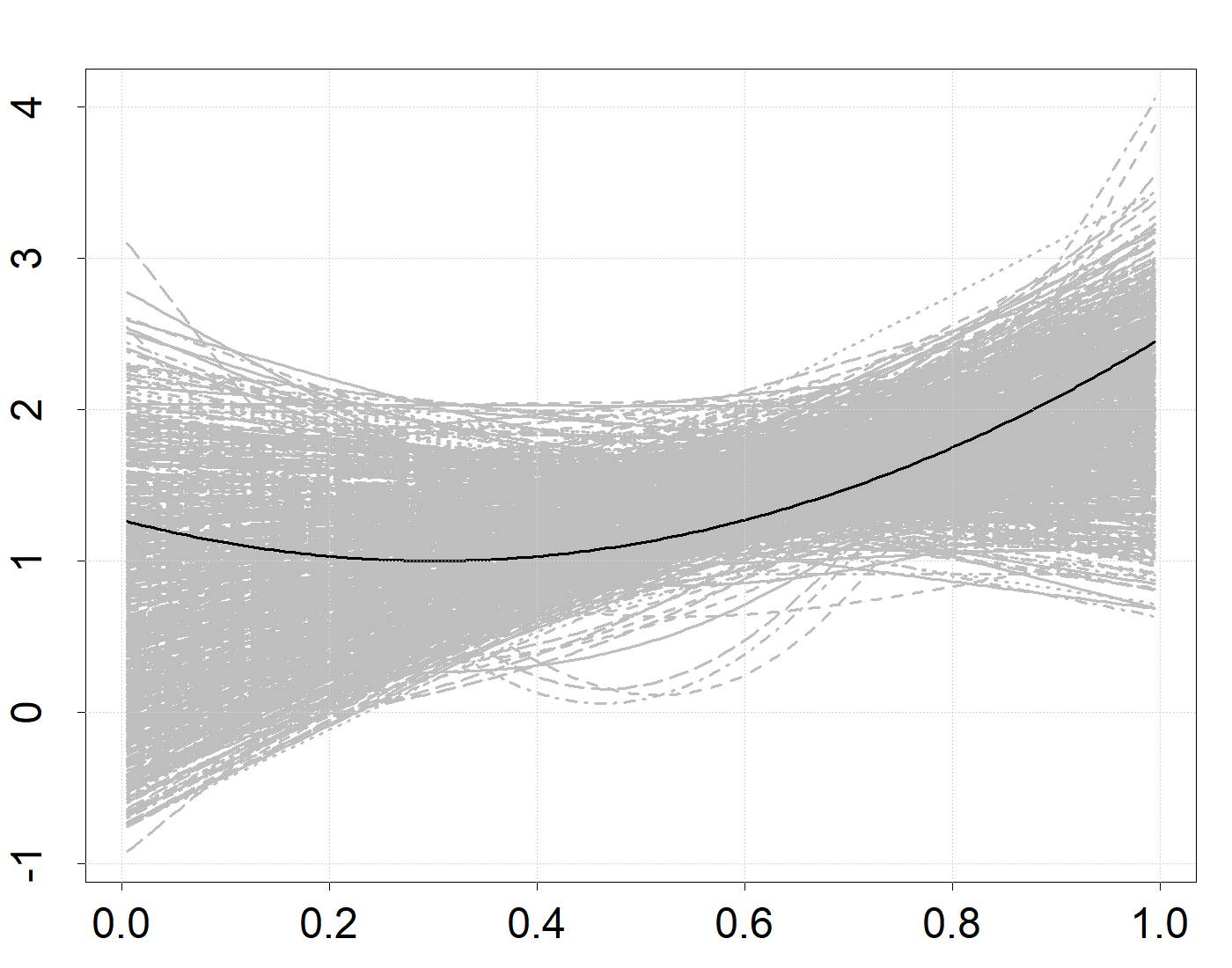}} \\
\subfloat{\includegraphics[width = 0.495\textwidth]{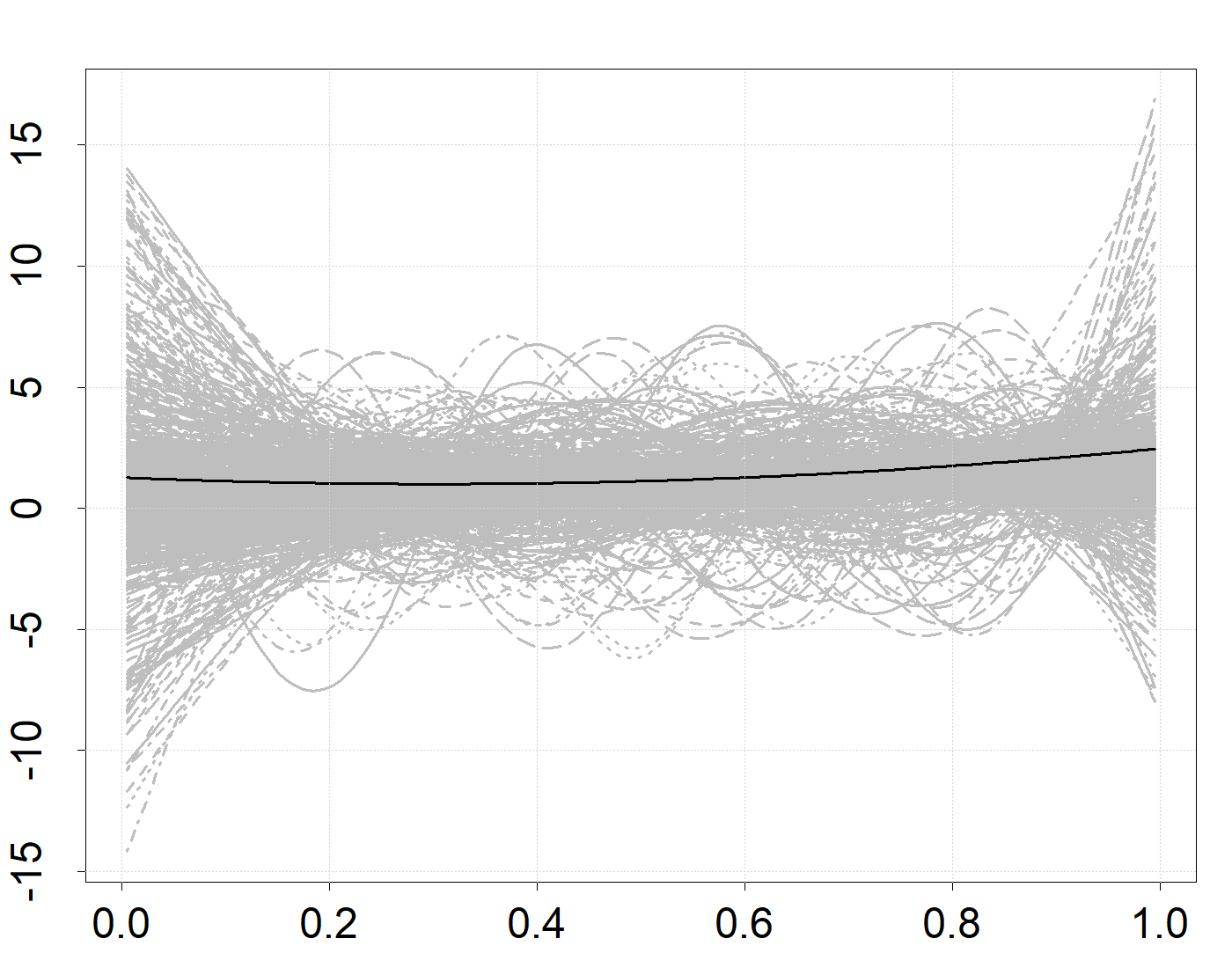}} \
\subfloat{\includegraphics[width = 0.495\textwidth]{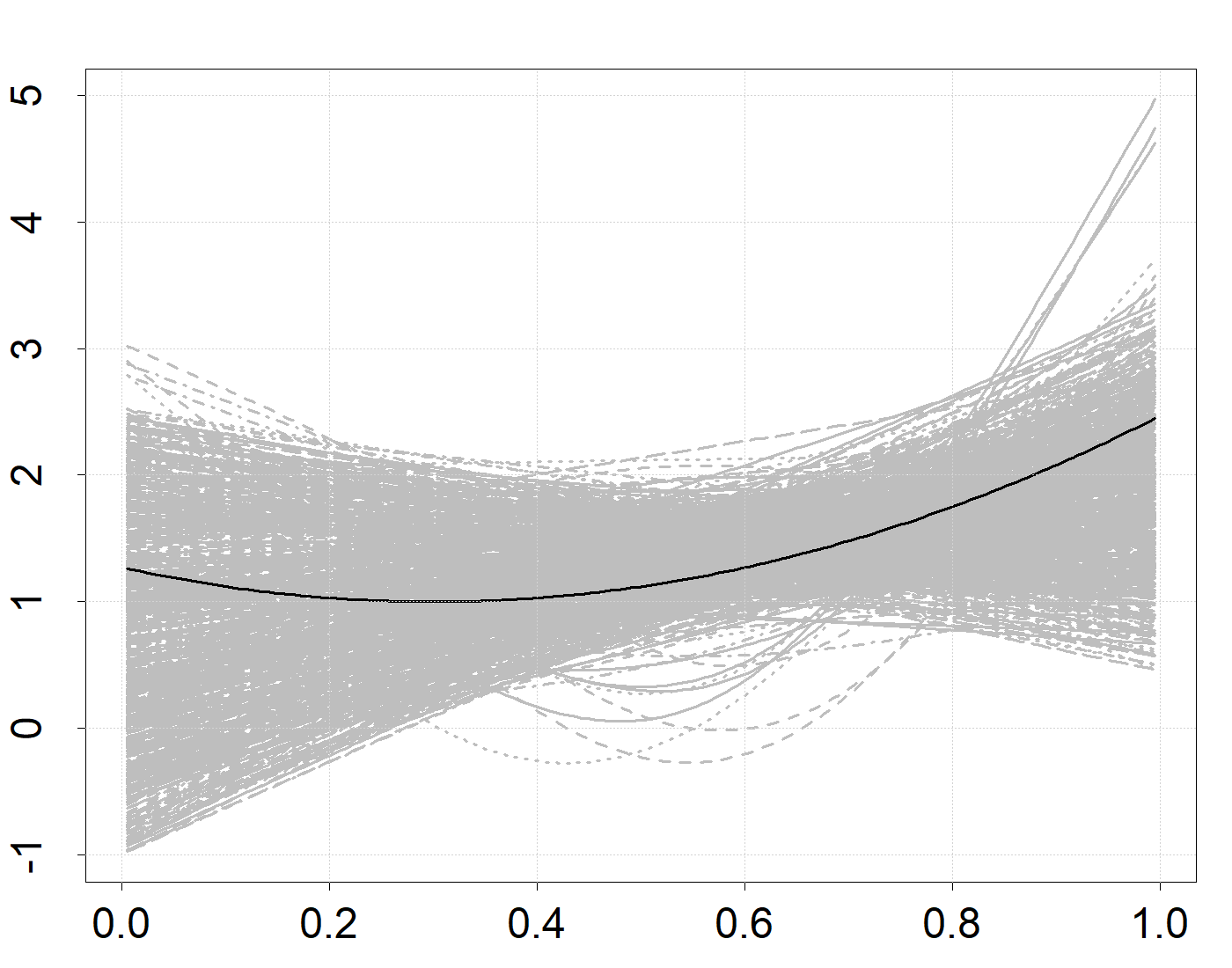}}
\caption{Top row: $1000$ ML and DPD($\widehat{\kappa}_n)$-estimates for $\beta_1(t)$ with $\epsilon = 0$ on the left and right panels respectively. Bottom row: $1000$ ML and DPD($\widehat{\kappa}_n)$-estimates for $\beta_1(t)$ with $\epsilon = 0.01$ on the left and right panels respectively. The solid black lines depict the true coefficient function $\beta_1(t)$.}
\label{fig:2}
\end{figure}

Comparing the robust estimators DPD($\widehat{\kappa}_n$), DPD(1) and DPD(2) in detail reveals that the adaptive nature of the DPD($\widehat{\kappa}_n$)-estimator allows for higher efficiency than their competitors in clean data without sacrificing robustness whenever outlying observations are present among the data. In fact, the performance of DPD($\widehat{\kappa}_n$) is similar to that of DPD(2) under high levels of contamination and substantially better than the performance of the less robust DPD(1)-estimator. The reason for this superior performance may be traced to the construction of DPD($\widehat{\kappa}_n)$-estimators, which ensures that the tuning parameter is selected in such a way that the mean-integrated squared error is approximately minimized. This criterion tends to yield a large value of $\kappa$ in the presence of contamination and a smaller value of $\kappa$ in clean data thereby ensuring the good all-round performance of DPD($\widehat{\kappa}_n$)-estimators.

\section{Application: Gait Analysis}
\label{sec:6}

It is frequently of interest to determine whether an individual walks normally or her walk is hindered due to an underlying reason, for example, pain in the abdomen or wounded toes. The present dataset consists of $166$ observations on individuals performing a walk cycle of a few seconds during which the position of the toes in the vertical axis is recorded via a motion sensor at $343$ points. These trajectories are depicted in the left panel of Figure~\ref{fig:RD1}. An indicator variable with values of $1$ and $0$ depending on whether the walk is normal or not accompanies these trajectories. The goal of this study is to determine the probability of a normal walk based on the position of the toes during this cycle. The dataset is freely available under the name "ToeSegmentation2" in the UCR Time Series Classification Archive \citep{UCR:2018}.

\begin{figure}[H]
\centering
\subfloat{\includegraphics[width = 0.495\textwidth]{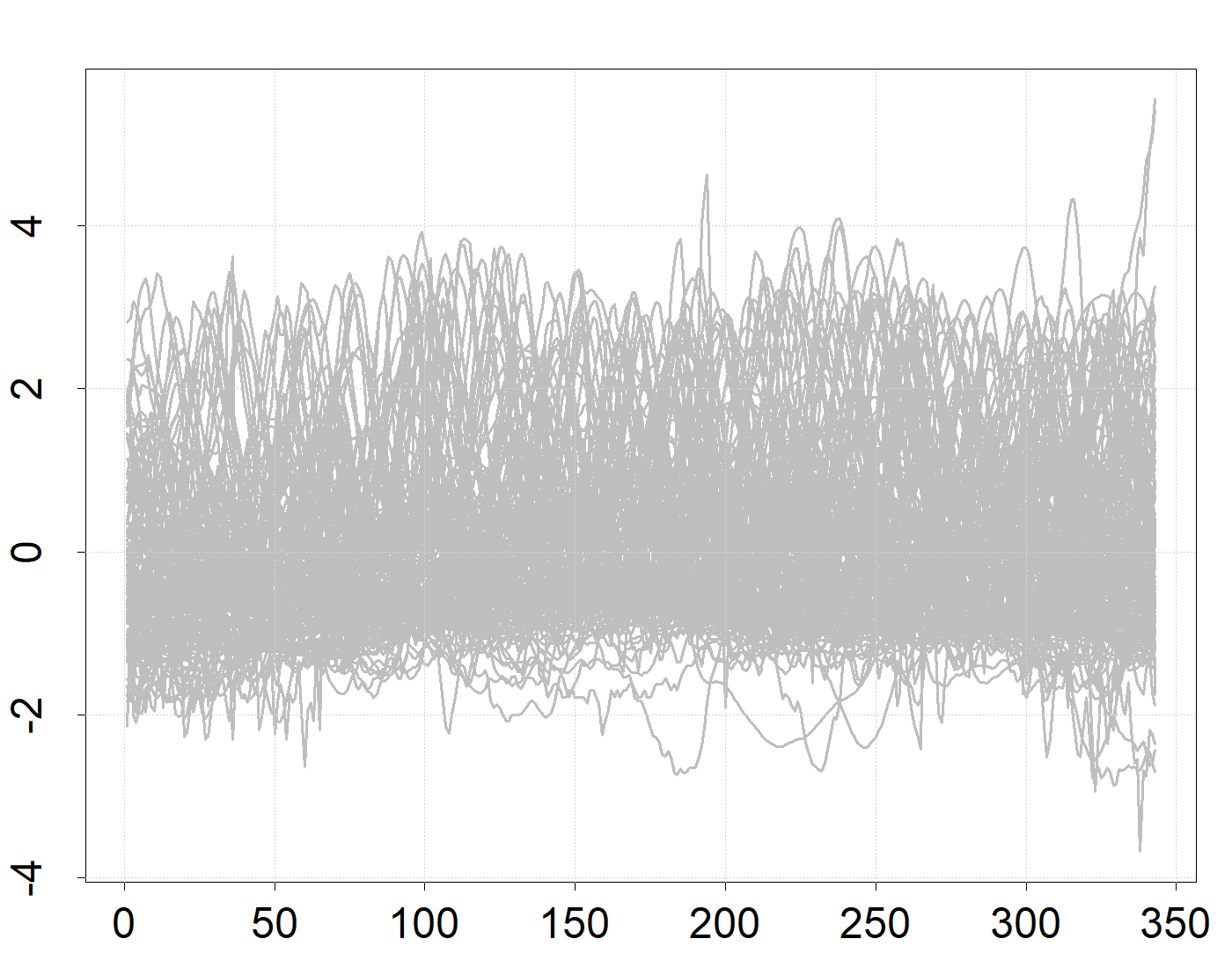}} \ 
\subfloat{\includegraphics[width = 0.495\textwidth]{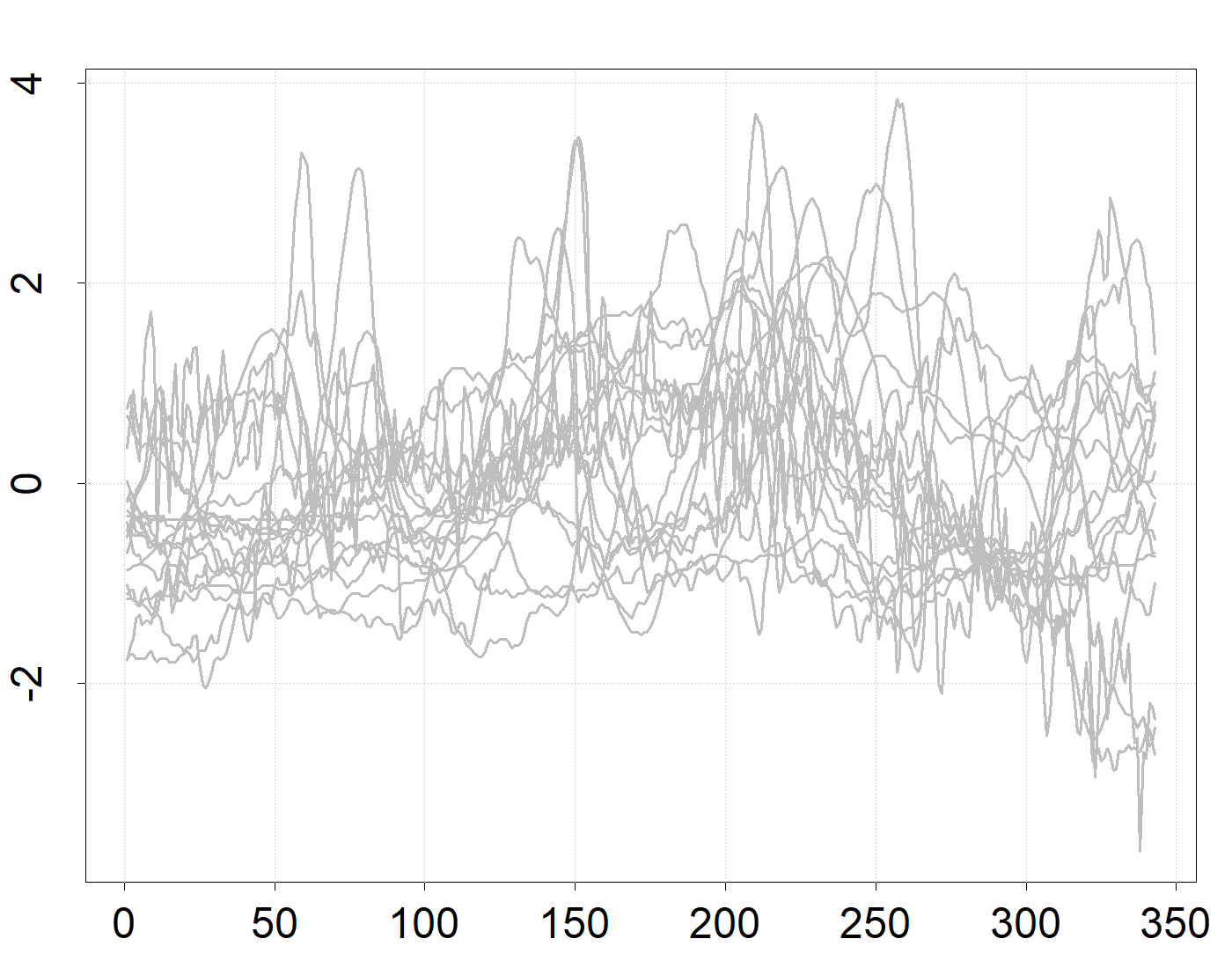}}
\caption{Left: Position of the toes of $166$ subjects in the vertical axis over a gait cycle. Right: Trajectories of observations with large Anscombe residuals according to the DPD($\widehat{\kappa}_n)$-estimates.}
\label{fig:RD1}
\end{figure}

In most real-data situations, physiological differences among individuals can be expected to lead to a number of outlying observations in the form of vertical outliers, that is, observations that do not quite follow the model. In our case, these may be either individuals that exhibit a normal walking pattern yet suffer from pain or discomfort or individuals whose walk is abnormal despite the fact that they are otherwise healthy. In order to demonstrate the effects of such observations on non-robust estimators, we have computed both the penalized B-spline based maximum likelihood estimator proposed by \citet{Cardot:2005} with the canonical logit link and the analogous adaptive density power divergence estimator.

The coefficient function estimates are depicted in the left panel of Figure~\ref{fig:RD2} in solid blue and dashed red lines respectively. The panel immediately reveals a striking difference between these two estimates, in that while the DPD($\widehat{\kappa}_n)$-estimates exhibit a number of peaks, most notably around the 100th and 300th points within the cycle, the maximum likelihood estimates form what is essentially a straight line, which indicates severe oversmoothing. As a result of this oversmoothing, the maximum likelihood estimator is unable to identify the points of the cycle that are influential in the determination of whether one's walking is normal or abnormal. In view of our prior discussion, it may be conjectured that a number of outlying observations have exerted undue influence on both the maximum likelihood estimates themselves and the selected penalty parameter, which in this case is carried out in a non-resistant manner.

In order to ascertain whether some observations are indeed influential on the ML estimates, we may look at the residuals produced by the resistant DPD($\widehat{\kappa}_n)$-estimates for which the algorithm described in Section~\ref{sec:4} selects $\widehat{\kappa}_n = 1.9$ indicating the presence of several outlying observations. To detect these observations we may look at the Anscombe residuals \citep[p. 29]{MCN:1983}, which for the Bernoulli distribution are given by
\begin{align*}
r_{A,i} = \frac{\IB(Y_i, 2/3, 2/3) -\IB(\widehat{\mu}_i, 2/3, 2/3) }{\widehat{\mu}_i^{1/6}(1-\widehat{\mu}_i)^{1/6}}, \quad (i = 1, \ldots, n),
\end{align*}
where $\IB(x,a,b) = \int_{0}^x t^{a-1}(1-t)^{b-1} dt$ and $\widehat{\mu}_i$ is $i$th fitted value, i.e., $\widehat{\mu}_i = H(\widehat{\alpha}_n+ \int_{1}^{343} X_i(t) \widehat{\beta}_n(t)  dt)$. The Anscombe residuals follow the standard Gaussian distribution much more closely than their Pearson counterparts, hence we may apply conventional outlying detection rules. In particular, we may classify an observation as an outlier if $|r_{A,i}| \geq 2$. With this rule, the DPD($\widehat{\kappa}_n$)-estimates point towards 19 outlying observations, $13$ of which have Anscombe residuals larger than $10$ and can thus be considered extreme outliers. The trajectories of the $19$ detected outlying observations are plotted in the right panel of Figure~\ref{fig:RD2}. Interestingly, $18$ of these observations are classified as normal walk which indicates that in the vast majority of cases it is individuals with underlying issues that are misclassified as walking normally and not the other way around.

As a sensitivity check, one may decide to remove the detected outlying observations from our data and recompute the estimates. Doing so results in the revised estimates presented in the right panel of Figure~\ref{fig:RD2}. The panel reveals that, while the DPD($\widehat{\kappa}_n$)-estimates have undergone minimal adjustment, the ML-estimates have markedly changed and are now completely identical with the DPD($\widehat{\kappa}_n$)-estimates. We may thus conclude that the presence of outliers have negatively impacted the non-resistant ML estimates, but not the much more resistant power density divergence estimates, which yield reliable estimates both in the presence and absence of outlying observations. Furthermore, as these estimates are much less drawn towards outlying observations, they can also be used for the purpose of outlier detection allowing us to identify influential observations that would have been otherwise missed by ML-based estimates.

\begin{figure}[H]
\centering
\subfloat{\includegraphics[width = 0.495\textwidth]{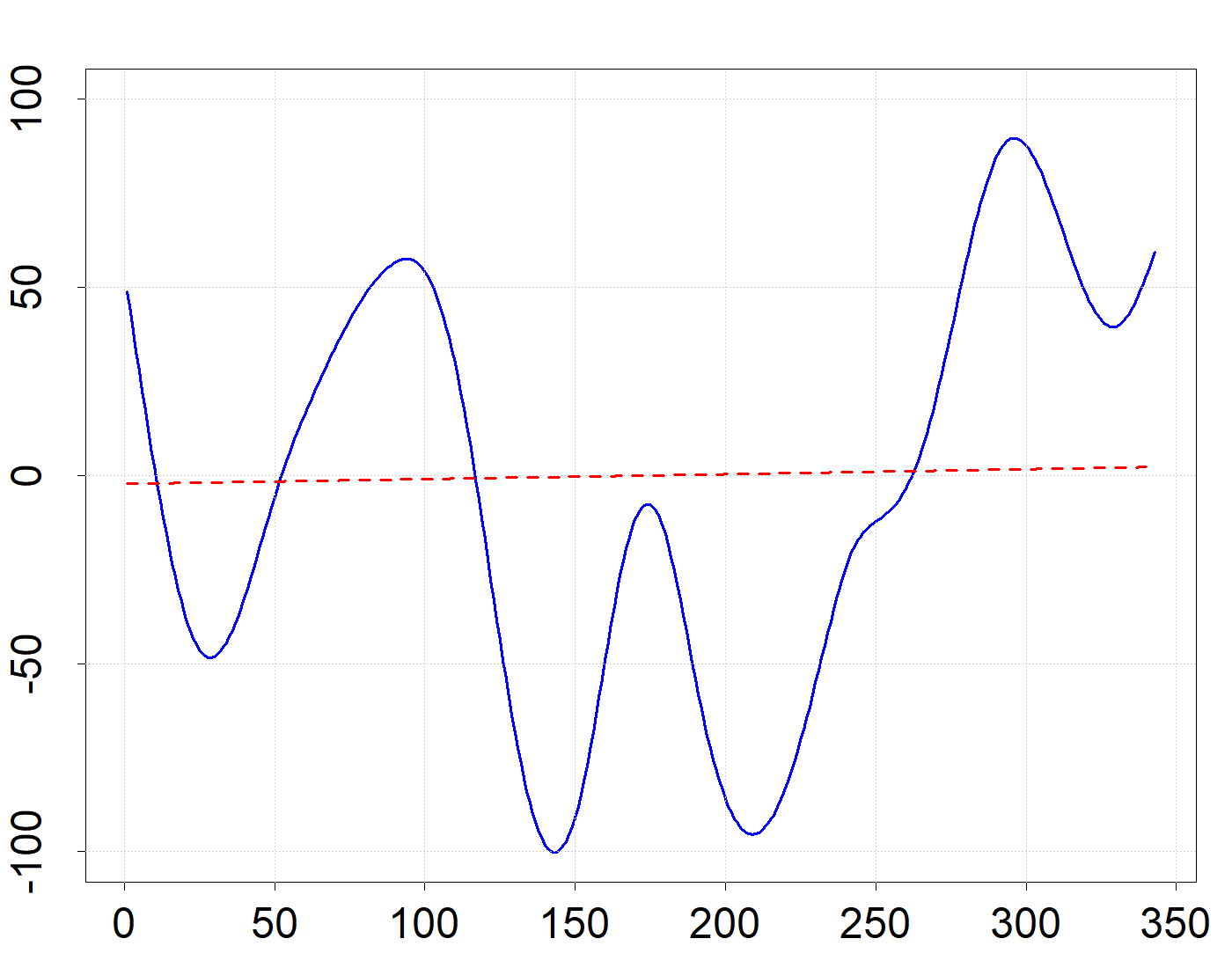}} \ 
\subfloat{\includegraphics[width = 0.495\textwidth]{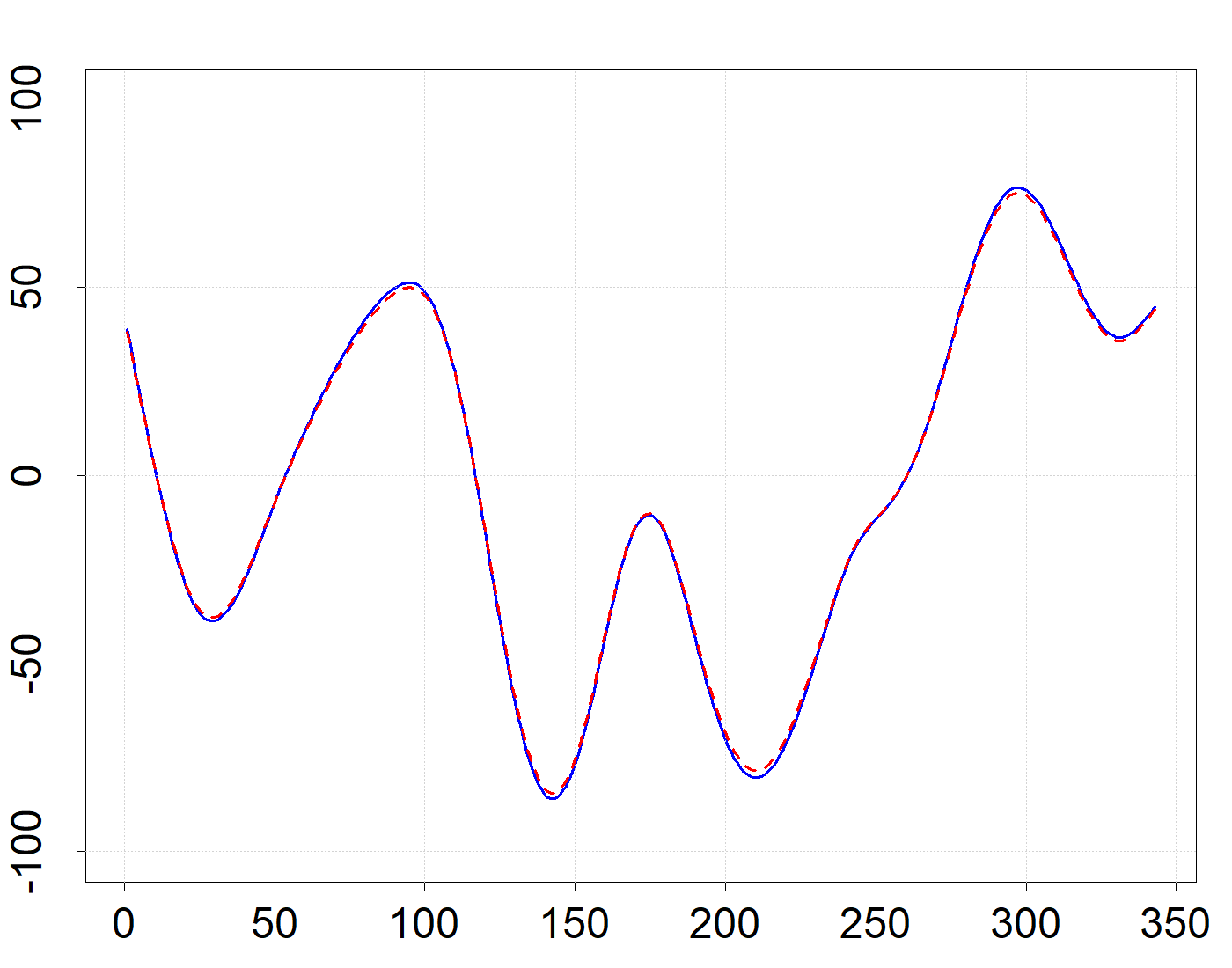}}
\caption{Left: Coefficient function estimates based on the full dataset from the DPD($\widehat{\kappa}_n$) and ML estimators in solid blue and dashed red lines respectively. Right: Coefficient function estimates obtained upon removing observations with large Anscombe residuals according to the DPD($\widehat{\kappa}_n$)-estimates.}
\label{fig:RD2}
\end{figure}

\section{Concluding remarks}
\label{sec:7}

This paper introduces a family of estimators for the functional logistic regression model that serves as a more balanced alternative to popular maximum-likelihood based estimators, whose lack of robustness makes them particularly vulnerable to outlying observations. The proposed family of estimators can achieve high efficiency in clean data and robustness against outlying observations through appropriate tuning and, what is more, we have proposed an algorithm that selects the tuning in an automatic fashion so that the applied researcher only needs to select the approximating subspace and penalty functional. The proposed estimators possess strong theoretical properties in that they are uniformly consistent and can achieve high rates of convergence with respect to the prediction error even with random tuning parameters. These properties make the proposed methodology particularly well-suited for the analysis of many complex datasets commonly encountered nowadays, such as the gait dataset of Section~\ref{sec:6}.

There is a number of interesting and practically useful extensions we aim to consider in future work, most notable of which is the generalization to other GLMs. While our methodology based on penalized density power divergence can be readily applied regardless of which distribution in the exponential family that the response follows, the theoretical properties established herein are specific to the Bernoulli distribution and do not immediately extend to other GLMs. These theoretical properties likely remain true under the right set of conditions and the investigation of such conditions is, in our opinion, a worthwhile endeavour. Our work can also be extended to more versatile distributions of the exponential family involving also a scale parameter, such as the general gamma and log-normal distributions which are frequently used to model claims in the insurance industry. Due to the frequent presence of outlying observations in such datasets, the use of robust methods is well-warranted and we are confident that penalized density power divergence will likewise lead to computationally feasible estimators with strong theoretical properties.

\section*{Acknowledgements}

The author gratefully acknowledges the support by the Research Foundation-Flanders (project 1221122N).

\section{Appendix: Proofs of the theoretical results}

\begin{lemma}[Bounds on the density power divergence]
\label{lem:1}
Let h and g denote Bernoulli densities with probabilities of "success" equal to $p_h$ and $p_g$ respectively. For every $\kappa_0 \in (0, 1]$ we have
\begin{align*}
0 \leq d_{\kappa_0}(h,f) \leq 1+ \frac{1}{\kappa_0},
\end{align*}
and the upper bound is attained if, and only if, $|p_f-p_h|=1$.

\end{lemma}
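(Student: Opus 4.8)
The plan is to treat the two bounds separately, each by an elementary argument, and then pin down the equality case by tracking which inequalities are tight. For the lower bound I would recognise the $y$-th summand of $d_{\kappa_0}(h,f)$ as, up to the factor $1/\kappa_0$, the Bregman discrepancy generated by the function $\phi(u)=u^{1+\kappa_0}$, which is convex on $[0,\infty)$ and, since $1+\kappa_0>1$, strictly so. A one-line computation gives
\begin{align*}
\frac{1}{\kappa_0}\Bigl[\phi(h(y))-\phi(f(y))-\phi'(f(y))\bigl(h(y)-f(y)\bigr)\Bigr]
=\frac{1}{\kappa_0}h^{1+\kappa_0}(y)+f^{1+\kappa_0}(y)-\Bigl(1+\tfrac{1}{\kappa_0}\Bigr)h(y)f^{\kappa_0}(y),
\end{align*}
which is exactly the term of $d_{\kappa_0}(h,f)$ indexed by $y$. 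Since $\phi\in C^{1}([0,\infty))$ (with $\phi'(0)=0$), the tangent-line inequality $\phi(a)\ge\phi(b)+\phi'(b)(a-b)$ holds for all $a,b\ge0$, so every Bregman term is nonnegative; summing over $y\in\{0,1\}$ yields $d_{\kappa_0}(h,f)\ge0$, and strict convexity gives, as a bonus, that equality holds iff $p_h=p_f$.

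For the upper bound I would write out the Bernoulli sum explicitly,
\begin{align*}
d_{\kappa_0}(h,f)=\Bigl[p_f^{1+\kappa_0}+(1-p_f)^{1+\kappa_0}\Bigr]
-\Bigl(1+\tfrac1{\kappa_0}\Bigr)\Bigl[p_h p_f^{\kappa_0}+(1-p_h)(1-p_f)^{\kappa_0}\Bigr]
+\tfrac1{\kappa_0}\Bigl[p_h^{1+\kappa_0}+(1-p_h)^{1+\kappa_0}\Bigr],
\end{align*}
and then apply two trivial estimates: the middle bracket is nonnegative (densities are nonnegative), so dropping it only enlarges the expression; and for any $p\in[0,1]$ one has $p^{1+\kappa_0}=p\cdot p^{\kappa_0}\le p$, whence $p_f^{1+\kappa_0}+(1-p_f)^{1+\kappa_0}\le1$ and likewise for $p_h$. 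Combining the three bounds gives $d_{\kappa_0}(h,f)\le1+\kappa_0^{-1}$.

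For the ``if and only if'' I would observe that equality in the last display forces all three estimates to be tight at once: $p_f^{1+\kappa_0}+(1-p_f)^{1+\kappa_0}=1$, $p_h^{1+\kappa_0}+(1-p_h)^{1+\kappa_0}=1$, and $p_h p_f^{\kappa_0}+(1-p_h)(1-p_f)^{\kappa_0}=0$. Because $\kappa_0>0$, the inequality $p^{1+\kappa_0}\le p$ is strict for $p\in(0,1)$, so the first two identities force $p_f,p_h\in\{0,1\}$; substituting this into the third (a sum of two nonnegative terms, each of which must therefore vanish) leaves only $(p_h,p_f)=(1,0)$ and $(p_h,p_f)=(0,1)$, i.e.\ $|p_f-p_h|=1$. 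Conversely, plugging either pair into the expanded formula gives $d_{\kappa_0}(h,f)=1+\kappa_0^{-1}$ directly. I do not expect any genuine obstacle here: the whole statement is elementary once $d_{\kappa_0}$ is written as a sum of two Bernoulli terms, and the only point needing a little care is the bookkeeping in the equality analysis (ensuring each inequality used is strict except precisely on the two corner configurations).
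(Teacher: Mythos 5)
Your proof is correct, and while the lower-bound part coincides in substance with the paper, the upper bound and equality analysis follow a genuinely different (and in one respect more complete) route. For the lower bound, your Bregman/tangent-line argument with $\phi(u)=u^{1+\kappa_0}$ is essentially the paper's AM--GM (Young) inequality $h(y)f^{\kappa_0}(y)\le\tfrac{1}{1+\kappa_0}h^{1+\kappa_0}(y)+\tfrac{\kappa_0}{1+\kappa_0}f^{1+\kappa_0}(y)$ in convex-duality clothing; both yield nonnegativity and the same equality case $p_h=p_f$. For the upper bound, the paper regroups $d_{\kappa_0}(h,f)=\sum_y\{f(y)-h(y)\}f^{\kappa_0}(y)+\kappa_0^{-1}\sum_y\{h^{\kappa_0}(y)-f^{\kappa_0}(y)\}h(y)$ and bounds the two sums by $1$ and $\kappa_0^{-1}$ via the (reverse) triangle inequality, whereas you keep the three Bernoulli brackets and estimate them separately ($A\le1$, $B\ge0$, $C\le1$). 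Your decomposition pays off in the ``if and only if'' part: since the three estimates enter additively, attainment of the bound forces each to be tight, and the strict inequality $p^{1+\kappa_0}<p$ for $p\in(0,1)$ then forces $p_h,p_f\in\{0,1\}$, after which the vanishing of the middle bracket leaves exactly $(p_h,p_f)\in\{(0,1),(1,0)\}$, i.e.\ $|p_f-p_h|=1$. This is a full proof of the ``only if'' direction, which the paper verifies only in the ``if'' direction (checking that $|p_f-p_h|=1$ turns its inequalities into equalities) and asserts for the converse. Both arguments are elementary and of comparable length; yours keeps a cleaner record of exactly where each inequality can be strict.
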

\begin{proof}
The lower bound follows by the arithmetic-geometric mean (AM-GM) inequality, as the densities are nonnegative and for every $y \in \{0,1\}$ we have
\begin{align*}
h(y)f^{\kappa_0}(y) = \left\{h^{1+\kappa_0}(y)\right\}^{\frac{1}{1+\kappa_0}}\left\{ f^{1+\kappa_0}(y)\right\}^{\frac{\kappa_0}{1+\kappa_0}} \leq \frac{1}{1+\kappa_0}h^{1+\kappa_0}(y) + \frac{\kappa_0}{1+\kappa_0}f^{1+\kappa_0}(y).
\end{align*}
The inequality is strict unless $h = f$ on $\{0,1\}$. Multiplying by $(\kappa_0+1)/\kappa_0 >0$ yields the result. For the upper bound, write
\begin{align*}
d_{\kappa_0}(h,f) = \sum_{y \in \{0,1\}} \{f(y) - h(y) \}f^{\kappa_0}(y) + \frac{1}{\kappa_0} \sum_{y \in \{0,1\}} \{h^{\kappa_0}(y) -f^{\kappa_0}(y) \}h(y),
\end{align*}
and observe that 
\begin{align*}
d_{\alpha_0}(h,f) & \leq \left|p_h-p_f\right|\left|(1-p_f)^{\kappa_0}-p_f^{\kappa_0} \right| + \frac{1}{\kappa_0} \sum_{y \in \{0,1\}} \left| h^{\kappa_0}(y) -f^{\kappa_0}(y) \right| h(y)
\\ & \leq   1 + \frac{1}{\kappa_0},
\end{align*}
by the reverse triangle inequality and the facts that $|p_h-p_f| \leq 1$ and $| h^{\kappa_0}(y) -f^{\kappa_0}(y) | \leq 1$. To verify that the upper bound is attainable for $|p_f-p_h| = 1$ notice that in that case it must be that either $p_f = 0$ and $p_h = 1$ or $p_f = 1$ and $p_h = 1$. Simple algebra shows that the above inequalities become in both cases equalities and the reverse implication is also true.

\end{proof}

\begin{lemma}[Fisher consistency]
\label{lem:2}
For any $\kappa_0>0$ define
\begin{align*}
M(\beta, \kappa_0) = \mathbb{E}\{ d_{\kappa_0}(f_{p_1(\beta_0)}, f_{p_1(\beta)})  \}, \quad g \in \mathcal{B}([0,1]),
\end{align*}
with $\beta_0$ denoting the true coefficient function. Under assumptions (A2) and (A5), $M(\beta, \kappa_0) \geq 0$ and $M(\beta,\kappa_0) = 0$ if, and only if, $\beta = \beta_0$.
\end{lemma}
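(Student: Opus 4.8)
The plan is to reduce the statement to the equality case of the divergence bound established inside the proof of Lemma~\ref{lem:1}, and then to exploit the injectivity of the link $H$ together with the anti-concentration part of (A5).

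First I would check that $M(\beta,\kappa_0)$ is well defined and non-negative. Under (A5), $|\langle X_1,\beta\rangle|\le \|X_1\|\,\|\beta\|\le C\|\beta\|$ almost surely, so $p_1(\beta)=H(\langle X_1,\beta\rangle)$ is a well-defined $(0,1)$-valued random variable for every $\beta\in\mathcal{B}([0,1])$ (recall $\alpha_0=0$ throughout this section). Hence $d_{\kappa_0}(f_{p_1(\beta_0)},f_{p_1(\beta)})$ is a non-negative random variable by Lemma~\ref{lem:1}, and taking expectations gives $M(\beta,\kappa_0)\ge 0$. The ``if'' direction is immediate: if $\beta=\beta_0$ then $p_1(\beta)=p_1(\beta_0)$ almost surely, so $d_{\kappa_0}(f_{p_1(\beta_0)},f_{p_1(\beta)})=0$ identically and $M(\beta_0,\kappa_0)=0$.

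For the ``only if'' direction, suppose $M(\beta,\kappa_0)=0$. Since the integrand is non-negative with zero expectation, it vanishes on an event of probability one. On that event I invoke the pointwise statement from the proof of Lemma~\ref{lem:1}, namely that the AM--GM inequality yielding $d_{\kappa_0}(h,f)\ge 0$ is strict unless $h=f$ on $\{0,1\}$; this forces $f_{p_1(\beta_0)}(y)=f_{p_1(\beta)}(y)$ for $y\in\{0,1\}$, i.e.\ $p_1(\beta_0)=p_1(\beta)$ almost surely. Because $H$ is bijective by (A2), it is in particular injective, so $\langle X_1,\beta_0-\beta\rangle=0$ almost surely. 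Finally, if $\beta_0-\beta\ne 0$ in $\mathcal{B}([0,1])$, then (A5) gives $\mathbb{P}(\langle X_1,\beta_0-\beta\rangle=0)\le c_2<1$, contradicting that this probability equals one; therefore $\beta=\beta_0$.

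I do not anticipate a genuine obstacle here, as the argument is short and each step is supplied either by Lemma~\ref{lem:1} or by assumptions (A2) and (A5). The only point that needs to be stated carefully is the passage from ``zero expectation of a non-negative random variable'' to ``equal to zero almost surely'', and then the application of the strict-inequality case of the divergence bound $\omega$-by-$\omega$ on the resulting probability-one event; everything after that is a one-line consequence of injectivity of $H$ and the non-degeneracy of the finite-dimensional projections of $X$.
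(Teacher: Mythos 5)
Your proposal is correct and uses essentially the same argument as the paper: nonnegativity and the strict-equality case of Lemma~\ref{lem:1}, injectivity of $H$ from (A2), and the anti-concentration bound $\mathbb{P}(\langle X_1,\beta_0-\beta\rangle=0)\le c_2<1$ from (A5). The only difference is presentational: the paper proves the contrapositive ($\beta\neq\beta_0$ implies $M(\beta,\kappa_0)>0$ by integrating the strictly positive divergence over the positive-probability event $\{\langle X_1,\beta-\beta_0\rangle\neq 0\}$), while you argue directly from $M(\beta,\kappa_0)=0$ to an almost-sure contradiction, which is logically equivalent.
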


\begin{proof}

Observe initially that, by Lemma~\ref{lem:1}, $M(\beta, \kappa_0) \geq 0$ for all $\beta \in \mathcal{B}([0,1])$ as expectation is a monotone operation. For $\beta \neq \beta_0$ let $A_\beta = \{\omega \in \Omega: \langle X_1(\omega),\beta-\beta_0 \rangle = 0 \}$. By the measurability of the map $X_1: \Omega \to \mathcal{L}^2([0,1])$, we have $A_{\beta} \in \mathcal{A}$ and, by (A2), $\mathbb{P}(A_\beta^c) >0$. Thus,
\begin{align*}
M(\beta, \kappa_0) & = \int_{\Omega} d_{\kappa_0}(f_{p_1(\beta_0)}, f_{p_1(\beta)}) d \mathbb{P} 
\\ & = \int_{A_\beta} d_{\kappa_0}(f_{p_1(\beta_0)}, f_{p_1(\beta)}) d \mathbb{P} +\int_{A_\beta^c} d_{\kappa_0}(f_{p_1(\beta_0)}, f_{\beta_1(g)}) d \mathbb{P}
\\ & = \int_{A_\beta^c} d_{\kappa_0}(f_{p_1(\beta_0)}, f_{p_1(\beta)}) d \mathbb{P}
\\ & >0,
\end{align*}
as, by (A2), $d_{\kappa_0}(f_{p_1(\beta_0)}, f_{p_1(\beta)})$ vanishes on $A_\beta$ and $d_{\kappa_0}(f_{p_1(\beta_0)}, f_{p_1(\beta)})>0$ on $A_\beta^c$. But, as $\mathbb{P}(A_\beta^c) >0$, integration on $A_\beta^c$ preserves the strict inequality. The proof is complete.

\end{proof}
For each fixed $\kappa_0>0$ and $\beta \in \mathcal{L}^2([0,1])$, let $M_{n}(\beta,\kappa_0)$ denote the empirical version of $M(\beta,\kappa_0)$, that is,
\begin{align}
\label{eq:A1}
M_{n}(\beta,\kappa_0) = \frac{1}{n} \sum_{i=1}^n \left[ \sum_{y \in \{0,1\}}  f^{1+\kappa_0}_{p_i(\beta)}(y)   - \left( 1 + \frac{1}{\kappa_0} \right)f_{p_i(\beta)}^{\kappa_0}(Y_i)  +\frac{1}{\kappa_0} \sum_{y \in \{0,1\}}  f_{p_i(\beta_0)}^{1+\kappa_0}(y) \right]
\end{align}
Notice that $M_{n}(\beta,\kappa_0)$ is the average of bounded i.i.d. random variables, hence, by the weak law of large numbers (WLLN), $M_{n}(\beta,\kappa_0) \xrightarrow{\mathbb{P}} M(\beta,\kappa_0)$. Lemma~\ref{lem:3} below provides a uniform law according to which $M_n(\beta, \kappa_0)-M(\beta, \kappa_0) \xrightarrow{\mathbb{P}} 0 $ uniformly over $\beta \in \Theta_{K_n}$, as $n \to \infty$.

\begin{lemma}
\label{lem:3}
Suppose that (A2) and (A4) are satisfied. Then, for every $0<\kappa_0<\infty$,
\begin{align*}
\sup_{\beta \in \Theta_{K_n}}|M_{n}(\beta,\kappa_0)-M(\beta, \kappa_0)| \xrightarrow{\mathbb{P}} 0.
\end{align*}
\end{lemma}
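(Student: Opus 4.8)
The plan is to show the uniform convergence via an empirical-process bracketing argument, exploiting the fact that although $\Theta_{K_n}$ grows with $n$, the map $\beta \mapsto M_n(\beta,\kappa_0) - M(\beta,\kappa_0)$ depends on $\beta$ only through the one-dimensional linear functional $u_i(\beta) := \langle X_i, \beta\rangle$, and that these functionals are uniformly bounded. First I would write $M_n(\beta,\kappa_0) - M(\beta,\kappa_0) = (1/n)\sum_{i=1}^n \big(\phi(u_i(\beta)) - \mathbb{E}\{\phi(u_1(\beta))\}\big)$, where, after dropping the $\beta$-free term $\kappa_0^{-1}\sum_y f^{1+\kappa_0}_{p_i(\beta_0)}(y)$ (which cancels in the difference),
\begin{align*}
\phi_\beta(x) = H(x)^{1+\kappa_0} + (1-H(x))^{1+\kappa_0} - \Big(1+\tfrac1{\kappa_0}\Big)\big(Y H(x)^{\kappa_0} + (1-Y)(1-H(x))^{\kappa_0}\big),
\end{align*}
evaluated at $x = \langle X_i, \beta\rangle$ with the appropriate $Y_i$. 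Since $\alpha_0 = 0$ is assumed and our estimator's intercept is also zero in this section, the linear predictor is exactly $\langle X_i,\beta\rangle$.

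The key structural observation is that, by (A5), $\|X\| \le C$ almost surely, so for any $\beta \in \Theta_{K_n}$ with $\|\beta\| \le R$ we have $|\langle X_i,\beta\rangle| \le CR$; thus it suffices to prove the uniform law over the (random, but eventually deterministic-containing) ball $\{\beta \in \Theta_{K_n}: \|\beta\| \le R\}$ for each fixed $R$, and then observe that the minimizer cannot escape such a ball — or, more cleanly, to prove the result with the supremum taken over $\|\beta\|_{\mathcal B} \le R_n$ for a slowly growing $R_n$ and check that this still suffices for the later consistency proof. Next I would bound the relevant function class. The map $x \mapsto H(x)^{1+\kappa_0} + (1-H(x))^{1+\kappa_0}$ and $x \mapsto H(x)^{\kappa_0}$ are Lipschitz on all of $\mathbb R$: indeed $|\tfrac{d}{dx} H(x)^{\kappa_0}| = \kappa_0 H(x)^{\kappa_0-1} H'(x) \le \kappa_0 c_0$ when $\kappa_0 \ge 1$, and for $\kappa_0 \in (0,1)$ one uses that $H^{\kappa_0-1}$ is not bounded but $H$ itself is bounded, so I would instead argue Hölder continuity of exponent $\kappa_0$, or — cleaner — use that $|a^{\kappa_0} - b^{\kappa_0}| \le |a-b|^{\kappa_0}$ for $a,b\in[0,1]$, $\kappa_0 \le 1$, combined with $|H(x)-H(y)| \le c_0|x-y|$. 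Either way, $\phi_\beta(x)$ as a function of $x$ has a uniformly controlled modulus of continuity (Lipschitz if $\kappa_0 \ge 1$, Hölder-$\kappa_0$ otherwise), uniformly in the value of $Y \in \{0,1\}$, and is uniformly bounded by a constant depending only on $\kappa_0$.

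With this in hand, the function class $\mathcal F_n = \{(x,y) \mapsto \phi_\beta(\langle x,\beta\rangle): \beta \in \Theta_{K_n}, \|\beta\|\le R\}$ is the composition of a Hölder (or Lipschitz) function with bounded modulus with the linear functionals $\beta \mapsto \langle x,\beta\rangle$ ranging over a ball in a $K_n$-dimensional space. The bracketing or covering number of this class, with respect to $L^1(\mathbb P)$ or the sup norm over the bounded range $[-CR,CR]$, is therefore at most that of a ball in $\mathbb R^{K_n}$ at a suitably adjusted radius, giving $\log N(\epsilon, \mathcal F_n, \|\cdot\|_\infty) \lesssim K_n \log(1/\epsilon)$ (with an extra harmless factor for the Hölder case). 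I would then invoke a standard finite-sample maximal inequality — e.g. a Bernstein/Hoeffding union bound over an $\epsilon$-net, or Theorem 2.14.1 / the bracketing Glivenko–Cantelli machinery from van der Vaart — to conclude that
\begin{align*}
\mathbb{E}\sup_{\beta\in\Theta_{K_n},\,\|\beta\|\le R}\big|M_n(\beta,\kappa_0) - M(\beta,\kappa_0)\big| \lesssim \sqrt{\frac{K_n \log n}{n}} \longrightarrow 0,
\end{align*}
where the convergence uses (A4), namely $K_n \asymp n^\gamma$ with $\gamma < 1$; Markov's inequality then upgrades this to convergence in probability. The last step is to remove the artificial radius bound: since $M(\beta,\kappa_0)$ is bounded (Lemma 1) and $\beta_0$ is fixed, for the purposes of this lemma — which is used downstream only to control the objective near $\widetilde\beta_{K_n}$ and $\beta_0$, both of bounded norm — one can either state the lemma for bounded balls, or note that a coercivity argument (the penalty plus the structure of $M$) confines the minimizer to a bounded set with probability tending to one. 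I expect the main obstacle to be the case $\kappa_0 \in (0,1)$, where $x \mapsto H(x)^{\kappa_0}$ fails to be Lipschitz and one must instead carry the Hölder-$\kappa_0$ modulus through the covering-number estimate; this only worsens the bound by a constant power of $\epsilon$ inside the logarithm, so it does not affect the final rate, but it does require care in setting up the entropy bound correctly.
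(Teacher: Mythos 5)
There is a genuine gap, and it lies exactly where you flag uncertainty: the restriction to a norm-bounded ball. Your entropy bound is obtained by transferring a covering of the coefficient ball $\{\beta\in\Theta_{K_n}:\|\beta\|\le R\}$ through the Lipschitz/H\"older modulus of $\phi$, and this genuinely requires the radius $R$ (the class of linear functionals $x\mapsto\langle x,\beta\rangle$ over an unbounded $\Theta_{K_n}$ has infinite covering numbers in any metric you could push through the composition). But the lemma asserts the supremum over \emph{all} of $\Theta_{K_n}$, and it is used in Lemma~\ref{lem:4} with $\beta=\widehat{\beta}_n$ \emph{before} any consistency or norm bound on the estimator is available, so the full supremum is what is actually needed. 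Your two suggested repairs do not hold up: the density power divergence loss is uniformly bounded and the penalty $\mathcal{J}$ has a nontrivial null space (e.g.\ low-order polynomials), so there is no coercivity confining the minimizer to a fixed or slowly growing ball --- indeed, exactly as for logistic regression with separated data, the minimizing $\beta$ can have arbitrarily large norm without paying any cost in either the loss or the penalty. A secondary point: you invoke (A5) for $\|X\|\le C$, whereas the lemma assumes only (A2) and (A4); the paper's argument needs no bound on $X$ because all boundedness comes from $H$ taking values in $(0,1)$.

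The paper's proof avoids the radius issue by a different device that you should adopt: $\mathcal{G}_n=\{\langle x,\beta\rangle:\beta\in\Theta_{K_n}\}$ is a vector space of dimension at most $K_n$, hence a VC class with index at most $K_n+2$ \emph{irrespective of any norm bound on $\beta$}; monotonicity of $H$ (Lemma 9.9(viii) of Kosorok) shows $H(\mathcal{G}_n)$ is VC with the same index and envelope $1$, and composing further with the monotone, bounded maps $p\mapsto p^{\kappa_0}$ (and the Lipschitz map $p\mapsto f_p(y)$) preserves the covering-number bound, giving $\log N_1(\epsilon,\mathcal{F}_n,\mathbb{P}_n)\le C(K_n+1)\log(1/\epsilon)$ by Lemma 19.15 of van der Vaart. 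Note this also disposes of your worry about $\kappa_0\in(0,1)$: one never needs Lipschitz or H\"older control of $x\mapsto x^{\kappa_0}$, only that it is monotone and bounded on $[0,1]$. With the entropy bound being $o_{\mathbb{P}}(n)$ under (A4), the uniform law follows from van de Geer's ULLN (Exercise 3.6) since the class is uniformly bounded. If you prefer to keep your bracketing/maximal-inequality framework, you must either prove the lemma over all of $\Theta_{K_n}$ by this VC route, or separately establish an a priori bound on $\|\widehat{\beta}_n\|$ holding with probability tending to one --- and the latter is not available from the ingredients at hand.
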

\begin{proof}

By the triangle inequality and the WLLN, it suffices to establish the uniform convergence for the classes of functions on $\mathcal{L}^2([0,1]) \times \{0,1\}$ given by
\begin{align*}
\mathcal{F}_n & = \left\{ f^{\kappa_0}_{H(\langle x, \beta \rangle)}(y), \beta \in \Theta_{K_n} \right\} \\
\mathcal{F}^{\prime}_n & = \left\{ \sum_{y \in \{0,1\}} f^{1+\kappa_0}_{H(\langle x, \beta \rangle)}(y), \beta \in \Theta_{K_n} \right\}.
\end{align*}
Here, as in Section~\ref{sec:2}, $f_{H(\langle x, \beta \rangle)}(y)$ denotes a Bernoulli density with probability of "success" $H(\langle x, \beta \rangle)$. 

Let $\mathbb{P}_n$ denote the empirical measure of $(X_i, Y_i), \ldots, (X_n, Y_n)$. We first show that $\mathcal{F}_n$ is a Glivenko-Cantelli class of functions \citep[p. 269]{VDV:1998}  so that 
\begin{align}
\label{eq:A2}
\sup_{\beta \in \Theta_{K_n}} \left| \int f^{\kappa_0}_{H(\langle x, \beta \rangle)}(y) d\mathbb{P}_n - \int f^{\kappa_0}_{H(\langle x, \beta \rangle)}(y) d\mathbb{P} \right| \xrightarrow{\mathbb{P}} 0,
\end{align}
To prove \eqref{eq:A2}, we develop a bound on the $\epsilon$-covering number of $\mathcal{F}_n$ in the $L_1(\mathbb{P}_n)$-norm, $N_1(\epsilon, \mathcal{F}_n, \mathbb{P}_n)$. Note that $N_1(\epsilon, \mathcal{F}_n, \mathbb{P}_n)$ is random, as it depends on the empirical measure $\mathbb{P}_n$. To bound this covering number, consider the set of real-valued functions on $\mathcal{L}^2([0,1])$ given by
\begin{align*}
\mathcal{G}_n = \left\{ \langle x, \beta \rangle, \beta \in \Theta_{K_n} \right\}.
\end{align*}
Clearly, $\mathcal{G}_n$ is at most a $K_n$-dimensional vector space of functions on $\mathcal{L}^2([0,1])$ and is therefore a Vapnik-Chervonenkis (VC) class of functions with VC index bounded above by $K_n+2$, see, e.g., \citep[pp. 275--276]{VDV:1998}. The class of functions $H(\mathcal{G}_n)$ is also VC with the same index, by the monotonicity of $H$ assumed in (A2) and Lemma 9.9 (viii) in \citep{Kos:2008}.  Furthermore, the class of functions $H(\mathcal{G}_n)$ has envelope equal to $1$, hence, by Lemma 19.15 in \citet{VDV:1998}, there exists a universal constant $C$ such that
\begin{align*}
\log(N_1(\epsilon, H(\mathcal{G}_n), \mathbb{P}_n)) \leq C (K_n+1) \log\left( \frac{1}{\epsilon} \right),
\end{align*}
for all small $\epsilon>0$. Observe next that the function $f_{p}(y)  =  p^{y}(1-p)^{1-y}, y \in \{0,1\}$, is Lipschitz with respect to $p$ with Lipschitz constant bounded above by $1$ so that the same bound holds for the set of functions on $\mathcal{L}^2([0,1]) \times \{0,1\}$ given by $\{f_{H(\langle x, \beta \rangle)}(y) , \beta \in \Theta_{K_n}\}$. To derive from this a bound on the covering number of $\mathcal{F}_n$ in the $L_1(\mathbb{P}_n)$-norm, simply note that the function $x \mapsto x^{\kappa_0}$ is increasing and bounded by $1$ for all $0 \leq x \leq 1$. Therefore, by Lemma 19.15 of \citet{VDV:1998} yet again,
\begin{align*}
\log(N_1(\epsilon, \mathcal{F}_n, \mathbb{P}_n)) \leq C (K_n+1) \log\left( \frac{1}{\epsilon} \right),
\end{align*}
for small $\epsilon>0$. By (A4), it now follows that, for such $\epsilon>0$, $\log(N_1(\epsilon, \mathcal{F})_n, \mathbb{P}_n))/n \xrightarrow{\mathbb{P}} 0$.  The uniform convergence in \eqref{eq:A2} now follows from \citet[Exercise 3.6]{van de Geer:2000} with $b_n = 2$ in that result, as the functions in $\mathcal{F}$ are uniformly bounded by $1$ for all $\beta \in \Theta_{K_n}$.

To complete the proof, we need to likewise establish that the set of functions $\mathcal{F}^{\prime}$ is Glivenko-Cantelli. For this we may consider the cases $y = 0$ and $y=1$ separately. The proof may be devised along similar lines: first bound the covering number of $H(\mathcal{G}_n)$ and then notice that the functions $x \mapsto x^{k_0}$ and $x \mapsto (1-x)^{k_0}$ are monotone and bounded for $0 \leq x \leq 1$. It follows that there exists a $C>0$ such that
\begin{align*}
\log(N_1(\epsilon, \mathcal{F}_n^{\prime }, \mathbb{P}_n)) \leq C (K_n+1) \log \left( \frac{1}{\epsilon} \right), 
\end{align*}
for all small $\epsilon>0$. Under (A4), the result follows again by an appeal to \citet[Exercise 3.6]{van de Geer:2000}.
\end{proof}

\begin{lemma}
\label{lem:4}
Suppose that assumptions (A1)--(A5) are satisfied. Then, for every $\kappa_0>0$,
\begin{align*}
M(\widehat{\beta}_n, \kappa_0) \xrightarrow{\mathbb{P}} M(\beta_0, \kappa_0).
\end{align*}

\end{lemma}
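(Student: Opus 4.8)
The plan is to run the classical ``argmin implies the criterion converges to its minimum value'' argument, with two complications to absorb: the approximating space $\Theta_{K_n}$ grows with $n$, and the criterion actually minimized uses the \emph{random} tuning constant $\widehat{\kappa}_n$ rather than a fixed one. First I would record the defining inequality. Since the last term $\kappa^{-1}n^{-1}\sum_{i=1}^n\sum_{y}f_{p_i(\beta_0)}^{1+\kappa}(y)$ of $M_n(\beta,\kappa)$ does not depend on $\beta$, the penalized criterion in \eqref{eq:4} and the map $\beta\mapsto M_n(\beta,\widehat{\kappa}_n)+\widehat{\lambda}_n\mathcal{J}(\beta)$ share the same minimizer over $\Theta_{K_n}$, namely $\widehat{\beta}_n$. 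Comparing the value at $\widehat{\beta}_n$ with the value at $\widetilde{\beta}_{K_n}$ and dropping the nonnegative term $\widehat{\lambda}_n\mathcal{J}(\widehat{\beta}_n)$ gives
\[
M_n(\widehat{\beta}_n,\widehat{\kappa}_n)\;\le\;M_n(\widetilde{\beta}_{K_n},\widehat{\kappa}_n)+\widehat{\lambda}_n\mathcal{J}(\widetilde{\beta}_{K_n}).
\]

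Next I would upgrade Lemma~\ref{lem:3}, which is stated at a fixed $\kappa_0$, so as to accommodate the random $\widehat{\kappa}_n$. By (A1) the event $\{\widehat{\kappa}_n\in I\}$ with $I:=[\kappa_0/2,\,3\kappa_0/2]\subset(0,\infty)$ has probability tending to one; on $I$ the map $\kappa\mapsto 1/\kappa$ is Lipschitz, and $\kappa\mapsto p^{\kappa}$, $\kappa\mapsto p^{1+\kappa}$ together with their $(1-p)$-analogues are Lipschitz \emph{uniformly in} $p\in[0,1]$, since $\sup_{p\in[0,1],\,\kappa\in I}p^{\kappa}|\log p|<\infty$ (the one delicate point being that $p^{\kappa}|\log p|$ stays bounded as $p\downarrow 0$ for $\kappa>0$). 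Hence there is a deterministic $L<\infty$ with $\sup_{\beta\in\Theta_{K_n}}|M_n(\beta,\widehat{\kappa}_n)-M_n(\beta,\kappa_0)|\le L\,|\widehat{\kappa}_n-\kappa_0|$ on that event, which is $o_{\mathbb{P}}(1)$; combined with Lemma~\ref{lem:3} applied at the fixed $\kappa_0$, this yields $\sup_{\beta\in\Theta_{K_n}}|M_n(\beta,\widehat{\kappa}_n)-M(\beta,\kappa_0)|\xrightarrow{\mathbb{P}}0$.

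Evaluating this uniform bound at $\widehat{\beta}_n$ and at $\widetilde{\beta}_{K_n}$ turns the defining inequality into $M(\widehat{\beta}_n,\kappa_0)\le M(\widetilde{\beta}_{K_n},\kappa_0)+\widehat{\lambda}_n\mathcal{J}(\widetilde{\beta}_{K_n})+o_{\mathbb{P}}(1)$. Here $\widehat{\lambda}_n\mathcal{J}(\widetilde{\beta}_{K_n})=o_{\mathbb{P}}(1)$ by (A4), and $M(\widetilde{\beta}_{K_n},\kappa_0)\to M(\beta_0,\kappa_0)=0$: by (A5), $|\langle X_1,\widetilde{\beta}_{K_n}-\beta_0\rangle|\le C\|\widetilde{\beta}_{K_n}-\beta_0\|\to 0$ almost surely, so $p_1(\widetilde{\beta}_{K_n})\to p_1(\beta_0)$ almost surely by continuity of $H$ in (A2), whence $d_{\kappa_0}(f_{p_1(\beta_0)},f_{p_1(\widetilde{\beta}_{K_n})})\to 0$ almost surely; as this quantity is bounded by $1+\kappa_0^{-1}$ (Lemma~\ref{lem:1}), dominated convergence gives the limit (equivalently, one invokes the continuity of $M(\cdot,\kappa_0)$ noted in Section~\ref{sec:2}). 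Thus $M(\widehat{\beta}_n,\kappa_0)\le o_{\mathbb{P}}(1)$, and since $M(\widehat{\beta}_n,\kappa_0)\ge 0$ by Lemma~\ref{lem:2} we conclude $M(\widehat{\beta}_n,\kappa_0)\xrightarrow{\mathbb{P}}0=M(\beta_0,\kappa_0)$. To pass from the $\kappa_0$ of (A1) to an arbitrary $\kappa>0$, observe that $M(\widehat{\beta}_n,\kappa_0)\to 0$ forces $\mu_X(\{x:|H(\langle x,\beta_0\rangle)-H(\langle x,\widehat{\beta}_n\rangle)|>\delta\})\xrightarrow{\mathbb{P}}0$ for every $\delta>0$, where $\mu_X$ is the law of $X$, because $(a,b)\mapsto d_{\kappa_0}(f_a,f_b)$ is continuous, nonnegative and vanishes only on the diagonal of $[0,1]^2$; the boundedness and uniform continuity of $(a,b)\mapsto d_{\kappa}(f_a,f_b)$ then give $M(\widehat{\beta}_n,\kappa)\to 0$ as well.

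The main obstacle is the second step: Lemma~\ref{lem:3} supplies a uniform law only for a fixed tuning constant, whereas here both the sieve $\Theta_{K_n}$ and the constant $\widehat{\kappa}_n$ move with $n$. The resolution is to decouple the two effects — the $\kappa$-direction is controlled by a purely deterministic, uniform-in-$p$ Lipschitz estimate, leaving the genuinely stochastic, $\beta$-uniform part to be handled at the fixed $\kappa_0$ exactly as in Lemma~\ref{lem:3}. Everything else is bookkeeping with the defining inequality, assumption (A4), and Lemmas~\ref{lem:1} and \ref{lem:2}.
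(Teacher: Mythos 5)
Your proof is correct and follows essentially the same route as the paper: the defining inequality for $\widehat{\beta}_n$, the uniform law of Lemma~\ref{lem:3} at the fixed $\kappa_0$, a mean-value/Lipschitz-in-$\kappa$ bound uniform in $p$ to absorb $\widehat{\kappa}_n-\kappa_0$, and (A4) to kill $\widehat{\lambda}_n\mathcal{J}(\widetilde{\beta}_{K_n})$ and the approximation term. The only (harmless) differences are that you control $M(\widetilde{\beta}_{K_n},\kappa_0)\to M(\beta_0,\kappa_0)$ at the population level by dominated convergence rather than via the empirical Lipschitz bound plus the WLLN at $\beta_0$, and your final paragraph extending the conclusion to an arbitrary exponent $\kappa>0$ is an extra step the paper does not carry out.
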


\begin{proof}
By Lemma~\ref{lem:1}, $\beta_0$ uniquely minimizes $M(\beta, \kappa_0)$ over all $\beta \in \mathcal{B}([0,1])$. Since $\widehat{\beta}_n \in \mathcal{B}([0,1]) \subset \mathcal{L}^2([0,1])$ for all $n \in \mathbbm{N}$, we clearly have
\begin{align}
\label{eq:A4}
0 < M(\widehat{\beta}_n, \kappa_0) - M(\beta_0, \kappa_0) = I + II + III,
\end{align}
where
\begin{align*}
I = M(\widehat{\beta}_n, \kappa_0) - M_n(\widehat{\beta}_n, \kappa_0),  \ II = M_n(\widehat{\beta}_n, \kappa_0) - M_n(\widehat{\beta}_n, \widehat{\kappa}_n), \ III = M_n(\widehat{\beta}_n, \widehat{\kappa}_n)-M(g_0, \kappa_0).
\end{align*}
Our proof consists of showing that each one of the terms in RHS of \eqref{eq:A4} converge to $0$ in probability. Starting with $I$, by definition of $\widehat{\beta}_n$ we have $\widehat{\beta}_n \in \Theta_{K_n}$, whence, by Lemma~\ref{lem:2},
\begin{align*}
\left|I\right| \leq \sup_{\beta \in \Theta_{K_n}}|M_{n}(\beta,\kappa_0)-M(\beta,\kappa_0)| \xrightarrow{\mathbb{P}} 0.
\end{align*}
To establish the convergence of $II$ to $0$, notice that, by (A1), we can assume that $\widehat{\kappa}_n \in (\kappa_0/2, 3\kappa_0/2)$ for all large $n$ with high probability. By the mean-value theorem, for every $x \in \mathcal{L}^2([0,1])$ and $y \in \{0,1\}$ there exists a  $\widetilde{\kappa}_n = \widetilde{\kappa}_n(x, y)$ such that $|\widetilde{\kappa}_n - \kappa_0| \leq |\widehat{\kappa}_n - \kappa_0| < \kappa_0/2$ and
\begin{align*}
f^{\widehat{\kappa}_n}_{H(\langle x, \widehat{\beta}_n \rangle)}(y) - f^{\kappa_0}_{H(\langle x, \widehat{\beta}_n \rangle)}(y) = (\widehat{\kappa}_n - \kappa_0) f^{\widetilde{\kappa}_n}_{H(\langle x, \widehat{\beta}_n \rangle) }(y)\log \left( f_{H(\langle x, \widehat{\beta}_n \rangle)}(y)\right).
\end{align*}
Now, 
\begin{align*}
\left|f^{\widetilde{\kappa}_n}_{H(\langle x, \widehat{\beta}_n \rangle) }(y)\log \left( f_{H(\langle x, \widehat{\beta}_n \rangle)}(y)\right) \right| & \leq \sup_{|\kappa-\kappa_0|<\kappa_0/2,\ p \in [0,1]}\left|f^{\kappa}_{p}(y)\log \left( f_{p}(y)\right) \right|
\\ & \leq \sup_{|\kappa-\kappa_0|<\kappa_0/2,\ p \in [0,1]} (1-p)^{\kappa} \left|\log(1-p)\right| + \sup_{|\kappa-\kappa_0|<\kappa_0/2,\ p \in [0,1]} p^{\kappa} \left|\log(p)\right|
\\ & \leq \sup_{\ p \in [0,1]} (1-p)^{\kappa_0/2} \left|\log(1-p)\right| +  \sup_{\ p \in [0,1]} p^{3\kappa_0/2} \left|\log(p)\right|
\\ & \leq M,
\end{align*}
for some finite $D = D(\kappa_0)$. Applying this together with the facts that $f_p(y) \leq 1$ and $\sum_{y \in\{0,1\}} f_p(y) = 1$ for every $p \in [0,1]$ yields
\begin{align*}
|II| & \leq \frac{1}{n}\sum_{i=1}^n \left[ \sum_{y \in \{0,1\}} \left| f_{H(\langle x, \widehat{\beta}_n \rangle)}^{\widehat{\kappa}_n}(y)- f_{H(\langle x, \widehat{\beta}_n \rangle)}^{\kappa_0}(y) \right| f_{H(\langle x, \widehat{\beta}_n \rangle)}(y) + \frac{|\widehat{\kappa}_n - \kappa_0|}{\kappa_0 \widehat{\kappa}_n} f_{H(\langle X_i, \widehat{\beta}_n \rangle)}^{\widehat{\kappa}_n}(Y_i)  \right]
\\ & \quad + \left(1+ \frac{1}{\widehat{\kappa}_n} \right)\frac{1}{n} \sum_{i=1}^n  \left| f_{H(\langle X_i, \widehat{\beta}_n \rangle)}^{\widehat{\kappa}_n}(Y_i)- f_{H(\langle X_i, \widehat{\beta}_n \rangle)}^{\kappa_0}(Y_i) \right|
\\ & \leq D  \left|\widehat{\kappa}_n - \kappa_0 \right| +   \frac{\left|\widehat{\kappa}_n - \kappa_0 \right|}{(\kappa_0-\delta)\kappa_0} + \left(1+ \frac{1}{\kappa_0-\delta}\right) D \left|\widehat{\kappa}_n - \kappa_0 \right|,
\end{align*}
for all large $n$ with high probability. Conclude that $II \xrightarrow{\mathbb{P}} 0 $.

To complete the proof we now show that $III \xrightarrow{\mathbb{P}} 0$. To see this, recall first that $\widehat{\beta}_n$ minimizes $M_n(\beta, \widehat{\kappa}_n) + \widehat{\lambda}_n \mathcal{J}(\beta)$ over $\beta \in \Theta_{K_n}$ and the abstract approximation to $\beta_0$ in $\Theta_{K_n}$, $\widetilde{\beta}_{K_n}$, satisfies $\widetilde{\beta}_{K_n} \in \Theta_{K_n} $. Hence, by the non-negativity of the map $ \beta \mapsto \widehat{\lambda}_n \mathcal{J}(\beta)$ and (A4), we obtain
\begin{align}
\label{eq:A5}
III &   \leq M_n(\widetilde{\beta}_{K_n}, \widehat{\kappa}_n) + \widehat{\lambda}_n \mathcal{J}(\widetilde{\beta}_{K_n}) - M(\beta_0, \kappa_0) \nonumber
\\ & = M_n(\widetilde{\beta}_{K_n}, \widehat{\kappa}_n) - M(\beta_0, \kappa_0) + o_\mathbb{P}(1) \nonumber
\\ & = \left\{M_n(\widetilde{\beta}_{K_n}, \widehat{\kappa}_n) - M_n(\beta_0, \kappa_0) \right\} + \left\{M_n(\beta_0, \kappa_0)  - M(\beta_0, \kappa_0) \right\} + o_\mathbb{P}(1).
\end{align}
By the WLLN, $M_n(\beta_0, \kappa_0) \xrightarrow{\mathbb{P}} M(\beta_0, \kappa_0)$. To show the convergence of the first term in the RHS of \eqref{eq:A5} to $0$ notice that, by the triangle inequality,
\begin{align}
\label{eq:A6}
\left|M_n(\widetilde{\beta}_{K_n}, \widehat{\kappa}_n) - M_n(\beta_0, \kappa_0)\right| \leq \left|M_n(\widetilde{\beta}_{K_n}, \widehat{\kappa}_n) - M_n(\widetilde{\beta}_{K_n}, \kappa_0)\right| + \left| M_n(\widetilde{\beta}_{K_n}, \kappa_0) -M_n(\beta_0, \kappa_0) \right|
\end{align}
Our previous argument employed in the treatment of $II$ immediately shows that the first term in the RHS of \eqref{eq:A6} is $o_\mathbb{P}(1)$, as $n \to \infty$. Furthermore, for every fixed $\kappa_0>0$, the mapping $[0,1] \to [0,1]: p \mapsto f_{p}^{\kappa_0}(y), y \in \{0,1\},$ is Lipschitz. Hence, there exists a finite $D^{\prime} = D^{\prime}(\kappa_0)$ such that
\begin{align*}
\left|f^{\kappa_0}_{H(\langle X_i,\widetilde{\beta}_{K_n} \rangle)}(y) - f^{\kappa_0}_{H(\langle X_i,\beta_0 \rangle)}(y) \right| \leq D^{\prime}(\kappa_0) \left|  H(\langle X_i,\widetilde{\beta}_{K_n} \rangle) - H(\langle X_i,\beta_0 \rangle) \right|.
\end{align*}
With the mean-value theorem, (A2) and (A5), we find
\begin{align*}
\left|f^{\kappa_0}_{H(\langle X_i,\widetilde{\beta}_{K_n} \rangle)}(y) - f^{\kappa_0}_{H(\langle X_i,\beta_0 \rangle)}(y) \right| &\leq  C D^{\prime}(\kappa_0)  \sup_{x \in \mathbbm{R}} H^{\prime}(x)   \left\|\widetilde{\beta}_{K_n} - \beta_0 \right\|,
\end{align*}
which leads to
\begin{align*}
\left|M_n(\widetilde{\beta}_{K_n}, \kappa_0) - M_n(\beta_0, \kappa_0) \right| \leq c_0 \left\|\widetilde{\beta}_{K_n} - \beta_0 \right\|,
\end{align*}
for some $c_0>0$. By (A4), $\| \widetilde{\beta}_{K_n} - \beta_0 \| \to 0$ as $K_n \to \infty$ so that $M_n(\widetilde{\beta}_{K_n}, \kappa_0) - M_n(\beta_0, \kappa_0) \xrightarrow{\mathbb{P}} 0$. We have thus shown that the RHS of \eqref{eq:A6} converges to zero in probability. Taking into account the convergence of $I$ and $II$ to $0$, we see that
\begin{align*}
o_\mathbb{P}(1)< III \leq o_\mathbb{P}(1).
\end{align*}
Consequently, $III = o_\mathbb{P}(1)$ and the proof is complete.
\end{proof}

\begin{proof}[Proof of Theorem 1]

The proof relies on Lemmas ~\ref{lem:2} and \ref{lem:4} as well as on  Lemma 14.3 and Theorem 2.12 of \citet{Kos:2008}. First, notice that, by Lemma~\ref{lem:2}, $M(\beta,\kappa_0)$ possesses a unique minimum at $\beta_0$ and by the continuity of the link function $H$, Lemma~\ref{lem:1}, (A5) and dominated convergence $M(\beta, \kappa_0)$ is continuous on $\mathcal{B}([0,1])$. By Lemma 14.3 of \citep{Kos:2008} it follows that for any sequence $\beta_k \in \mathcal{B}([0,1])$, $\lim_{k \to \infty} M(\beta_k, \kappa_0) = M(\beta_0,\kappa_0)$ implies $\|\beta_k-\beta_0\|_{\mathcal{B}} \to 0$. By exercise 14.6.2 of \citep{Kos:2008} there exists an increasing function cadlag (right continuous with left limits) function $f:[0,\infty] \to [0,\infty]$ that satisfies $f(0) = 0$, is continuous at $0$ and $\|\beta-\beta_0\|_{\mathcal{B}} \leq f( | M(\beta, \kappa_0) - M(\beta_0, \kappa_0)|)$  for all $\beta \in \mathcal{B}([0,1])$. Noting that our estimator $\widehat{\beta}_n \in \mathcal{B}([0,1])$ for all $n \in \mathbbm{N}$, we  find
\begin{align}
\label{eq:A7}
\left\|\widehat{\beta}_n - \beta_0 \right\|_{\mathcal{B}} \leq f\left( \left| M(\widehat{\beta}_n, \kappa_0) - M(\beta_0, \kappa_0)\right|\right).
\end{align}
By Lemma~\ref{lem:4}, $|M(\widehat{\beta}_n, \kappa_0)-  M(\beta_0, \kappa_0)| \xrightarrow{\mathbb{P}} 0$. Therefore, by the continuity of $f$ at $0$ and the fact that $f(0) = 0$, \eqref{eq:A7} entails that  $\|\widehat{\beta}_n - \beta_0\|_{\mathcal{B}} \xrightarrow{\mathbb{P}} 0$, as was to be proved. 
\end{proof}

We now introduce some useful notation that will be useful in the proof of Theorem~\ref{thm:2}. Let $\mathcal{G}$ denote a class of real-valued functions on $\mathcal{L}^2([0,1]) \times \mathbb{R}$. For $g \in \mathcal{G}$ we define
\begin{align*}
\|g\|_{\infty} = \sup_{x \in \mathcal{L}^2([0,1]), y \in \mathbb{R}}|g(x,y)|.
\end{align*}
The covering number in this uniform metric, $N_{\infty}(\epsilon, \mathcal{G})$, is defined as the smallest value of $N \in \mathbb{N}$ such that there exists a sequence $\{g_j\}_{j=1}^N$ with the property that
\begin{align*}
\sup_{g \in \mathcal{G}} \min_{j=1, \ldots, N} \|g-g_j\|_{\infty} \leq \epsilon. 
\end{align*}
The corresponding entropy, $H_{\infty}(\epsilon, \mathcal{G})$, is defined as the logarithm of the covering number, i.e.,  $H_{\infty}(\epsilon, \mathcal{G})= \log (N_{\infty}(\epsilon, \mathcal{G})) $. 

For a probability measure $\mathbb{P}$ we also define the bracketing number in the $\mathcal{L}^2(\mathbb{P})$-metric, $N_B(\epsilon, \mathcal{G}, \mathbb{P})$, as the smallest value of $N \in \mathbb{N}$ for which there exist $N$ pairs of functions $\{[g_j^L, g_j^U] \}$ such that $\|g_j^U-g_j^L\|_{\mathcal{L}^2(\mathbb{P})}\leq \epsilon $ for all $j=1, \ldots, N,$ and such that for each $g \in \mathcal{G}$, there is a $j=j(g) \in \{1, \ldots, N\}$ such that
\begin{align*}
g_j^L(x,y) \leq g(x,y) \leq g_j^U(x,y).
\end{align*}
The corresponding bracketing entropy, $H_B(\epsilon, \mathcal{G}, \mathbb{P})$, is defined as the logarithm of the bracketing number, i.e.,  $H_B(\epsilon, \mathcal{G}, \mathbb{P}) = \log  (N_B(\epsilon, \mathcal{G}, \mathbb{P}))$. 

For convenience throughout the rest of our proofs we use $c_0$ to denote generic positive constants whenever their value is not important for our argument. Thus, $c_0$ may change from appearance to appearance.

\begin{lemma}[Bracketing entropy]
\label{lem:5}
For $\beta \in \Theta_{K_n}$ and $\kappa \in (\kappa_0-\delta_2, \kappa_0+\delta_2)$ for $\delta_2>0$ such that $\kappa_0-\delta_2>0$ define the real-valued function $z: \mathcal{L}^2([0,1]) \times \{0,1\} \to \mathbbm{R}$,
\begin{align*}
z_{\beta,\kappa}(x,y) = \sum_{y \in \{0,1\}} \left\{f_{H(\langle x,\widetilde{\beta}_{K_n} \rangle )}^{1+\alpha}(y) - f_{H(\langle x,\beta \rangle )}^{1+\kappa}(y)\right\} - \left(1+\frac{1}{\kappa}  \right) \left\{f_{H(\langle x,\widetilde{\beta}_{K_n} \rangle )}^{\kappa}(y) - f_{H(\langle x,\beta \rangle )}^{\kappa}(y) \right\} ,
\end{align*} 
and the class of functions
\begin{align*}
\mathcal{Z}_{n, \delta_1, \delta_2} = \left\{z_{\beta,\kappa}(x,y), \beta \in \Theta_{K_n}, \|\beta-\widetilde{\beta}_{K_n} \| \leq \delta_1, |\kappa-\kappa_0| <\delta_2  \right\}.
\end{align*}
Let $\mathbb{P}_{X,Y}$ denote the probability measure (distribution) induced by the random variables $(X,Y) \in \mathcal{L}^2([0,1]) \times \{0,1\}$. Then, for each sufficiently small $\delta_1>0$ and $\delta_2>0$, there exists a positive constant $A = A(\delta_1, \delta_2)$ such that	for all $n \geq n_0$ we have
\begin{align*}
H_{B}\left(\epsilon, \mathcal{Z}_{n, \delta_1, \delta_2}, \mathbb{P}_{X,Y}\right) \leq 3 K_n \log \left(1 + \frac{A}{\epsilon} \right), \quad \epsilon>0.
\end{align*}
\end{lemma}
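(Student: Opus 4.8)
The plan is to reduce the bracketing entropy bound to a covering-number bound for the $(K_n+1)$-dimensional parameter set $\{(\beta,\kappa)\}$, exploiting that $(x,y)\mapsto z_{\beta,\kappa}(x,y)$ depends on the parameters in a uniformly Lipschitz way and that every $z_{\beta,\kappa}$ is uniformly bounded. Indeed, once we have a sup-norm $\epsilon$-net $\{z_{\beta_j,\kappa_j}\}_{j\le N}$ of $\mathcal{Z}_{n,\delta_1,\delta_2}$, the $N$ pairs $[z_{\beta_j,\kappa_j}-\epsilon,\, z_{\beta_j,\kappa_j}+\epsilon]$ are brackets of $\mathcal{L}^2(\mathbb{P}_{X,Y})$-width $2\epsilon$, so $N_B(2\epsilon,\mathcal{Z}_{n,\delta_1,\delta_2},\mathbb{P}_{X,Y})\le N_\infty(\epsilon,\mathcal{Z}_{n,\delta_1,\delta_2})$ and it suffices to bound the latter.

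The first step is to pin down the ranges of the quantities entering $z_{\beta,\kappa}$. By (A4), $\|\widetilde{\beta}_{K_n}-\beta_0\|\to0$, so for $n\ge n_0$ we have $\|\widetilde{\beta}_{K_n}\|\le\|\beta_0\|+1$ and hence $\|\beta\|\le R:=\|\beta_0\|+1+\delta_1$ for every admissible $\beta$. By (A5), on the probability-one event $\{\|X\|\le C\}$ we get $|\langle X,\beta\rangle|\le CR$ and $|\langle X,\widetilde{\beta}_{K_n}\rangle|\le C(\|\beta_0\|+1)$ simultaneously for all admissible $\beta$, so that $p_\beta:=H(\langle X,\beta\rangle)$ and $p_{\widetilde{\beta}_{K_n}}:=H(\langle X,\widetilde{\beta}_{K_n}\rangle)$ lie a.s.\ in the compact interval $J:=[H(-CR),H(CR)]$, which by (A2) is a strict subinterval of $(0,1)$, i.e.\ bounded away from $0$ and $1$. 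This is the crucial observation: although $p\mapsto p^{\kappa}$ has unbounded derivative near $p=0$ when $\kappa_0-\delta_2<1$, on $J$ it is genuinely Lipschitz, with constant depending only on $\kappa_0-\delta_2>0$ and $\min J>0$.

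Next I would establish the uniform Lipschitz property. On $J\times[\kappa_0-\delta_2,\kappa_0+\delta_2]$ the maps $(p,\kappa)\mapsto f_p^{\kappa}(y)$ and $(p,\kappa)\mapsto f_p^{1+\kappa}(y)$, $y\in\{0,1\}$, are Lipschitz in $p$ (exponents $\ge\kappa_0-\delta_2>0$, $p$ bounded away from $0$ and $1$) and in $\kappa$ (write $f_p^{\kappa}=e^{\kappa\log f_p}$ with $|\log f_p|$ bounded on $J$ and $f_p^{\kappa}\le1$); moreover $\kappa\mapsto1+\kappa^{-1}$ is Lipschitz there since $\kappa_0-\delta_2>0$, and all factors are bounded. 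Combining these with $|p_{\beta_1}-p_{\beta_2}|\le c_0|\langle x,\beta_1-\beta_2\rangle|\le c_0C\|\beta_1-\beta_2\|$, which follows from (A2) and (A5), a routine product-rule bookkeeping yields a constant $L=L(\delta_1,\delta_2)$ with
\[
\bigl|z_{\beta_1,\kappa_1}(x,y)-z_{\beta_2,\kappa_2}(x,y)\bigr|\le L\bigl(\|\beta_1-\beta_2\|+|\kappa_1-\kappa_2|\bigr)
\]
for all $(x,y)$ in the support of $\mathbb{P}_{X,Y}$ and all admissible $(\beta_i,\kappa_i)$.

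Finally I would cover the parameter set $\Xi_n=\{(\beta,\kappa):\beta\in\Theta_{K_n},\,\|\beta-\widetilde{\beta}_{K_n}\|\le\delta_1,\,|\kappa-\kappa_0|\le\delta_2\}$ under the metric $\|\beta_1-\beta_2\|+|\kappa_1-\kappa_2|$. Since $\Theta_{K_n}$ is a $K_n$-dimensional normed space, the ball $\{\|\beta-\widetilde{\beta}_{K_n}\|\le\delta_1\}$ admits an $\eta$-net of cardinality at most $(1+2\delta_1/\eta)^{K_n}$ by the usual volumetric argument, while $[\kappa_0-\delta_2,\kappa_0+\delta_2]$ admits an $\eta$-net of cardinality at most $1+2\delta_2/\eta$; their product covers $\Xi_n$ with at most $(1+2\delta_1/\eta)^{K_n}(1+2\delta_2/\eta)$ balls of radius $2\eta$. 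Choosing $\eta=\epsilon/(2L)$ and invoking the Lipschitz bound turns such a net into a sup-norm $\epsilon$-net of $\mathcal{Z}_{n,\delta_1,\delta_2}$, so that, after rescaling $\epsilon$, absorbing $L$ and the $\delta_i$ into one constant $A=A(\delta_1,\delta_2)$, and using $K_n+1\le2K_n$,
\[
H_B\bigl(\epsilon,\mathcal{Z}_{n,\delta_1,\delta_2},\mathbb{P}_{X,Y}\bigr)\le(K_n+1)\log\!\Bigl(1+\tfrac{A}{\epsilon}\Bigr)\le3K_n\log\!\Bigl(1+\tfrac{A}{\epsilon}\Bigr),
\]
which is the claimed estimate. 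The only genuine obstacle is the range-control step: without confining $p_\beta$ to a compact subinterval of $(0,1)$, the powers $p^{\kappa}$ with $\kappa<1$ would fail to be Lipschitz and one would obtain merely a Hölder modulus, degrading the entropy bound; once that is in place the remainder is bookkeeping and a standard covering-number count.
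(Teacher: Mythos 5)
Your proposal is correct and follows essentially the same route as the paper's proof: reduce the bracketing entropy to a sup-norm covering bound, show each success probability $H(\langle x,\beta\rangle)$ stays in a compact subinterval of $(0,1)$ so that $z_{\beta,\kappa}$ is uniformly Lipschitz in $(\beta,\kappa)$, and then count an $\eta$-net of the $K_n$-dimensional $\beta$-ball times the $\kappa$-interval. The only cosmetic differences are that the paper cites van de Geer's Lemma 2.1 and Lemma 2.5/Corollary 2.6 where you construct the brackets and the volumetric net directly, and it controls the range of $H(\langle x,\beta\rangle)$ by comparison with $H(\langle x,\beta_0\rangle)$ rather than via your direct bound $|\langle x,\beta\rangle|\leq CR$.
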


\begin{proof}

By Lemma 2.1 of \citet{van de Geer:2000}, we have
\begin{align*}
H_{B}\left(\epsilon, \mathcal{Z}_{n, \delta_1, \delta_2}, \mathbb{P}_{X,Y}\right) \leq H_{\infty}\left(\epsilon/2, \mathcal{Z}_{n, \delta_1, \delta_2} \right), \quad \epsilon>0.
\end{align*}
Hence it suffices to bound the uniform entropy. By (A5), we may assume without loss of generality that $\|x\| \leq C$. For any two functions in $\mathcal{Z}_{n,\delta_1,\delta_2}$, $z_{\beta_1, \kappa_1}(x,y)$ and $z_{\beta_2, \kappa_2}(x,y)$, we have
\begin{align*}
\left|z_{\beta_1, \kappa_1}(x,y) - z_{\beta_2, \kappa_2}(x,y) \right| & \leq \left|z_{\beta_1, \kappa_1}(x,y) - z_{\beta_1, \kappa_2}(x,y) \right| + \left|z_{\beta_1, \kappa_2}(x,y) - z_{\beta_2, \kappa_2}(x,y) \right|
\\ & \leq \sum_{y \in \{0,1\}} \left\{ \left|f_{H(\langle x,\beta_1 \rangle )}^{1+\kappa_1}(y) - f_{H(\langle x,\beta_1 \rangle )}^{1+\kappa_2}(y)  \right| +   \left|f_{H(\langle x,\widetilde{\beta}_{K_n} \rangle )}^{1+\kappa_1}(y) - f_{H(\langle x,\widetilde{\beta}_{K_n} \rangle )}^{1+\kappa_2}(y) \right| \right\}
\\ & \quad + \frac{\left|\kappa_1-\kappa_2\right|}{|\kappa_0-\delta|^2} + \left|f_{H(\langle x,\beta_1 \rangle )}^{\kappa_1}(y) - f_{H(\langle x,\beta_1 \rangle )}^{\kappa_2}(y)  \right| +   \left|f_{H(\langle x,\widetilde{\beta}_{K_n} \rangle )}^{\kappa_1}(y) - f_{H(\langle x,\widetilde{\beta}_{K_n} \rangle )}^{\kappa_2}(y) \right|
\\ & \quad + \sum_{y \in \{0,1\}}  \left|f_{H(\langle x,\beta_1 \rangle )}^{1+\kappa_2}(y) - f_{H(\langle x,\beta_2 \rangle )}^{1+\kappa_2}(y)  \right| + \frac{\kappa_2 +1}{\kappa_0-\delta} \left|f_{H(\langle x,\beta_1 \rangle )}^{\kappa_2}(y) - f_{H(\langle x,\beta_2 \rangle )}^{\kappa_2}(y) \right|
\end{align*}
For all sufficiently small $\delta_1>0$, $\|\beta - \widetilde{\beta}_{K_n}\| <\delta_1$ implies that $H(\langle x, \beta \rangle)$ is bounded away from $0$ for all large $n$. To see this, first observe that, by (A2), $H(\langle x, \beta_0 \rangle)$ is bounded away from $0$ and $1$, as $|\langle x, \beta_0 \rangle| \leq C \|\beta_0 \|$ and $H$ is bijective. Additionally,
\begin{align*}
\left|H(\langle x, \beta \rangle) - H(\langle x, \beta_0 \rangle)\right| & \leq \left|H(\langle x, \beta \rangle) - H(\langle x, \widetilde{\beta}_{K_n} \rangle)\right| + \left|H(\langle x, \widetilde{\beta}_{K_n} \rangle) - H(\langle x, \beta_0 \rangle)\right|
\\ & \leq \sup_{x \in \mathbbm{R}} H^{\prime}(x) C \delta_1 + o(1),
\end{align*}
as $n \to \infty$. Now, for $p$ bounded away from $0$ and $1$, say $0<p_{\min} \leq p \leq p_{\max}<1$, it is easy to see that the function $p \mapsto f_{p}^{\kappa}(y)$ is Lipschitz for all $\kappa \in (\kappa_0-\delta_2, \kappa_0+\delta_2)$ with Lipschitz constant equal to $(k_0+\delta_2)(1+p_{\max})/(p_{\min}(1-p_{\max}))$. Therefore, we further find
\begin{align*}
\left|z_{\beta_1, \kappa_1}(x,y) - z_{\beta_2, \kappa_2}(x,y) \right|  & \leq 4 \sup_{y \in \{0,1\}, p \in [0,1], |\kappa - \kappa_0|<\delta_2} \left|f^{1+\kappa}_{p}(y) \log(f_{p}(y)) \right| \left|\kappa_1-\kappa_2\right|+  \frac{\left|\kappa_1-\kappa_2\right|}{|\kappa_0-\delta_2|^2} 
\\ & \quad +  2 \sup_{y \in \{0,1\}, p \in [0,1], |\kappa - \kappa_0|<\delta_2} \left|f^{\kappa}_{p}(y) \log(f_{p}(y)) \right| \left|\kappa_1-\kappa_2\right| 
\\ & \quad + \sup_{x \in \mathbbm{R}} H^{\prime}(x) C \frac{(1+k_0+\delta_2)(1+p_{\max})}{p_{\min}(1-p_{\max})} \left\|\beta_1 - \beta_2 \right\|
\\& \leq c_0 \left|\kappa_1-\kappa_2 \right| + c_0 \left\|\beta_1-\beta_{\widetilde{K}_n} \right\| + c_0 \left\|\beta_2-\beta_{\widetilde{K}_n} \right\|
\end{align*}
for some global $c_0>0$ depending only on $\delta_2$.  Conclude  that 
\begin{align*}
H_{\infty}\left(\epsilon, \mathcal{Z}_{n, \delta_1, \delta_2} \right) \leq H\left(\epsilon/(2c_0), \{|\kappa-\kappa_0|\leq \delta_2 \} \right) + 2H\left( \epsilon/(4 c_0), \{\beta \in \Theta_{K_n}, \|\beta-\widetilde{\beta}_{K_n}\| \leq \delta_1 \} \right).
\end{align*}
These two entropy numbers may be bounded by Lemma 2.5 and Corollary 2.6 of \citet{van de Geer:2000} respectively leading to
\begin{align*}
H_{\infty}\left(\epsilon, \mathcal{Z}_{n, \delta_1, \delta_2} \right) \leq  \log\left( \frac{8 c_0\delta_2}{\epsilon}+1 \right) + 2K_n \log \left( \frac{16c_0 \delta_1}{\epsilon}+1 \right) \leq 3 K_n \log\left(\frac{A}{\epsilon}+1 \right),
\end{align*}
for $A = 8 c_0\delta_2 \vee 16 c_0\delta_1$. The proof is complete.
\end{proof}

\begin{proof}[Proof of Theorem~\ref{thm:2}]

Observe that, since $\widehat{\beta}_n$ minimizes $M_n(\beta, \widehat{\kappa}_n) + \widehat{\lambda}_n \mathcal{J}(\beta)$ over $\beta \in \Theta_{K_n}$ and $\widetilde{g}_{K_n} \in \Theta_{K_n}$, we have
\begin{align}
\label{eq:A8}
M\left(\widehat{\beta}_n, \widehat{\kappa}_n\right)-M\left(\widetilde{\beta}_{K_n}, \widehat{\kappa}_n\right) \leq U_n\left(\widetilde{\beta}_{K_n}, \widehat{\beta}_n, \widehat{\kappa}_n \right) +  \widehat{\lambda}_n \mathcal{J}(\widetilde{\beta}_{K_n}),
\end{align}
where we have dropped the non-negative term $\widehat{\lambda}_n \mathcal{J}(\widehat{g}_n)$ and rearranged terms. Here, $U_n(f, g, \alpha): \Theta_{K_n} \times \Theta_{K_n} \times \mathbbm{R}$ denotes the centered process given by
\begin{align*}
U_n(f, g, \alpha) =  \{M_n(f, \alpha) - M(f,\alpha\} - \{M_n(g, \alpha) - M(g, \alpha) \}.
\end{align*}
Our proof consists of two steps. In the first step we establish a quadratic lower bound of the LHS of \eqref{eq:A8} in terms of $\pi(\widehat{\beta}_n, \widetilde{\beta}_{K_n})$. In particular, we will show that there exist constants $\eta, L>0$ such that
\begin{align}
\label{eq:A9}
M\left(\widehat{\beta}_n, \widehat{\kappa}_n\right)-M\left(\widetilde{\beta}_{K_n}, \widehat{\kappa}_n\right) \geq \eta \left|\pi(\widehat{\beta}_n, \widetilde{\beta}_{K_n}) \right|^2 - L \left\| \widetilde{\beta}_{K_n} - \beta_0 \right\|  \pi(\widehat{\beta}_n, \widetilde{\beta}_{K_n}),
\end{align}
for all large $n$ with high probability. In the second step we derive the modulus continuity of $U_n\left(\widetilde{\beta}_{K_n}, \widehat{\beta}_n, \widehat{\kappa}_n \right)$ in terms of $\pi(\widehat{\beta}_n, \widetilde{\beta}_{K_n})$ and show that
\begin{align}
\label{eq:A10}
U_n\left(\widetilde{\beta}_{K_n}, \widehat{\beta}_n, \widehat{\kappa}_n \right) = O_{\mathbb{P}}(1) \left\{ \gamma_n \pi(\widehat{\beta}_n,  \widetilde{\beta}_{K_n}) \vee \gamma_n^2 \right\},
\end{align}
where $\gamma_n^2 = K_n \log n/n$. In combination, \eqref{eq:A9} and \eqref{eq:A10} imply the result of the theorem. To see this, plug \eqref{eq:A9} and \eqref{eq:A10} into \eqref{eq:A8} and rearrange to get
\begin{align}
\label{eq:A11}
\left|\pi(\widehat{g}_n, \widetilde{g}_{K_n}) \right|^2 \leq L^{\prime} \left( \gamma_n \pi(\widehat{\beta}_n, \widetilde{\beta}_{K_n}) \vee \gamma_n^2 \right)  + L^{\prime} \left\|  \widetilde{\beta}_{K_n} - \beta_0 \right\|  \pi(\widehat{\beta}_n, \widetilde{\beta}_{K_n}) + L^{\prime}\widehat{\lambda}_n \mathcal{J}\left(\widetilde{\beta}_{K_n} \right),
\end{align}
for some sufficiently large $L^{\prime}$ and all large $n$ with high probability. Notice then that we only need to prove the theorem for $\gamma_n  \pi(\widehat{g}_n, \widetilde{g}_{K_n}) \geq \gamma_n^2$, or, equivalently, $\pi(\widehat{g}_n, \widetilde{g}_{K_n}) \geq \gamma_n$,  otherwise the result clearly holds. With this simplification, \eqref{eq:A11} is an inequality of the form $x_0^2 \leq b x_0 + c$ for $x_0 = \pi(\widehat{\beta}_n, \widetilde{\beta}_{K_n})  \geq 0$, $b = L^{\prime}( \left\|  \widetilde{\beta}_{K_n}-\beta_0 \right\| + \gamma_n)$ and $c = L^{\prime}\mathcal{J}\left(\widetilde{\beta}_{K_n} \right)$. This means that $x_0$ must be less than or equal to the positive root of $x^2-bx-c=0$, that is,
\begin{align*}
0 \leq x_0 \leq \frac{b+\sqrt{b^2+4c}}{2} \leq b + \sqrt{c},
\end{align*}
and after substituting for $x_0$, $b$ and $c$,
\begin{align*}
\pi(\widehat{\beta}_n, \widetilde{\beta}_{K_n}) \leq L^{\prime} \left( \left\| \widetilde{\beta}_{K_n} - \beta_0 \right\| + \gamma_n\right) + \sqrt{L^{\prime} \widehat{\lambda}_n \mathcal{J}\left(\widetilde{\beta}_{K_n} \right)}.
\end{align*}
Squaring and using the inequality $(x+y)^2 \leq 2x^2 + 2y^2$ twice, we get
\begin{align*}
\left| \pi(\widehat{\beta}_n, \widetilde{\beta}_{K_n}) \right|^2 \leq O_{\mathbb{P}}(1)\gamma_n^2 + O_{\mathbb{P}}(1)\left\| \widetilde{\beta}_{K_n} -\beta_0\right\|^2 + O_{\mathbb{P}}(1)  \widehat{\lambda}_n \mathcal{J}\left(\widetilde{\beta}_{K_n}\right).
\end{align*}
To pass from $\pi(\widehat{\beta}_n, \widetilde{\beta}_{K_n})$ to $\pi(\widehat{\beta}_n, g_0)$ simply note that
\begin{align*}
\left|\pi(\widehat{\beta}_n, \beta_0)\right|^2 &\leq 2 \left|\pi(\widehat{\beta}_n, \widetilde{\beta}_{K_n}) \right|^2 + 2C^2 \left\|\widetilde{\beta}_{K_n}-\beta_0 \right\|^2 
\\ &= O_{\mathbb{P}}(1)\gamma_n^2 + O_{\mathbb{P}}(1)\left\| \widetilde{\beta}_{K_n} - \beta_0 \right\|^2 + O_{\mathbb{P}}(1)  \widehat{\lambda}_n \mathcal{J}\left(\widetilde{\beta}_{K_n}\right),
\end{align*}
as we wanted to show.

To prove \eqref{eq:A9} and \eqref{eq:A10} let us first observe that by Theorem~\ref{thm:1} and (A1) we may restrict attention to the set $F_{\delta} \subset \mathcal{B}([0,1]) \times \mathbbm{R}$ given by
\begin{align}
\label{eq:A12}
F_{\delta} = \left\{ \left\|\beta - \beta_0 \right\|_{\infty} < \delta, \left|\kappa-\kappa_0\right|<\delta \right\},
\end{align}
for some sufficiently small $\delta>0$. To establish \eqref{eq:A9} it thus suffices to prove
\begin{align}
\label{eq:A13}
M\left(\beta, \kappa \right)-M\left(\widetilde{\beta}_{K_n}, \kappa\right) \geq \eta \left|\pi(\beta, \widetilde{\beta}_{K_n}) \right|^2 - L \left\| \beta - \widetilde{\beta}_{K_n} \right\|  \pi(\beta, \widetilde{\beta}_{K_n})
\end{align}
for some $\eta, L>0$, uniformly in $(\beta, \kappa) \in F_{\delta}$.  To prove \eqref{eq:A13}, first observe that
\begin{align}
\label{eq:A14}
M\left(\beta, \kappa \right)-M\left(\widetilde{\beta}_{K_n}, \kappa\right) & = \sum_{y \in \{0,1\}} \mathbb{E}\left\{f_{p_1(\beta)}^{1+\kappa}(y) - f_{p_1(\widetilde{\beta}_{K_n})}^{1+\kappa}(y) \right\} \nonumber
\\ & \quad  - \left(1 + \frac{1}{\kappa} \right) \sum_{y \in \{0,1\}} \mathbbm{E}\left\{ f_{p_1(\beta_0)}(y) \left[f_{p_1(\beta)}^{\kappa}(y) -f_{p_1(\widetilde{\beta}_{K_n})}^{\kappa}(y)  \right] \right\}.
\end{align} 
Consider next a first order Taylor expansion with Lagrange remainder of $f_{p_1(\beta)}^{1+\kappa}(y)$ and $f_{p_1(\beta)}^{\kappa}(y)$ about $f_{p_1(\widetilde{\beta}_{K_n})}^{1+\kappa}(y)$ and $f_{p_1(\widetilde{\beta}_{K_n})}^{\kappa}(y)$, respectively. With the formulas
\begin{align*}
\frac{\partial f_{p}^{\kappa}(y)}{\partial p} &= \kappa f_{p}^{\kappa}(y)\left\{ \frac{y-p}{p(1-p)} \right\} \\ 
\frac{\partial^2 f_{p}^{\kappa}(y)}{\partial p^2} & = \kappa \frac{f_{p}^{\kappa}(y)}{p^2(1-p)^2} \left\{(\kappa-1)p^2 + 2(1-\kappa)yp + \kappa y^2 -y \right\},
\end{align*}
which are valid for every $\kappa>0$, we deduce the existence of random variables $p_{\star} = p_{\star}(X_1,y, \kappa, \beta)$ and $p_{\star \star} = p_{\star \star}(X_1,Y_1, \kappa, \beta)$ such that $\max\{|p_{\star}-p_1(\widetilde{\beta}_{K_n})|,|p_{\star\star}-p_1(\widetilde{\beta}_{K_n})|\} \leq |p_1(\beta)-p_1(\widetilde{\beta}_{K_n})|$ and
\begin{align*}
f_{p_1(\beta)}^{1+\kappa}(y) - f_{p_1(\widetilde{\beta}_{K_n})}^{1+\kappa}(y) & = (1+\kappa)  f_{p_1(\widetilde{g}_{K_n})}^{1+\kappa}(y) \left\{ \frac{y-p_1(\widetilde{\beta}_{K_n})}{p_1(\widetilde{\beta}_{K_n})(1-p_1(\widetilde{\beta}_{K_n}))} \right\}(p_1(\beta) - p_1(\widetilde{\beta}_{K_n}))
\\ & \quad + (1+\kappa) \frac{f_{p_{\star}}^{1+\kappa}(y)}{2 p_{\star}^2(1-p_{\star})^2}\left\{\alpha p_{\star}^2 - 2\alpha yp_{\star}+ (\kappa+1)y^2-y \right\} \left|p_1(\beta) - p_1(\widetilde{\beta}_{K_n})\right|^2 
\\ & = (1+\kappa)  f_{p_1(\beta_0)}^{1+\kappa}(y) \left\{ \frac{y-p_1(\beta_0)}{p_1(\beta_{0})(1-p_1(\beta_{0}))} \right\}(p_1(\beta) - p_1(\widetilde{\beta}_{K_n}))
\\ & \quad + (1+\kappa) \frac{f_{p_{\star}}^{1+\kappa}(y)}{2 p_{\star}^2(1-p_{\star})^2}\left\{\kappa p_{\star}^2 - 2\kappa yp_{\star}+ (\kappa+1)y^2-y \right\} \left|p_1(\beta) - p_1(\widetilde{\beta}_{K_n})\right|^2 
\\ & \quad +(1+\kappa) \left[ f_{p_1(\widetilde{\beta}_{K_n})}^{1+\kappa}(y) \left\{ \frac{y-p_1(\widetilde{\beta}_{K_n})}{p_1(\widetilde{\beta}_{K_n})(1-p_1(\widetilde{\beta}_{K_n}))} \right\} -    f_{p_1(\beta_0)}^{1+\kappa}(y) \left\{  \frac{y-p_1(\beta_0)}{p_1(\beta_{0})(1-p_1(\beta_{0}))} \right\}   \right]
  \\ & \quad  \times (p_1(\beta) - p_1(\widetilde{\beta}_{K_n}))
\end{align*}
as well as
\begin{align*}
f_{p_1(\beta)}^{\kappa}(y) -f_{p_1(\widetilde{\beta}_{K_n})}^{\kappa}(y)  & =\kappa f_{p_1(\widetilde{\beta}_{K_n})}^{\kappa}(y) \left\{ \frac{y-p_1(\widetilde{\beta}_{K_n})}{p_1(\beta_{K_n})(1-p_1(\beta_{K_n}))} \right\}  \left(p_1(\beta)-p_1(\widetilde{\beta}_{K_n})) \right) \\
  & \quad + \kappa \frac{f_{p_{\star\star}}^{\kappa}(y)}{2 p_{\star \star}^2(1-p_{\star  \star})^2}\left\{(\kappa-1)p_{\star\star}^2 + 2(1-\kappa) y p_{\star \star}+ \kappa y^2-y \right\}  \left|p_1(\beta) - p_1(\widetilde{\beta}_{K_n})\right|^2   
 \\ & =  \kappa f_{p_1(\beta_0)}^{\kappa}(y) \left\{ \frac{y-p_1(\beta_0)}{p_1(\beta_{0})(1-p_1(\beta_{0}))} \right\}(p_1(\beta) - p_1(\widetilde{\beta}_{K_n}))
 \\ & \quad + \kappa \frac{f_{p_{\star\star}}^{\kappa}(y)}{2 p_{\star \star}^2(1-p_{\star  \star})^2}\left\{(\kappa-1)p_{\star\star}^2 + 2(1-\kappa) y p_{\star \star}+ \kappa y^2-y \right\}  \left|p_1(\beta) - p_1(\widetilde{\beta}_{K_n})\right|^2  
\\ & \quad +\kappa \left[ f_{p_1(\widetilde{\beta}_{K_n})}^{\kappa}(y) \left\{ \frac{Y_1-p_1(\widetilde{\beta}_{K_n})}{p_1(\widetilde{\beta}_{K_n})(1-p_1(\widetilde{\beta}_{K_n}))} \right\} -    f_{p_1(\beta_0)}^{\kappa}(y) \left\{  \frac{y-p_1(\beta_0)}{p_1(\beta_{0})(1-p_1(\beta_{0}))} \right\}   \right]
  \\ & \quad  \times \left\{p_1(\beta) - p_1(\widetilde{\beta}_{K_n})\right\}.
\end{align*}
Plugging these expansions into \eqref{eq:A14} and simplifying we get
\begin{align*}
M\left(\beta, \kappa \right)-M\left(\widetilde{\beta}_{K_n}, \kappa \right) & = (1+\kappa	) \sum_{y \in \{0,1\}} \mathbb{E} \left\{ \left(u_n(X_1, y, \widetilde{\beta}_{K_n}, \kappa) - v_n(X_1,y, \widetilde{\beta}_{K_n},  \kappa) \right)  \left(p_1(\beta)-p_1(\widetilde{\beta}_{K_n}) \right) \right\} \nonumber
\\ & \quad + (1+\kappa)  \sum_{y \in \{0,1\}} \mathbb{E} \left\{ \left(u_n^{\prime}(X_1, y, \beta, \kappa) - v_n^{\prime}(X_1, y, \beta, \kappa) \right) \left|p_1(\beta)-p_1(\widetilde{\beta}_{K_n}) \right|^2 \right\},
\end{align*}
where for convenience we have defined
\begin{align*}
u_n(X_1, y, \widetilde{\beta}_{K_n}, \alpha) & = f_{p_1(\widetilde{\beta}_{K_n})}^{1+\kappa}(y) \left\{ \frac{y-p_1(\widetilde{\beta}_{K_n})}{p_1(\widetilde{\beta}_{K_n})(1-p_1(\widetilde{\beta}_{K_n}))} \right\} -    f_{p_1(\beta_0)}^{1+\kappa}(y) \left\{  \frac{y-p_1(\beta_0)}{p_1(\beta_{0})(1-p_1(\beta_{0}))} \right\}
\\ v_n(X_1, y, \widetilde{\beta}_{K_n}, \kappa) & = f_{p_1(\widetilde{\beta}_{K_n})}^{\kappa}(y) f_{p_1(\beta_0)}(y) \left\{ \frac{y-p_1(\widetilde{\beta}_{K_n})}{p_1(\widetilde{\beta}_{K_n})(1-p_1(\widetilde{\beta}_{K_n}))} \right\} -    f_{p_1(\beta_0)}^{1+\kappa}(y) \left\{  \frac{y-p_1(\beta_0)}{p_1(\beta_{0})(1-p_1(\beta_{0}))} \right\}
\\ u_n^{\prime}(X_1, y, \beta, \kappa) & = \frac{f_{p_{\star}}^{1+\kappa}(y)}{2 p_{\star}^2(1-p_{\star})^2}\left\{\kappa p_{\star}^2 - 2\kappa yp_{\star}+ (\kappa+1)y^2-y \right\} 
\\ v_n^{\prime}(X_1, y, \beta, \kappa) & = \frac{f_{p_{\star\star}}^{\kappa}(y)f_{p_1(\beta_0)}(y)}{2 p_{\star \star}^2(1-p_{\star  \star})^2}\left\{(\kappa-1)p_{\star\star}^2 + 2(1-\kappa) yp_{\star \star}+ \kappa y^2-y \right\},
\end{align*}
where $u_n^{\prime}(X_1, y, \beta, \kappa)$ and $v_{n}^{\prime}(X_1, y, \beta, \kappa)$ depend on $\beta$ through $p_{\star}$ and $p_{\star \star}$ respectively.

We first show that for sufficiently small $\delta$ in the definition of $F_{\delta}$, there exists an $\eta>0$
\begin{align}
\label{eq:A15}
(1+\kappa)  \sum_{y \in \{0,1\}} \mathbb{E} \left\{ \left(u_n^{\prime}(X_1, y, \beta, \kappa) - v_n^{\prime}(X_1, y, \beta, \kappa) \right) \left|p_1(\beta)-p_1(\widetilde{\beta}_{K_n}) \right|^2 \right\} \geq \eta \left|\pi(\beta, \widetilde{\beta}_{K_n}) \right|^2,
\end{align}
for all sufficiently large $n$, uniformly in $(\beta, \kappa) \in F$. To see this, observe first that, by definition of $p_{\star}$ and our  assumptions, we find
\begin{align*}
\left|p_{\star}-p_1(\beta_0)\right| & \leq \left|p_{\star}-p_1(\widetilde{\beta}_{K_n})\right| + \left|p_1(\widetilde{\beta}_{K_n})-p_1(\beta_0)\right| 
\\ & \leq c_0\left\|\beta-\beta_0\right\|_{\infty} + c_0 \left\|\widetilde{\beta}_{K_n} - \beta_0\right\|
\\ &  \leq c_0 \delta + o(1)
\end{align*}
A similar argument also shows that $|p_{\star \star}-p_1(g_0)| < c_0 \delta+o(1)$. Recall next that $p_1(\beta_0)$ is bounded away from $0$ and $1$ with probability one (see the proof of Lemma~\ref{lem:5}). Hence, $p_{\star}$ and $p_{\star \star}$ are also bounded away from $0$ and $1$ for all $\delta \leq \delta_{0}$ and $n \geq n_0$ with probability one. By continuity, there exists a $\delta_{0}^{\prime} \leq \delta_0$ such that for all $\delta \leq \delta_{0}^{\prime}$ we have
\begin{align*}
u_n^{\prime}(X_1,y,\beta,\kappa) - v_n^{\prime}(X_1,y,\beta, \kappa) & \geq \frac{f_{p_1(\beta_0)}^{1+\kappa}(y)}{8p_1^2(\beta_0)(1-p_1(\beta_0))^2}\left\{p_1^2(\beta_0)-2yp_1(\beta_0) + y^2 \right\} \nonumber
\\ &   = \frac{f_{p_1(\beta_0)}^{1+\kappa}(y)}{8p_1^2(\beta_0)(1-p_1(\beta_0))^2}\left|y-p_1(\beta_0)\right|^2,
\end{align*}
for all $n \geq n_0$ with probability one. Summing over $y \in \{0,1\}$ we see that there exists a $c_0>0$ such that
\begin{align*}
\sum_{y \in \{0,1\}} \left\{u_n^{\prime}(X_1,y,\beta,\kappa) - v_n^{\prime}(X_1,y,\beta, \kappa) \right\} \geq c_0
\end{align*}
for all $n \geq n_0$ with probability one. Therefore,
\begin{align*}
\sum_{y \in \{0,1\}} \mathbb{E} \left\{ \left(u_n^{\prime}(X_1, y, \beta, \kappa) - v_n^{\prime}(X_1, y, \beta, \kappa) \right) \left|p_1(\beta)-p_1(\beta_{K_n}) \right|^2 \right\} \geq c_0 \mathbb{E} \left\{\left|p_1(\beta)-p_1(\widetilde{\beta}_{K_n}) \right|^2 \right\}.
\end{align*}
Now, by the mean-value theorem, there exists a (random) $x^{\star} \in \mathbbm{R}$ such that $|x^{\star} - \langle X_1, \widetilde{\beta}_{K_n} \rangle| \leq |\langle X_1, \beta \rangle - \langle X_1, \widetilde{\beta}_{K_n} \rangle|$ and
\begin{align*}
\left|p_1(\beta) -p_1(\widetilde{\beta}_{K_n})\right|^2 = \left|H\left(\langle X_1, \beta \rangle \right) -  H\left(\langle X_1, \widetilde{\beta}_{K_n} \rangle \right) \right|^2 = \left|H^{\prime}(x^{\star}) \right|^2 \left|\langle X_1, \beta - \widetilde{\beta}_{K_n} \rangle\right|^2.
\end{align*}
For $\beta \in \{f \in \mathcal{B}([0,1]): \|\beta-\beta_0\|_{\infty} <\delta \}$, by (A5) and (A4), we find
\begin{align*}
\left|x^{\star} - \langle X_1, \beta_0 \rangle\right| \leq c_0 \delta +o(1),
\end{align*}
for some $c_0>0$ with probability one. Since  $|\langle X_1, \beta_0 \rangle|$ is bounded away from zero and one with probability one and $H$ is strictly increasing, this entails the existence of a $\delta_0^{\prime \prime}$ such that $H^{\prime}(x^{\star}) \geq c_0 >0$ with probability one for all $ \delta  \leq \delta_0^{\prime \prime} \leq \delta_0^{\prime}$  and $n \geq n_0^{\prime}$, whence we obtain
\begin{align*}
\mathbb{E}\left\{ \left| p_1(\beta)-p_1(\widetilde{\beta}_{K_n})\right|^2 \right\} \geq c_0^2 \mathbb{E}\left\{ \left| \langle X_1, \beta - \widetilde{\beta}_{K_n} \rangle\right|^2 \right\} = c_0^2 |\pi(\beta,\widetilde{\beta}_{K_n})|^2,
\end{align*}
for all $\delta \leq \delta_0^{\prime \prime}$ and $n \geq n_0^{\prime}$ with probability one, 
for all large $n$. Combining the above lower bounds, we find
\begin{align*}
(1+\kappa)\sum_{y \in \{0,1\}} \mathbb{E} \left\{ \left(u_n^{\prime}(X_1, y, \beta, \kappa) - v_n^{\prime}(X_1, y, \beta, \kappa) \right) \left|p_1(\beta)-p_1(\beta_{K_n}) \right|^2 \right\} \geq \eta \left|\pi(\beta,\widetilde{\beta}_{K_n})\right|^2,
\end{align*}
for all $n \geq n_0\vee n_0^{\prime}$ and $\delta \leq \delta_0^{\prime \prime}$. Here, $\eta = c_0 (1+\kappa_0-\delta)>0$ and $(1+\kappa_0-\delta)$  is strictly positive for all $\delta \leq (1+\kappa_0)/2$, as $c_0>0$. Choosing $\delta \leq  \delta_0^{\prime \prime} \wedge (\kappa_0+1)/2$ ensures the positivity of $\eta>0$, completing the proof of \eqref{eq:A15}.

We next show that 
\begin{align}
\label{eq:A16}
\sum_{y \in \{0,1\}}&\left| \mathbb{E}  \left\{ \left(u_n(X_1, y, \widetilde{\beta}_{K_n}, \kappa) - v_n(X_1,y, \widetilde{\beta}_{K_n},  \kappa) \right)  \left(p_1(\beta)-p_1(\widetilde{\beta}_{K_n}) \right) \right\} \right| \nonumber \\ & \quad \leq L \left\|\widetilde{\beta}_{K_n} - \beta_0 \right\| \pi(\beta, \widetilde{\beta}_{K_n}),
\end{align}
for some finite $L>0$, uniformly in $(\beta, \kappa) \in F_{\delta}$. To prove this, notice first that by the boundedness of $p_1(\beta_0)$ away from $0$ and $1$ and the fact that $|p_1(\widetilde{\beta}_{K_n}) - p_1(\beta_0)| \to 0$ as $n \to \infty$ with probability one, it may also be assumed that $p_1(\widetilde{\beta}_{K_n})$ is also bounded away from $0$ and $1$ for all large $n$. For $p$ bounded away from $0$ and $1$ the functions $p \mapsto f_p^{\kappa}(y)$ and  $p \mapsto (y-p)/p(1-p)$ are for all $\kappa \in (\kappa_0-\delta, \kappa_0+\delta)$ with small $\delta>0$ and all $y \in \{0,1\}$ are bounded and Lipschitz. By boundedness, their product is also Lipschitz. Conclude that for all $n \geq n_0^{\prime \prime}$ there exists a $c_0>0$ depending only on $\delta$ such that
\begin{align*}
\left|u_n(X_1, y, \widetilde{\beta}_{K_n}, \kappa) - v_n(X_1, y, \widetilde{\beta}_{K_n}, \kappa) \right| \leq c_0 \left|p_1(\widetilde{\beta}_{K_n}) - p_1(\beta_0) \right| \leq c_0 \left\| \widetilde{\beta}_{K_n} - \beta_0 \right\|,
\end{align*}
with probability one. Moreover, (A2) and  the Schwarz inequality imply that
\begin{align*}
\mathbb{E} \left\{\left|p_1(\widetilde{\beta}_{K_n})-p_1(\beta)  \right| \right\} \leq c_0 \mathbb{E}\left\{\left|\langle  X_1, \widetilde{\beta}_{K_n} - \beta \rangle \right| \right\} \leq c_0 \pi(\widetilde{\beta}_{K_n}, \beta).
\end{align*}
These two bounds jointly imply \eqref{eq:A13}, thus completing the first part of our proof.

To complete the proof we now establish \eqref{eq:A10} and for this we argue as in \citet{Kal:2023}. First note that on $F_{\delta}$ we have $\|\beta-\widetilde{\beta}_{K_n}\| \leq \delta + o(1) \leq 2 \delta$ for all large $n$, by (A4) and the fact that the uniform norm dominates the $\mathcal{L}^2([0,1])$-norm. Thus,
\begin{align}
\label{eq:A17}
\left|\frac{U_n(\widetilde{\beta}_{K_n}, \widehat{\beta}_n, \widehat{\kappa}_n)}{\gamma_n \pi\left(\widetilde{\beta}_{K_n}, \widehat{\beta}_n\right) \vee\gamma_n^2 } \right| \leq \sup_{\substack{\beta \in \Theta_{K_n}: \|\beta-\widetilde{\beta}_{K_n}\| \leq 2 \delta \\ |\kappa-\kappa_0|\leq \delta} } \left|\frac{U_n(\widetilde{\beta}_{K_n}, \beta, \kappa)}{\gamma_n \pi\left(\widetilde{\beta}_{K_n}, \beta\right) \vee\gamma_n^2 } \right|,
\end{align}
for all large $n$ with high probability. We will show that the random variable on the RHS of \eqref{eq:A17} is bounded in probability, which is equivalent to showing that
\begin{align}
\label{eq:A18}
\lim_{T \to \infty}\limsup_{n \to \infty} \mathbb{P}\left( \sup_{ \substack{ \beta \in \Theta_{K_n, \delta}, \kappa \in V_{\delta} \\ \pi\left(\widetilde{\beta}_{K_n}, \beta \right) \leq \gamma_n }} \left|U_n(\widetilde{\beta}_{K_n}, \beta, \alpha) \right| \geq T \gamma_n^2 \right) = 0,
\end{align}
as well as
\begin{align}
\label{eq:A19}
\lim_{T \to \infty}\limsup_{n \to \infty} \mathbb{P}\left( \sup_{ \substack{ \beta \in \Theta_{K_n, \delta}, \kappa \in V_{\delta} \\ \pi\left(\widetilde{\beta}_{K_n}, \beta \right) > \gamma_n }} \left|\frac{U_n(\widetilde{\beta}_{K_n}, \beta, \kappa)}{\pi\left(\widetilde{\beta}_{K_n}, \beta \right)} \right| \geq T \gamma_n \right) = 0,
\end{align}
where for notational simplicity we have defined the balls
\begin{align*}
\Theta_{K_n, \delta} & = \left\{\beta \in \Theta_{K_n}: \| \beta - \widetilde{\beta}_{K_n}\| \leq 2 \delta \right\} \\
V_{\delta} &= \left\{ \kappa \in \mathbbm{R}: |\kappa-\kappa_0|\leq \delta \right\}.
\end{align*}

For the proof of both \eqref{eq:A18} and \eqref{eq:A19} we will use the fact that for all $\epsilon>0$ sufficiently small, say $\epsilon \leq \epsilon_0$, there exists a $B>0$ such that
\begin{align*}
\int_{0}^{\epsilon} \log^{1/2}\left(1+ \frac{1}{u} \right) du \leq B \epsilon \log^{1/2}\left(\frac{1}{\epsilon} \right).
\end{align*}
(Use, e.g., L'Hospital's rule). To show \eqref{eq:A18} we will apply Theorem 5.11 of \citet{van de Geer:2000} on the process $U_n(\widetilde{\beta}_{K_n}, \beta, \kappa)$ for $\beta \in \Theta_{K_n, \delta}$, $\kappa \in V_{\delta}$ and $\pi(\beta, \widetilde{\beta}_{K_n}) \leq \gamma_n$. To prepare for this application, write $U_n(\widetilde{\beta}_{K_n}, \beta, \kappa)$ as
\begin{align*}
U_n(\widetilde{\beta}_{K_n}, \beta, \kappa) = \int z_{\beta, \kappa}  d\left(\mathbb{P}_n-\mathbb{P}_{X,Y} \right) = n^{-1/2} v_n(z_{\beta, \kappa}),
\end{align*}
where $z_{\beta, \kappa}$ and $\mathbb{P}_{X,Y}$ are as in Lemma~\ref{lem:5} and $v_n(\cdot)$ denotes the empirical process.  By our definitions, $z_{\beta, \kappa} \in \mathcal{Z}(n, 2 \delta, \delta)$ and each $z_{\beta, \kappa}$ is uniformly bounded by $2+1/(\kappa_0-\delta)$. Let $\mathcal{H}_{B, K}(\epsilon, \mathcal{Z}(n, 2 \delta, \delta), \mathbb{P}_{X,Y})$ denote the generalized entropy  with bracketing (in the Bernstein "norm" $\rho_K$) of the class of functions $\mathcal{Z}(n, 2 \delta, \delta)$, as given in Definition 5.1 in \citet{van de Geer:2000}. By the uniform boundedness of the elements of  $\mathcal{Z}(n, 2 \delta, \delta)$, Lemma 5.10 of \citet{van de Geer:2000} yields
\begin{align}
\label{eq:A20}
\mathcal{H}_{B, 8+4/(\alpha_0-\delta)}\left(\epsilon, \mathcal{Z}(n, 2 \delta, \delta) \right) \leq H_{B}(\epsilon/\sqrt{2}, \mathcal{Z}(n, 2 \delta, \delta), \mathbb{P}_{X,Y}), \quad \epsilon>0.
\end{align}
Observe next that by our assumptions for all $\|\beta-\widetilde{\beta}_{K_n}\| \leq 2 \delta$ we have $|H(\langle X, \beta \rangle)) - H(\langle X, \beta_0 \rangle))| \leq c_0 \delta + o(1)$, as $n \to \infty$. As discussed, previously, $H(\langle X, \beta_0 \rangle)$ is bounded away from $0$ and $1$ with probability one. Hence, $H(\langle X, \beta \rangle))$ is also bounded away from $0$ and $1$ for all $\beta \in \Theta_{K_n, \delta}$  when $\delta$ is chosen small enough and $n$ is large with probability one. From this we may deduce that for all $\kappa \in (\kappa_0-\delta, \kappa_0+\delta)$ and $y \in \{0,1\}$, $f_{H(\langle X, \beta \rangle)}^{\kappa}(y)$ and $f_{H(\langle X, g \rangle)}^{1+\kappa}(y)$ are Lipschitz with respect to the probabilities $H(\langle X, \beta \rangle)$ for $\|\beta-\widetilde{\beta}_{K_n}\| \leq 2 \delta$. A straightforward calculation now shows that there exists a global $c_0>0$ such that
\begin{align*}
\left|z_{\beta, \kappa}(X,Y)\right| \leq c_0 \left|\langle X, \beta- \widetilde{\beta}_{K_n} \rangle \right|.
\end{align*}
Taking expectations yields
\begin{align}
\label{eq:A21}
\mathbb{E}\left\{\left|z_{\beta, \kappa}(X,Y)\right|^2 \right\} \leq c_0^2 \left|\pi\left(\beta, \widetilde{\beta}_{K_n}\right)\right|^2 \leq c_0^2 \gamma_n^2.
\end{align}
It follows from Lemma 5.8 of \citet{van de Geer:2000} that we may take $R_n = c_0 \gamma_n$ in Theorem 5.11 of \citet{van de Geer:2000}. Assumption (A5) implies that $R_n \to 0$ as $n \to \infty$. Hence $R_n \leq \epsilon_0$ for all large $n$. Thus, taking square roots, integrating \eqref{eq:A20} from $0$ to $R_n$ and applying Lemma~\ref{lem:5}, we get
\begin{align*}
\int_{0}^{R_n} \mathcal{H}_{B, 8+4/(\alpha_0-\delta)}^{1/2}(u, \mathcal{Z}(n, 2 \delta, \delta) du \leq c_0 K_n^{1/2} \gamma_n \log^{1/2} (n) = c_0 \frac{K_n \log(n)}{n^{1/2}}.
\end{align*}
Conditions (5.31), (5.33) and (5.34) of Theorem 5.11 \citet{van de Geer:2000} are satisfied if we take $C_0 = T/c_0$ and $C_1 = (8+4/(\alpha_0-\delta))T/c_0^2$. An application of that theorem then yields
\begin{align*}
\mathbb{P}\left( \sup_{ \substack{ \beta \in \Theta_{K_n, \delta}, \kappa \in V_{\delta} \\ \pi\left(\widetilde{\beta}_{K_n}, \beta \right) \leq \gamma_n }} \left|U_n(\widetilde{\beta}_{K_n}, \beta, \kappa) \right| \geq T \gamma_n^2 \right) & = \mathbb{P}\left( \sup_{ \substack{ \beta \in \Theta_{K_n, \delta}, \kappa \in V_{\delta} \\ \pi\left(\widetilde{\beta}_{K_n}, \beta \right) \leq \gamma_n }} \left|v_n(z_{\beta,\kappa}) \right| \geq T n^{1/2} \gamma_n^2 \right)
\\ & \leq C \exp \left[ - \frac{T^2 K_n \log(n)}{C^2(C_1+1)}  \right].
\end{align*}
for some global $C>0$. The exponential tends to $0$, as, by (A4), $K_n \log n \to \infty$. Furthermore, as this holds for all sufficiently large $T$, \eqref{eq:A18} holds and therefore the first step of our proof is complete.

Finally, we establish \eqref{eq:A19}. We begin by observing that, by (A5), $\pi(\beta, \widetilde{\beta}_{K_n}) \leq C\|\beta - \widetilde{\beta}_{K_n} \| \leq 2 C \delta \leq \epsilon_0/c_0$ whenever $\beta \in \Theta_{K_n, \delta}$ with $\delta \leq \epsilon_0/(c_0 2C)$. Here, $c_0$ is the constant appearing in \eqref{eq:A21}. Therefore, it suffices to prove
\begin{align}
\label{eq:A22}
\lim_{T \to \infty}\limsup_{n \to \infty} \mathbb{P}\left( \sup_{ \substack{ \beta \in \Theta_{K_n, \delta}, \kappa \in V_{\delta} \\ \gamma_n <\pi\left(\widetilde{\beta}_{K_n}, \beta \right) \leq \epsilon_0/c_0 }} \left|\frac{U_n(\widetilde{\beta}_{K_n}, \beta, \kappa)}{\pi\left(\widetilde{\beta}_{K_n}, \beta \right)} \right| \geq T \gamma_n \right) = 0.
\end{align}
Let $S = \min\{s>1: 2^{-s} \epsilon_0/c_0 < \gamma_n \}$. By (A4), $K_n \asymp n^{\gamma}$ for some $\gamma \in (0,1)$. Hence, we can have $S \leq \lceil c_0 \log_2(n) + 1 \rceil$ for some $c_0>0$. Boole's inequality gives
\begin{align*}
\mathbb{P}\left( \sup_{ \substack{ \beta \in \Theta_{K_n, \delta}, \kappa \in V_{\delta} \\ \gamma_n <\pi\left(\widetilde{\beta}_{K_n}, g \right) \leq \epsilon_0/c_0 }} \left|\frac{U_n(\widetilde{\beta}_{K_n}, \beta, \kappa)}{\pi\left(\widetilde{\beta}_{K_n}, \beta \right)} \right| \geq T \gamma_n \right) & \leq  \sum_{s=1}^S  \mathbb{P}\left( \sup_{ \substack{ \beta \in \Theta_{K_n, \delta}, \kappa \in V_{\delta} \\ 2^{-s}\epsilon_0/c_0 <\pi\left(\widetilde{\beta}_{K_n}, \beta \right) \leq 2^{-s+1}\epsilon_0/c_0 }} \left|\frac{U_n(\widetilde{\beta}_{K_n}, \beta, \kappa)}{\pi\left(\widetilde{\beta}_{K_n}, \beta \right)} \right| \geq T \gamma_n \right)
\\ & \leq  \sum_{s=1}^S  \mathbb{P}\left( \sup_{ \substack{ \beta \in \Theta_{K_n, \delta}, \kappa \in V_{\delta} \\ \pi\left(\widetilde{\beta}_{K_n}, \beta \right) \leq 2^{-s+1}\epsilon_0/c_0 }} \left|U_n(\widetilde{\beta}_{K_n}, \beta, \alpha) \right| \geq T 2^{-s}\gamma_n \epsilon_0/c_0 \right)
\end{align*}
Clearly, for every $s \geq 1$ we have $2^{-s+1} \epsilon_0 \leq \epsilon_0$. Hence, by \eqref{eq:A21} and Lemma~\ref{lem:5},
\begin{align*}
\int_{0}^{2^{-s+1} \epsilon_0}\mathcal{H}^{1/2}_{B, 8+4/(\alpha_0-\delta)}\left(u, \mathcal{Z}(n, 2 \delta, \delta) \right) du \leq 2^{-s+1} B^{\star} K_n^{1/2} \log^{1/2}(n), 
\end{align*}
for some global $B^{\star}>0$, where we have  used the fact that $2^{s} \leq 2^{S} \leq n^{2c}$ for all large $n$. We will apply Theorem 5.11 of \citet{van de Geer:2000} individually to each summand, see also the proof of \eqref{eq:A18}. The conditions of that theorem are satisfied for $C_0 = T\epsilon_0/(2B^{\star}c_0)$ and $C_1 = T(8+4/(\alpha_0-\delta))/c_0 $ with $T$ sufficiently large, as, by definition of $S$, we have $\gamma_n \leq 2^{-s+1} \epsilon_0/c_0$ for all $ s = 1, \ldots, S$.  Therefore, by Theorem 5.11 of \citet{van de Geer:2000}, there exists a global constant $C{^\prime}>0$ such that
\begin{align*}
\mathbb{P}\left( \sup_{ \substack{ \beta \in \Theta_{K_n, \delta}, \kappa \in V_{\delta} \\ \pi\left(\widetilde{\beta}_{K_n}, \beta \right) \leq 2^{-s+1}\epsilon_0/c_0 }} \left|U_n(\widetilde{\beta}_{K_n}, \beta, \kappa) \right| \geq T 2^{-s}\gamma_n \epsilon_0/c_0 \right) \leq C^{\prime} \exp \left[-\frac{T^2 K_n \log(n)}{4|C^{\prime}|^2(C_1+1)c_0^2} \right].
\end{align*}
The exponential does not depend on $s =1, \ldots, S$, hence upon summing and recalling that $S \leq \lceil c_0 \log_2(n) + 1 \rceil$ for some $c_0>0$ we arrive at
\begin{align*}
\mathbb{P}\left( \sup_{ \substack{ \beta \in \Theta_{K_n, \delta}, \kappa \in V_{\delta} \\ \gamma_n <\pi\left(\widetilde{\beta}_{K_n}, \beta \right) \leq \epsilon_0/c_0 }} \left|\frac{U_n(\widetilde{\beta}_{K_n}, \beta, \kappa)}{\pi\left(\widetilde{\beta}_{K_n}, \beta \right)} \right| \geq T \gamma_n \right) \leq C^{\prime} \lceil c_0  \log_2(n) + 1 \rceil  \exp \left[-\frac{T^2 K_n \log(n)}{4|C^{\prime}|^2(C_1+1)c_0^2} \right],
\end{align*}
for all large $n$. For all $T$ large enough, say $T \geq T_0$, the RHS tends to $0$, as $n \to \infty$, by virtue of $K_n \asymp n^{\gamma}$. We have thus established \eqref{eq:A19}, thereby completing the proof.
\end{proof}

\section*{References}

\begingroup
\renewcommand{\section}[2]{}%

\endgroup
\end{document}